\documentclass[11pt]{article}

\usepackage[utf8]{inputenc} % set input encoding (not needed with XeLaTeX)

\usepackage{geometry} % to change the page dimensions
\usepackage{graphicx} % support the \includegraphics command and options
\graphicspath{ {./figures/} }

\usepackage{booktabs} % for much better looking tables
\usepackage{array} % for better arrays (eg matrices) in maths
\usepackage{paralist} % very flexible & customisable lists (eg. enumerate/itemize, etc.)
\usepackage{verbatim} % adds environment for commenting out blocks of text & for better verbatim
\usepackage{subfig} % make it possible to include more than one captioned figure/table in a single float
\usepackage{fancyhdr} % This should be set AFTER setting up the page geometry
\usepackage{sectsty}
\allsectionsfont{\sffamily\mdseries\upshape} % (See the fntguide.pdf for font help)
\usepackage[nottoc,notlof,notlot]{tocbibind} % Put the bibliography in the ToC
\usepackage[titles,subfigure]{tocloft} % Alter the style of the Table of Contents

\usepackage{physics}
\usepackage{siunitx}

\usepackage{xcolor}
\usepackage{multirow}
\usepackage{amssymb}

\usepackage{subfig}
\usepackage{tikz}
\usepackage{tikzit}
\tikzstyle{boundary vertex}=[inner sep=0mm, minimum size=1mm, shape=circle, draw=black, fill=black]
\tikzstyle{grey_dot}=[fill={rgb,255: red,191; green,191; blue,191}, draw={rgb,255: red,191; green,191; blue,191}, shape=circle, minimum size=1mm, inner sep=0mm]
\tikzstyle{blue_dot}=[fill={rgb,255: red,202; green,251; blue,255}, draw=black, shape=circle, minimum size=1mm, inner sep=0mm]
\tikzstyle{white_dot}=[fill=white, draw=black, shape=circle, minimum size=1.5mm, inner sep=0mm]

\tikzstyle{arrow}=[->]
\tikzstyle{red_arrow}=[->, draw=red]
\tikzstyle{cyan_arrow}=[->, draw=cyan]
\tikzstyle{red_dash}=[-, dashed, draw=red]
\tikzstyle{grey dash}=[-, fill=none, draw={rgb,255: red,191; green,191; blue,191}, dashed]
\tikzstyle{dashed arrow}=[->, dashed]
\tikzstyle{blue_edge}=[-, draw={rgb,255: red,46; green,126; blue,255}]
\tikzstyle{red_edge}=[-, draw=red]

\tikzstyle{gate}=[shape=rectangle, text height=1.5ex, text depth=0.25ex, yshift=0.5mm, fill=white, draw=black, minimum height=5mm, yshift=-0.5mm, minimum width=5mm, font={\small}, tikzit category=circuit]
\tikzstyle{big gate}=[shape=rectangle, text height=1.5ex, text depth=0.25ex, yshift=0.5mm, fill=white, draw=black, minimum height=18mm, yshift=-0.5mm, minimum width=5mm, font={\small}, tikzit category=circuit]
\tikzstyle{Z dot}=[inner sep=0mm, minimum size=2mm, shape=circle, draw=black, fill={rgb,255: red,221; green,255; blue,221}, tikzit category=zx]
\tikzstyle{Z phase dot}=[minimum size=5mm, font={\footnotesize\boldmath}, shape=rectangle, rounded corners=2mm, inner sep=0.2mm, outer sep=-2mm, scale=0.8, tikzit shape=circle, draw=black, fill={rgb,255: red,221; green,255; blue,221}, tikzit draw=blue, tikzit category=zx]
\tikzstyle{X dot}=[Z dot, shape=circle, draw=black, fill={rgb,255: red,255; green,136; blue,136}, tikzit category=zx]
\tikzstyle{X phase dot}=[Z phase dot, tikzit shape=circle, tikzit draw=blue, fill={rgb,255: red,255; green,136; blue,136}, font={\footnotesize\boldmath}, tikzit category=zx]
\tikzstyle{hadamard}=[fill=yellow, draw=black, shape=rectangle, inner sep=0.6mm, minimum height=1.5mm, minimum width=1.5mm, tikzit category=zx]
\tikzstyle{paulibox}=[fill={rgb,255: red,221; green,221; blue,255}, draw=black, shape=rectangle, inner sep=0.6mm, minimum height=5mm, minimum width=5mm, font={\footnotesize}, text height=1.5ex, text depth=0.25ex, tikzit category=zx]
\tikzstyle{vertex}=[inner sep=0mm, minimum size=1mm, shape=circle, draw=black, fill=black, tikzit category=misc]
\tikzstyle{vertex set}=[inner sep=0mm, minimum size=1mm, shape=circle, draw=black, fill=white, font={\footnotesize\boldmath}, tikzit category=misc]
\tikzstyle{small black dot}=[fill=black, draw=black, shape=circle, inner sep=0pt, minimum width=1.2mm, tikzit category=circuit]
\tikzstyle{cnot ctrl}=[fill=black, draw=black, shape=circle, inner sep=0pt, minimum width=1.2mm, tikzit category=circuit]
\tikzstyle{cnot targ}=[fill=white, draw=white, shape=circle, tikzit category=circuit, label={center:$\oplus$}, inner sep=0pt, minimum width=2.1mm, tikzit fill={rgb,255: red,102; green,204; blue,255}, tikzit draw=black]
\tikzstyle{ket}=[fill=white, draw=black, shape=regular polygon, regular polygon sides=3, regular polygon rotate=-30, scale=0.7, inner sep=1pt, tikzit category=circuit, tikzit shape=rectangle, tikzit fill=green]
\tikzstyle{bra}=[fill=white, draw=black, shape=regular polygon, regular polygon sides=3, regular polygon rotate=30, scale=0.7, inner sep=1pt, tikzit category=circuit, tikzit shape=rectangle, tikzit fill=red]
\tikzstyle{scalar}=[shape=rectangle, text height=1.5ex, text depth=0.25ex, yshift=0.5mm, fill=white, draw=black, minimum height=5mm, yshift=-0.5mm, minimum width=5mm, font={\small}]
\tikzstyle{clabel}=[fill=white, draw=none, shape=rectangle, tikzit fill={rgb,255: red,56; green,255; blue,242}, font={\footnotesize}, inner sep=1pt, tikzit category=labels]
\tikzstyle{empty diagram}=[draw={gray!40!white}, dashed, shape=rectangle, minimum width=1cm, minimum height=1cm, tikzit category=misc]

\tikzstyle{hadamard edge}=[-, dashed, dash pattern=on 2pt off 0.5pt, thick, draw={rgb,255: red,68; green,136; blue,255}]
\tikzstyle{box edge}=[-, dashed, dash pattern=on 2pt off 0.5pt, thick, draw={rgb,255: red,203; green,192; blue,225}]
\tikzstyle{brace edge}=[-, tikzit draw=blue, decorate, decoration={brace,amplitude=1mm,raise=-1mm}]
\tikzstyle{diredge}=[->]
\tikzstyle{double edge}=[-, double, shorten <=-1mm, shorten >=-1mm, double distance=2pt]
\tikzstyle{gray edge}=[-, {gray!60!white}]
\tikzstyle{pointer edge}=[->, very thick, gray]
\tikzstyle{boldedge}=[-, line width=1.6pt, shorten <=-0.17mm, shorten >=-0.17mm]

\tikzstyle{rmat}=[draw, signal, fill=gray!30, signal to=east, signal from=west, inner sep=1.5pt, minimum height=9pt]
\tikzstyle{lmat}=[draw, signal, fill=gray!30, signal to=west, signal from=east, inner sep=1.5pt, minimum height=9pt]
\tikzstyle{ggreen}=[fill=green, draw=black, shape=circle, tikzit category=SZX, tikzit fill=green, tikzit draw=black, line width=1pt, inner sep=2pt]
\tikzstyle{gred}=[fill=red, draw=black, shape=circle, rounded corners=2mm,  tikzit category=SZX, inner sep=2pt, tikzit fill=red, line width=1pt]
\tikzstyle{divide}=[regular polygon, regular polygon sides=3, shape border rotate=90, draw=black,fill=gray!30, inner sep=1.6pt, tikzit category=scal, rounded corners=0.8mm]
\tikzstyle{gather}=[fill=gray!30, draw=black, tikzit category=scal, rounded corners=0.8mm, regular polygon, regular polygon sides=3, shape border rotate=-90, inner sep=1.6pt]
\tikzstyle{A}=[fill=white, shape=circle, tikzit category=scal, inner sep=1pt]

\tikzstyle{very thick}=[-, line width=1pt, tikzit draw=red]

\usepackage{tabularray}
\usepackage{subfig}
\usepackage{amsmath}

\usepackage{amsthm}
\newtheorem{theorem}{Theorem}[section] 
\newtheorem{lemma}[theorem]{Lemma} % shares numbering with theorems
\newtheorem{corollary}[theorem]{Corollary} % shares numbering with theorems
\newtheorem{definition}[theorem]{Definition} % shares numbering with theorems
\newtheorem{deflem}[theorem]{Definition-lemma} % shares numbering with theorems

\DeclareMathOperator{\Span}{Span}

\usepackage{makecell}
\usepackage{float}

\usepackage{hyperref}

\usepackage[numbers,sort&compress]{natbib}

\usepackage{authblk}
\usepackage{blindtext}
\usepackage{mathtools}
\usepackage{quantikz}

\title{Local distillation from Reed Muller codes unfolding}
\date{} % Activate to display a given date or no date (if empty),
\author[1]{Vivien Londe}
\affil[1]{Alice \& Bob, 49 Bd du Général Martial Valin, 75015 Paris, France}
\affil[ ]{vivien.londe@alice-bob.com}

\begin{document}
\maketitle

\tableofcontents

\newpage

\begin{abstract}
We generalize the unfolding of a Reed Muller distillation factory of \cite{ruiz2025unfolded} by exhibiting the algebraic structure that the unfolding is based on. We describe a 2D local layout for the Z stabilizers of a distance 4 Reed Muller distillation factory and a 3D local layout for the Z stabilizer of a distance 4 and a distance 7 Reed Muller distillation factory. Given input T states with infidelities $p=10^{-3}$, the 2D local distillation factory with distance 4 outputs a CCZ state with infidelity $p=8.256 \times 10^{-9}$ and the 3D local distillation factory with distance 7 outputs a T state with infidelity $p=1.1811 \times 10^{-17}$.
\end{abstract}

\section{Introduction}

Universal fault-tolerant quantum computation is achievable with Clifford gates and noisy magic states \cite{knill2004fault_schemes, knill2004fault_threshold, bravyi2005universal} as long as the magic states fidelity is above a known threshold. From a magic state, a non-Clifford gate is teleported (by means of Clifford gates only) onto a logical qubit. Since the fidelity of the non-Clifford gate is determined by the fidelity of the magic state, the magic state fidelity is improved before it is injected, for example through distillation \cite{bravyi2005universal, bravyi2012magic}. Intensive work went into refining distillation protocols \cite{meier2012magic, jones2013multilevel, jones2013low, jones2013composite, eastin2013distilling, paetznick2013universal, haah2017magic, campbell2017unifying, haah2018codes, litinski2019magic, gidney2019efficient, guillaud2021error, gidney2024magic}. It often relies on a triorthogonal code, such as the popular [[15,1,3]] quantum Reed Muller code \cite{knill1998resilient, koutsioumpas2022smallest}. In \cite{ruiz2025unfolded}, a basis of the Z stabilizer group of the [[15,1,3]] quantum Reed Muller code is unfolded in a local 2D layout. When biased qubits such as cat qubits \cite{mirrahimi2014dynamically, guillaud2019repetition, puri2020bias} are the quantum information carrier, it is not problematic that the $X$ stabilizer group is not local. Thus, the authors of \cite{ruiz2025unfolded} describe a very compact (in number of cat qubits times number of error correction rounds) distillation scheme for biased noise qubits.

This article builds upon their work and generalizes it to other quantum Reed Muller codes. We first observe that the $Z$ stabilizer group\footnote{In this article, we describe factories as $\ket{T}$ state factories, which is consistent with most of the litterature on distillation. The authors of \cite{ruiz2025unfolded} use the opposite convention for $X$ and $Z$ stabilizer groups, in order for them to be consistent with the litterature on cat qubits. Therefore they present the small unfolded code as a $\ket{X^{1/4}}$ state factory, such that its $X$ stabilizer group is local in 2D.} of the (small) unfolded code from \cite{ruiz2025unfolded} has a product structure that explains its local 2D layout. More precisely we unfold the 4-dimensional description of the [[15,1,3]] code (see e.g. \cite{barg2025geometric}) into a 2D layout for its Z stabilizer group. In \cite{ruiz2025unfolded}, this product structure was not apparent since the 2D layout had been found with a SAT solver. We then generalize the unfolding to larger quantum Reed Muller codes. In particular, we highlight what we call the big unfolded code which is an interpolation between two quantum Reed Muller codes that are described on a 6-dimensional cube in \cite{barg2025geometric}. The Z distance of the big unfolded code is 4, therefore the distillation protocol associated to it improves the infidelity of magic states from $p$ to $\theta(p^4)$. More precisely, the big unfolded code is a $64 \ket{T}$ to $\ket{CCZ}$ factory such that $64$ states $\ket{T}$, each with infidelity $p$, are consumed to produce a $\ket{CCZ}$ state with infidelity $8256 \, p^4$. If $p=10^{-3}$, then $8256 \, p^4 = 8.256 \times 10^{-9}$.

We also showcase another code on 64 physical qubits whose $Z$ distance is 4. At the cost of 42 additional qubits, we describe a 3D layout - which we call the rubik's cube layout - of this code such that a basis of the $Z$ stabilizer group of this code is made of local cubes in 3D. The rubik's code is a 64 $\ket{T}$ to 15 $\ket{CCZ}$ factory. It encodes 15 logical qubits and we describe which triples of logical qubits undergo a $CCZ$ gate when a transversal T gate is applied. The logical error probability is $10416 p^4$. If $p=10^{-3}$, then $10416 p^4 = 1.0416  \times 10^{-8}$.

Finally, we showcase a code on 127 physical qubits whose $Z$ distance is 7. This code is obtained from puncturing a quantum Reed Muller code defined on a 7-dimensional cube. At the cost of 152 additional physical qubits, we describe a 3D layout of this code such that a basis of the $Z$ stabilizer group of this code is made of local cubes in 3D. This code is a 127 $\ket{T}$ to $\ket{T}$ factory. For an error probability $p$ on each of the 127 $T$ gates, the logical infidelity is $11811 p^7$. If $p=10^{-3}$, then $11811 p^7 = 1.1811 \times 10^{-17}$.

During completion of this work, we were made aware of independent results described in \cite{tiurev2026parity}, that also build on the results of \cite{ruiz2025unfolded} by unfolding quantum Reed Muller codes that are larger than the punctured $QRM_4(1,1)$. More precisely, they unfold the punctured $QRM_m(1,1)$ codes. While our work improves on the minimum distance of distillation factories, their work allows to distill states that lay higher than $\ket{T}$ in the Clifford hierarchy. As a consequence, the distillation codes from their work and from this work are different.

\section{Definitions and technical lemmas}

In this article, we consider quantum (Reed Muller) error correction codes defined on $n=2^m$ physical qubits. We define Z stabilizers and X stabilizers by giving the subset of the $2^m$ vertices of an $m$-cube on which the stabilizer acts non-trivially. Given a subset $V$ of vertices of the $m$-cube:
$$Z(V) = \prod_{v \in \mathbb{F}_2^m} (Z_v)^{v \in V}$$
and 
$$X(V) = \prod_{v \in \mathbb{F}_2^m} (X_v)^{v \in V},$$
where $Z_v$ (respectively $X_v$) acts like $Z$ (respectively $X$) on the physical qubit at vertex $v$ and acts like $I$ on the $2^m-1$ other physical qubits.

We characterize subsets of the set of vertices of an $m$-cube geometrically (with subcubes) and algebraicaly (with polynomials).

\begin{definition}
Let $P \in \mathbb{F}_2[X_1, \cdots, X_m]$\footnote{In this article, a polynomial $P \in \mathbb{F}_2[X_1, \cdots, X_m]$ is considered only through its evaluation function over $\mathbb{F}_2$. Therefore $X_i^2=X_i$. Thus, we abuse notations and write $\mathbb{F}_2[X_1, \cdots, X_m]$ to actually denote the quotient of $\mathbb{F}_2[X_1, \cdots, X_m]$  by $(\prod_{i \in \{1, \dots, m\}} X_i(X_i+1))$.}.
The subset of vertices of the $m$-cube associated with $P$ is the subset $V_P$ where $P$ evaluates to $1$:
$$V_P = \{v \in \mathbb{F}_2^m \, | \, P(v)=1 \}.$$
The stabilizer associated with $P$ acts non-trivially on $V_P$ and trivially (by the identity) on $\mathbb{F}_2^m \setminus V_P$.
\end{definition}

For instance, in the $3$-cube, a square (i.e. a $2$-subcube) corresponds to a degree $1$ polynomial. An edge (i.e. a $1$ subcube) corresponds to a degree $2$ polynomial. In the sequel, we abuse notations and refer to $P$ and $V_P$ interchangeably. We also refer to a subcube and to the set of its vertices interchangeably. Finally, we refer to a stabilizer by its corresponding subcube or its corresponding polynomial. We mean that the stabilizer acts nontrivially on the set of vertices corresponding to this subcube or to this polynomial and trivially on the other vertices of the $m$-cube.

\begin{definition}
Let $m \geq 1$. Let $J$ be a subset of $\{1, \dots, m\}$. We denote by $\overline{J}$ the complement of $J$ in $\{1, \dots, m\}$:
$$\overline{J} = \{i \in \{1, \dots, m\} \, | \, i \notin J\}.$$
\end{definition}

\begin{definition}[\cite{barg2025geometric}, Definition 2.1]
A subcube of type $J$, where $J$ is a subset of $\{1, \dots, m\}$, is a subcube that corresponds to a polynomial of degree $m - |J|$ in the variables $(X_j)_{j \in \overline{J}}$. $|J|$ is the dimension of the subcube.
\label{def:subcube}
\end{definition}

For instance in the $4$-cube, $X_2 X_3 (X_4+1)$ is an edge (i.e. a subcube of dimension $1$) of type $\{1\}$.

Note that the type of a subcube corresponds to (the complement of) the variables of the associated polynomial and that translating this subcube corresponds to adding $+1$ to some of these variables. For instance the $4$ edges of type $\{3\}$ of the $3$-cube correspond to the $4$ following polynomials: $X_1 X_2$, $(X_1+1) X_2$, $X_1 (X_2+1)$ and $(X_1+1) (X_2+1)$.

\begin{deflem}[\cite{barg2025geometric}, Definition 2.1]
A subcube is characterized by one of its vertices and its type. The subcube of type $J$ that contains $v$ is denoted $v + \langle J \rangle$. 
\end{deflem}

\begin{proof}
The subcube $v + \langle J \rangle$ corresponds to the polynomial
$$\prod_{j \in \overline{J}}(X_j + v_j + 1).$$
\end{proof}

\begin{definition}
Let $\mathcal{S}$ be a subspace of $\mathbb{F}_2[X_i, i \in \{1, \dots, m\}]$. The subspace of $\mathbb{F}_2^{\mathbb{F}_2^m}$ (i.e. the power set of $\mathbb{F}_2^m$ considered as an $\mathbb{F}_2$ vector space) associated to $\mathcal{S}$ is the subspace $\{V_P \, | \, P \in \mathcal{S}\}$, where $V_P = \{v \in \mathbb{F}_2^m \, | \, P(v)=1 \}.$
\end{definition}

\begin{definition}
The standard subcube of type $J$ is the unique subcube of type $J$ that contains the vertex $0 \dots 0$.
\end{definition}

\begin{definition}
Given $T \subset \mathcal{P}(\{1,\dots,m\})$ a set of types of the $m$-cube, we denote by
$$\text{Subcubes}(T)$$
the space generated by all subcubes whose type belongs to $T$.
\end{definition}

\begin{definition}[classical Reed Muller]
For integers $m, r$ such that $m \geq 1$ and $0 \leq r \leq m$, $RM_m(r)$ is the classical error correction code of length $n=2^m$ (i.e. on $2^m$ physical bits) associated to the space of polynomials of degree at most $r$ in $m$ variables: $(\mathbb{F}_2[X_i, i \in \{1, \dots, m\}])_{\leq r}$.
\end{definition}

Equivalently, $RM_m(r)$ is generated by all subcubes of dimension $m-r$ of the $m$-cube (see \cite{barg2025geometric}, section~5.1):
$$RM_m(r) = \text{Subcubes}(T)$$
where $T$ is the set of subsets of cardinal $m-r$ of $\{1,\dots,m\}$.

\begin{definition}[quantum Reed Muller]
For integers $m, q, r$ such that $m \geq 1$ and $0 \leq q \leq r \leq m$, $QRM_m(q, r)$ is the quantum error correction code with $n=2^m$ physical qubits whose X stabilizer group is $RM_m(q)$ and Z stabilizer group is $RM_m(m-r-1)$.
\end{definition}

Equivalently, the X stabilizer group is generated by all subcubes of dimension $m-q$ and the Z stabilizer group is generated by all subcubes of dimension $r+1$.

Note that the X logical group of $QRM_m(r)$ is $RM_m(r)$ and its Z logical group is $RM_m(m-q-1)$ (see \cite{barg2025geometric}, section~5.2). Equivalently, the X logical group is generated by all subcubes of dimension $m-r$ and the Z logical group is generated by all subcubes of dimension $q+1$.

\begin{lemma}
The subspace generated by all subcubes of type $J$ is the subspace associated to $\mathbb{F}_2[X_i, i \in \overline{J}]$.
\label{lem:6_cube_1_type}
\end{lemma}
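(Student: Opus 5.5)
We prove the two inclusions. Write $\mathcal{S}_J = \mathbb{F}_2[X_i, i \in \overline{J}]$; as in the footnote, it is the space of multilinear polynomials in the variables $(X_i)_{i\in\overline{J}}$, of dimension $2^{|\overline{J}|}$.

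\textbf{First inclusion.} By the Definition-lemma, every subcube of type $J$ has the form $v + \langle J \rangle$ and corresponds to the polynomial
$$\prod_{j \in \overline{J}}(X_j + v_j + 1).$$
This polynomial involves only the variables $X_j$ with $j \in \overline{J}$, so it lies in $\mathcal{S}_J$. The map $P \mapsto V_P$ is $\mathbb{F}_2$-linear, since $V_{P+Q} = V_P \triangle V_Q$. Hence the span of the subcubes of type $J$ lies in the subspace associated to $\mathcal{S}_J$.

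\textbf{Reverse inclusion.} It suffices to show that the polynomials $\prod_{j \in \overline{J}}(X_j + v_j + 1)$ span $\mathcal{S}_J$. I will take the indicator basis. For $a \in \mathbb{F}_2^{\overline{J}}$, let $\delta_a = \prod_{j\in\overline{J}}(X_j+a_j+1)$, which is the indicator of $\{x : x_{\overline{J}} = a\}$. The $2^{|\overline{J}|}$ functions $\delta_a$ are exactly the subcube polynomials, with $a = v_{\overline{J}}$.

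Any $P \in \mathcal{S}_J$ depends only on $x_{\overline{J}}$, so as a function
$$P = \sum_a P(a)\,\delta_a.$$
Since polynomials are identified with their evaluation functions, this is an identity in the quotient ring. Therefore $P$ is a sum of subcube polynomials, and $V_P$ is a symmetric difference of subcubes of type $J$.

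Alternatively, one can check by a dimension count that the $\delta_a$ are linearly independent, because their supports are disjoint and nonempty, and that they number $2^{|\overline{J}|} = \dim \mathcal{S}_J$. This also gives equality.

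\textbf{Main subtlety.} The only delicate point is the identification of polynomials with their evaluation functions. A function on $\mathbb{F}_2^m$ that depends only on the coordinates in $\overline{J}$ is represented by a unique multilinear polynomial. That polynomial uses only the variables in $\overline{J}$, because this is how the unique multilinear representative is computed by Möbius inversion. I would state this uniqueness explicitly, citing the footnote's quotient convention, so that the expansion $P=\sum_a P(a)\delta_a$ is justified as an equality of polynomials and not merely of functions.
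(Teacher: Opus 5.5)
Your proof is correct. The dimension-count variant you give at the end is exactly the paper's proof. The paper notes that the $2^{m-|J|}$ polynomials $\prod_{i\in\overline{J}}(X_i+a_i)$ are free, since each is the only one equal to $1$ at a vertex with $v_i=a_i+1$ on $\overline{J}$. They lie in $\mathbb{F}_2[X_i, i\in\overline{J}]$, whose dimension is also $2^{m-|J|}$, so they span it.

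Your main route, the explicit expansion $P=\sum_a P(a)\,\delta_a$, is the constructive (interpolation) form of the same fact. It writes each element as a concrete sum of subcubes of type $J$. It uses only that an element of the quotient ring is determined by its evaluation function, so it never needs the dimension of $\mathbb{F}_2[X_i, i\in\overline{J}]$.

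For the same reason, the uniqueness-of-representative subtlety you flag is not needed. Each $\delta_a$ visibly lies in $\mathbb{F}_2[X_i, i\in\overline{J}]$, and under the footnote's convention, equal evaluation functions already means equal elements.
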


For instance, in the $6$-cube, the subspace generated by all squares (i.e. subcubes of dimension $2$) of type $\{1,6\}$ is $\mathbb{F}_2[X_2, X_3, X_4, X_5]$.

\begin{proof}
The set of polynomials corresponding to subcubes of type $J$ is $$ \mathcal{F}_J = \{ \prod_{i \in \overline{J}} (X_i + a_i), \, a_i \in \mathbb{F}_2 \}.$$

$\mathcal{F}_J$ is a free family since only one of its elements evaluates to $1$ on $v \in \mathbb{F}_2^m$ such that $v_i = a_i +1$ (and such a $v$ exists for each element of $\mathcal{F}_J$). $\mathcal{F}_J$ has cardinal $2^{m - |J|}$ and is a family of elements of $\mathbb{F}_2[X_i, i \in \overline{J}]$, which has dimension $2^{m - |J|}$.
Therefore $$\Span(\mathcal{F}_J) = \mathbb{F}_2[X_k, k \in \overline{J}].$$
\end{proof}

\begin{deflem}
Let $k_1 \leq i_1 < i_2 \leq k_2 \in \{1, \dots, m\}$. For $1 \leq k < k_1$ and for $k_2 < k \leq m$, let $a_k \in \mathbb{F}_2$. Let $I = \{i_1, i_2\}$.
$$ V_{k_1, i_1, i_2, k_2, \, (a_k)} := \left( \prod_{1 \leq k < k_1} (X_k + a_k) \right) \,  \mathbb{F}_2[X_k, k \in \overline{I}, k_1 \leq k \leq k_2] \, \left( \prod_{k_2 < k \leq m} (X_k + a_k) \right)$$

is the space generated by the set of squares of type $\{i_1, i_2\}$ whose vertices $v$ all satisfy $v_k = a_k+1$ for $k$ in $\{1, \dots, k_1-1\} \cup \{k_2+1, \dots, m\}$. In other words, it is the space generated by a square of type $\{i_1, i_2\}$ whose vertices $v$ all satisfy $v_k = a_k+1$ for $k$ in $\{1, \dots, k_1-1\} \cup \{k_2+1, \dots, m\}$ and all its translations along coordinates in $\{k_1, \dots, k_2\}$:
$$ V_{k_1, i_1, i_2, k_2, \, (a_k)} = \langle \, (t | w | u) + \langle \{i_1, i_2\} \rangle , w \in \mathbb{F}_2^{k_2-k_1+1} \, \rangle,$$
where $t \in \mathbb{F}_2^{k_1-1}$ is the following bitstring\footnote{we use the symbol $|$ to denote concatenation of bitstrings. When bitstrings of length $1$ are concatenated, it amounts to describing a bitstring coordinate by coordinate.} of length $k_1 - 1$:
$$t = (a_1+1) | \dots | (a_{k_1-1}+1)$$
and $u \in \mathbb{F}_2^{m-k_2}$ is the following bitstring of length $m - k_2$:
$$u = (a_{k_2+1}+1) | \dots | (a_m+1).$$
\end{deflem}

\begin{proof}
We prove the two inclusions:
\begin{itemize}
\item $\subset$. \\
Let $P \in V_{k_1, i_1, i_2, k_2, \, (a_k)}$. There exists $Q \in \mathbb{F}_2[X_k, k \in \overline{I}, k_1 \leq k \leq k_2]$ such that
$$P = \left( \prod_{1 \leq k < k_1} (X_k + a_k) \right) \, Q \, \left( \prod_{k_2 < k \leq m} (X_k + a_k) \right).$$
$Q$ corresponds to an element of the space generated by squares of type $\{i_1, i_2\}$ in the cube of dimension $k_2 - k_1 +1$ with coordinate indices in $\{k_1, \dots, k_2\}$.

Let $v \in \mathbb{F}_2^{m}$. There exists $t \in \mathbb{F}_2^{k_1-1}$, $w \in \mathbb{F}_2^{k_2 - k_1 + 1}$ and $u \in \mathbb{F}_2^{m-k_2}$ such that $v = t | w | u$. $P(v)=1$ if and only if
$$t = (a_1+1) | \dots | (a_{k_1-1}+1),$$
$$Q(w)=1$$
and
$$u = (a_{k_2+1}+1) | \dots | (a_m+1).$$
Therefore $P$ belongs to the state generated by the square $(a_1+1) | \dots | (a_{k_1-1}+1) | w | (a_{k_2+1}+1) | \dots | (a_m+1) + \langle \{i_1, i_2\} \rangle$ and all its translations along coordinates in $\{k_1, \dots, k_2\}$.

\item $\supset$. \\
Let $w \in \mathbb{F}_2^{m-k_1-k_2}$. The four vertices of the square $(a_1+1) | \dots | (a_{k_1-1}+1) | w | (a_{k_2+1}+1) | \dots | (a_m+1) + \langle \{i_1, i_2\} \rangle$ satisfy $X_k=a_k$ for $k \in \{1, \dots, k_1-1\} \cup \{k_2+1, \dots, m\}$. Therefore the square $(a_1+1) | \dots | (a_{k_1-1}+1) | w | (a_{k_2+1}+1) | \dots | (a_m+1) + \langle \{i_1, i_2\} \rangle$ belongs to $V_{k_1, i_1, i_2, k_2, \, (a_k)}$.
\end{itemize}
\end{proof}

\begin{lemma}
Let $I, J \in \{1, \dots, m\}$ be two subsets of cardinal $d$ such that $|I \cap J| = d-1$. Let $i_0$ be such that $I = (I \cap J) \cup \{i_0\}$. For any $b_{i_0} \in \mathbb{F}_2$, the subspace generated by all subcubes of type $I$ or $J$ is the subspace associated to $\mathbb{F}_2[X_k, k \in \overline{I}] \oplus (X_{i_0} + b_{i_0}) \, \mathbb{F}_2[X_k, k \in \overline{I \cup J}]$. It has dimension $2^{m-d} + 2^{m-d-1}$. Written algebraically,
$$\mathbb{F}_2[X_k, k \in \overline{I}] + \mathbb{F}_2[X_k, k \in \overline{J}] = \mathbb{F}_2[X_k, k \in \overline{I}] \oplus (X_{i_0} + b_{i_0}) \, \mathbb{F}_2[X_k, k \in \overline{I \cup J}]$$
\label{lem:subspace_two_types}
\end{lemma}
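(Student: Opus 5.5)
Let $K = I \cap J$, so $I = K \cup \{i_0\}$ and $J = K \cup \{j_0\}$ with $j_0 \neq i_0$. Then $\overline{I \cup J} = \overline{K} \setminus \{i_0, j_0\}$, $\overline{I} = \overline{I\cup J} \cup \{j_0\}$ and $\overline{J} = \overline{I\cup J} \cup \{i_0\}$. By Lemma~\ref{lem:6_cube_1_type}, the space generated by subcubes of type $I$ (resp.\ $J$) is associated to $\mathbb{F}_2[X_k, k \in \overline{I}]$ (resp.\ $\mathbb{F}_2[X_k, k \in \overline{J}]$). So it suffices to prove the algebraic identity and then count dimensions. Throughout, write $R = \mathbb{F}_2[X_k, k \in \overline{I \cup J}]$, which has dimension $2^{m-d-1}$ since $|I \cup J| = d+1$.

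For the inclusion $\supseteq$, note that $(X_{i_0}+b_{i_0})R \subset \mathbb{F}_2[X_k, k \in \overline{J}]$, because $i_0 \in \overline{J}$. For $\subseteq$, observe that every element of $\mathbb{F}_2[X_k, k\in\overline{J}]$ is multilinear in $X_{i_0}$, so it can be written $Q = A + (X_{i_0}+b_{i_0})B$ with $A, B \in R$. Indeed, write $Q = A' + X_{i_0} B$ and set $A = A' + b_{i_0}B$. Since $A \in R \subset \mathbb{F}_2[X_k, k\in\overline{I}]$, we get $Q$ in the right-hand side. This establishes equality of the sums.

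The remaining point, which is the only real content, is that the sum is direct. Suppose $P \in \mathbb{F}_2[X_k, k \in \overline{I}]$ equals $(X_{i_0}+b_{i_0})B$ with $B \in R$. The left side does not depend on $X_{i_0}$, since $i_0 \in I$. Evaluating at points differing only in coordinate $i_0$ therefore gives $B(v) = 0$ for all $v$. As we work with evaluation functions, this means $B = 0$. Equivalently, one can use uniqueness of the multilinear representation: $P$ has no monomial containing $X_{i_0}$, whereas $(X_{i_0}+b_{i_0})B$ contains $X_{i_0}B$.

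The dimension is then $\dim \mathbb{F}_2[X_k,k\in\overline I] + \dim R = 2^{m-d} + 2^{m-d-1}$. Here we use the count from the proof of Lemma~\ref{lem:6_cube_1_type}, and the fact that multiplication by $(X_{i_0}+b_{i_0})$ is injective on $R$ by the same evaluation argument. The main care needed is only to treat polynomials as multilinear functions, as in the footnote convention, so that these arguments about "not depending on $X_{i_0}$" are valid.
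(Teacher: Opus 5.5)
Your proof is correct. Its core step matches the paper's first step: you rewrite $X_{i_0}$ as $(X_{i_0}+b_{i_0})+b_{i_0}$ to push $\mathbb{F}_2[X_k, k\in\overline{J}]$ into $\mathbb{F}_2[X_k, k\in\overline{I}] + (X_{i_0}+b_{i_0})R$. The paper does this generator by generator on the product family $\mathcal{F}_J$, while you do it on an arbitrary element.

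Where you differ is in how equality and directness are closed. The paper proves only the inclusion of the left-hand side into the right-hand side and then matches dimensions. It computes the dimension of the left-hand side by inclusion--exclusion, using $\mathbb{F}_2[X_k, k\in\overline{I}]\cap\mathbb{F}_2[X_k, k\in\overline{J}] = \mathbb{F}_2[X_k, k\in\overline{I\cup J}]$. It gets the dimension of the right-hand side, and also directness, by asserting that $\mathcal{F}_I\cup\mathcal{F}_{I\cup J,i_0}$ is a basis. You instead observe that the reverse inclusion is immediate because $i_0\in\overline{J}$, so equality needs no counting at all. You then actually prove directness, and the injectivity of multiplication by $X_{i_0}+b_{i_0}$ on $R$, with the flip-coordinate-$i_0$ evaluation argument.

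Your route is more self-contained: it needs neither the intersection identity nor the basis claim, both of which the paper states without justification. In return, the paper's route gives an independent inclusion--exclusion count of the left-hand side, $2^{m-d}+2^{m-d}-2^{m-d-1}$, which serves as a consistency check on the final dimension.
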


For instance, in the $6$-cube, the subspace generated by all squares (i.e. subcubes of dimension $2$) of type $\{1,6\}$ or $\{1,5\}$  is $\mathbb{F}_2[X_2, X_3, X_4, X_5] \oplus \mathbb{F}_2[X_2, X_3, X_4] \, X_6$ and is also $\mathbb{F}_2[X_2, X_3, X_4, X_5] \oplus \mathbb{F}_2[X_2, X_3, X_4] \, (X_6 + 1)$.

\begin{corollary}
Let $I = \{i_1, i_2\}$, with $i_1, i_2 \in \{1, \dots, m\}$.

For every $k_1 \leq i_1$, every $k_2 > i_2$, every $(a_k) \in \mathbb{F}_2^{(k_1-1)+(m-k_2)}$ and every $a_{k_2} \in \mathbb{F}_2$, the space generated by the square $(a_1+1) | \dots | (a_{k_1-1}+1) | w | (a_{k_2+1}+1) | \dots | (a_m+1) + \langle \{i_1, i_2\} \rangle$ and all its translations along coordinates in $\{k_1, \dots, k_2\}$ is a subspace of the sum of the space generated by the square $(a_1+1) | \dots | (a_{k_1-1}+1) | w | (a_{k_2}+1) | \dots | (a_m+1) + \langle \{i_1, i_2\} \rangle$ and all its translations along coordinates in $\{k_1, \dots, k_2-1\}$ and the space generated by squares of type $\{i_1, k_2\}$.

Written algebraically,
$$\forall k_1 \leq i_1, \forall k_2 > i_2, \quad V_{k_1, i_1, i_2, k_2, \, (a_k)} \subset
V_{k_1, i_1, i_2, k_2-1, \, (a_k)} + \mathbb{F}_2[X_k, k \in \overline{\{i_1, k_2\}}].$$

Similarly,
$$\forall k_1 < i_1, \forall k_2 \geq i_2, \quad V_{k_1, i_1, i_2, k_2, \, (a_k)} \subset
V_{k_1+1, i_1, i_2, k_2, \, (a_k)} + \mathbb{F}_2[X_k, k \in \overline{\{k_1, i_2\}}].$$
\label{cor:translations_from_another_type}
\end{corollary}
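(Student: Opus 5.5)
The plan is to prove only the first inclusion and get the second by symmetry. The idea is to factor out the coordinates that stay fixed, and to apply Lemma~\ref{lem:subspace_two_types} in the remaining middle variables to the two types $\{i_1,i_2\}$ and $\{i_1,k_2\}$. These two types share $i_1$, so $d=2$ and $|I\cap J|=1$.

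Write $A=\prod_{1\le k<k_1}(X_k+a_k)$ and $B=\prod_{k_2<k\le m}(X_k+a_k)$. By definition,
$$V_{k_1,i_1,i_2,k_2,(a_k)} = A\,\mathbb{F}_2[X_k,\ k\in\overline{I},\ k_1\le k\le k_2]\,B .$$
Put $M=\{k_1,\dots,k_2\}\setminus\{i_1,i_2,k_2\}$. Since $k_2>i_2$, we have $k_2\notin I$. Splitting on the variable $X_{k_2}$ gives
$$\mathbb{F}_2[X_k,\ k\in\overline{I},\ k_1\le k\le k_2] = \mathbb{F}_2[X_k,\ k\in M] \oplus (X_{k_2}+a_{k_2})\,\mathbb{F}_2[X_k,\ k\in M].$$
This is the restriction to the coordinates $\{k_1,\dots,k_2\}$ of the decomposition in Lemma~\ref{lem:subspace_two_types}, taking $i_0=k_2$ and $b_{i_0}=a_{k_2}$.

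Multiplying through by $A$ and $B$ yields two summands, which I treat separately.

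The second summand is
$$A\,(X_{k_2}+a_{k_2})\,\mathbb{F}_2[X_k,\ k\in M]\,B .$$
Its variables lie in $\{1,\dots,k_1-1\}\cup M\cup\{k_2,\dots,m\}\setminus\{\ldots\}$, and it never involves $X_{i_1}$, $X_{i_2}$ or, beyond that factor, anything else forbidden. So it is exactly $V_{k_1,i_1,i_2,k_2-1,(a_k)}$, where $a_{k_2}$ now plays the role of a frozen coordinate. Equivalently, it is the span of the square at $X_{k_2}=a_{k_2}+1$ together with its translations along $\{k_1,\dots,k_2-1\}$.

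The first summand is
$$A\,\mathbb{F}_2[X_k,\ k\in M]\,B .$$
It does not involve $X_{i_1}$ or $X_{k_2}$, so it lies in $\mathbb{F}_2[X_k,\ k\in\overline{\{i_1,k_2\}}]$. By Lemma~\ref{lem:6_cube_1_type}, that space is the span of all squares of type $\{i_1,k_2\}$.

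The sum of these two containments gives the first inclusion. The second inclusion follows by the mirror argument: split on $X_{k_1}$ instead of $X_{k_2}$, and use the type $\{k_1,i_2\}$. This works because $k_1<i_1$, so $k_1\notin I$.

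The main obstacle is the bookkeeping around $X_{i_2}$ in the second summand. The claim is that $A\,(X_{k_2}+a_{k_2})\,\mathbb{F}_2[X_k,\ k\in M]\,B$ equals $V_{k_1,i_1,i_2,k_2-1,(a_k)}$. For this, the variable set of $V_{k_1,i_1,i_2,k_2-1,(a_k)}$ must exclude $i_2$. That requires $i_2\le k_2-1$, which is exactly the hypothesis $k_2>i_2$. With that settled, the remaining verifications are mechanical: each split is a direct sum, and each factor's variables are among those allowed. I would write these checks out at the level of variable sets, as in the proof of Lemma~\ref{lem:6_cube_1_type}.
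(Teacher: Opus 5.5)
Your proof is correct and follows the paper's route. The paper likewise applies Lemma~\ref{lem:subspace_two_types} in the sub-cube on coordinates $\{k_1,\dots,k_2\}$ to the types $\{i_1,k_2\}$ and $\{i_1,i_2\}$ (so $i_0=k_2$), multiplies by the frozen factors $\prod_{k<k_1}(X_k+a_k)$ and $\prod_{k>k_2}(X_k+a_k)$, identifies the $(X_{k_2}+a_{k_2})$-piece with $V_{k_1,i_1,i_2,k_2-1,(a_k)}$ and places the other piece in $\mathbb{F}_2[X_k, k\in\overline{\{i_1,k_2\}}]$, and omits the second inclusion as identical; your direct split of the middle polynomial space on $X_{k_2}$ is just the one-variable computation underlying that lemma.
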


Informally, Corollary~\ref{cor:translations_from_another_type} states that squares of type $\{i_1, k_2\}$ give translations along the coordinate $k_2$ to squares of type $\{i_1, i_2\}$. This is illustrated in Figure \ref{fig:translation}.

\begin{figure}[H]
    \centering
    \includegraphics[width=0.7\textwidth]{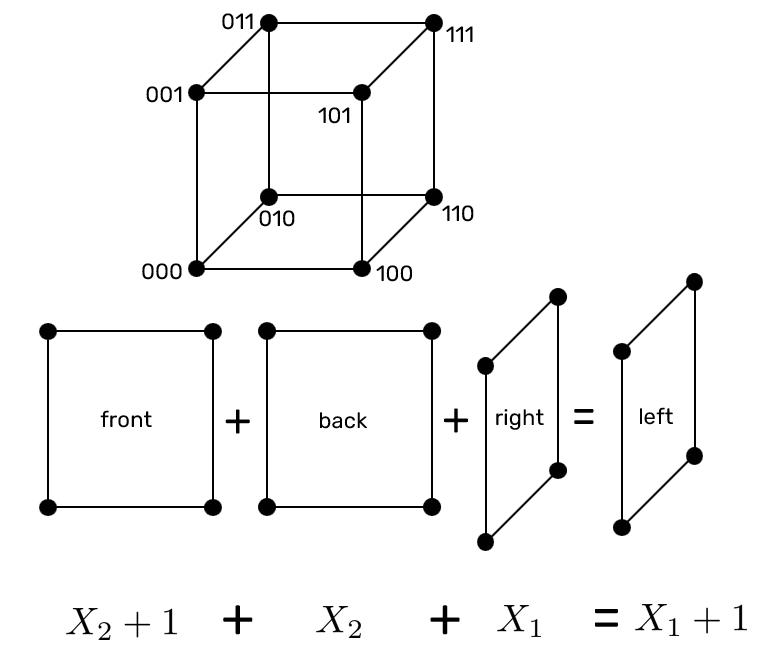}
    \caption{The squares of type $\{1,3\}$ ({\it front} and {\it back}) give translations along the $1^{st}$ coordinate to the square {\it right}, of type $\{2,3\}$. {\it front} corresponds to the polynomial $X_2+1$, {\it back} corresponds to the polynomial $X_2$, {\it right} corresponds to the polynomial $X_1$ and {\it left} corresponds to the polynomial $X_1+1$. The algebraic proof is elementary: $(X_2+1) + (X_2) + (X_1) = (X_1+1)$ in $\mathbb{F}_2[X_1, X_2, X_3]$.}
    \label{fig:translation}
\end{figure}

\begin{proof}[Proof of Lemma~\ref{lem:subspace_two_types}]
Let
$$\mathcal{F}_I = \{ \prod_{k \in \overline{I}} (X_k + a_k), \, (a_k)_{k \in \overline{I}} \in \mathbb{F}_2^{|\overline{I}|} \},$$
$$\mathcal{F}_J = \{ \prod_{k \in \overline{J}} (X_k + a_k), \, (a_k)_{k \in \overline{J}} \in \mathbb{F}_2^{|\overline{J}|} \}$$
and
$$\mathcal{F}_{I \cup J, i_0} = \{ (X_{i_0} + b_{i_0}) \prod_{k \in \overline{I \cup J}} (X_k + a_k), \, (a_k)_{k \in \overline{J}} \in \mathbb{F}_2^{|\overline{I \cup J}|} \}.$$

\begin{itemize}
\item We first show that $\mathcal{F}_J \subset \Span(\mathcal{F}_{I \cup J, i_0} \cup \mathcal{F}_I)$. Let $P \in \mathcal{F}_J$. Let $i_0$ be such that $I = (I \cap J) \cup \{i_0\}$. $P = \prod_{k \in \overline{J}} (X_k + a_k)$ where $a_k \in \mathbb{F}_2^{m - |J|}$. Developing the term $(X_{i_0} + a_{i_0}) = ((X_{i_0} + b_{i_0}) + (a_{i_0} + b_{i_0})$ gives $P = Q + (a_{i_0} + b_{i_0}) R$ with
$$Q = (X_{i_0} + b_{i_0}) R \in (X_{i_0} + b_{i_0})) \mathbb{F}_2[X_k, k \in \overline{I \cup J}] = \Span(\mathcal{F}_{I \cup J, i_0})$$
and
$$R = \prod_{k \in \overline{I \cup J}} (X_k + a_k) \in \mathbb{F}_2[X_k, k \in \overline{I \cup J}] \subset \Span(\mathcal{F}_{I}).$$

Since $\Span(\mathcal{F}_J) = \mathbb{F}_2[X_i, i \in \overline{J}]$  we have obtained
$$\mathbb{F}_2[X_i, i \in \overline{J}] \subset \Span(\mathcal{F}_{I \cup J, i_0} \cup \mathcal{F}_I).$$
Since $\Span(\mathcal{F}_I) = \mathbb{F}_2[X_i, i \in \overline{I}]$,
$$\mathbb{F}_2[X_i, i \in \overline{I}] + \mathbb{F}_2[X_i, i \in \overline{J}] \subset \Span(\mathcal{F}_{J, i_0} \cup \mathcal{F}_I) = \Span(\mathcal{F}_I) \oplus \Span(\mathcal{F}_{J, i_0}).$$
Therefore
$$\mathbb{F}_2[X_i, i \in \overline{I}] + \mathbb{F}_2[X_i, i \in \overline{J}] \subset \mathbb{F}_2[X_i, i \in \overline{I}] \oplus (X_{i_0} + b_{i_0}) \, \mathbb{F}_2[X_k, k \in \overline{I \cup J}].$$

\item We now show the dimension equality.

Since $\mathbb{F}_2[X_k, k \in \overline{I}] \cap \mathbb{F}_2[X_k, k \in \overline{J}] = \mathbb{F}_2[X_k, k \in \overline{I \cup J}]$,
$$\dim(\mathbb{F}_2[X_k, k \in \overline{I}] \, + \, \mathbb{F}_2[X_k, k \in \overline{J}]) = 2^{m-d} + 2^{m-d} - 2^{m-d-1} = 2^{m-d} + 2^{m-d-1}.$$
Finally, since $\mathcal{F}_I \cup \mathcal{F}_{I \cup J, i_0}$ is a basis of $\mathbb{F}_2[X_k, k \in \overline{I}] \oplus X_{j_0} \, \mathbb{F}_2[X_k, k \in \overline{I \cap J}]$ (which proves that the sum is indeed direct), this space also has dimension $2^{m-d} + 2^{m-d-1}$.
\end{itemize}
An inclusion and the equality of dimensions give the equality of $\Span(\mathcal{F}_I) + \Span(\mathcal{F}_J)$ and $\Span(\mathcal{F}_I) \oplus \Span(\mathcal{F}_{I \cup J, i_0})$, which is the result.
\end{proof}

\begin{proof}[Proof of Corollary~\ref{cor:translations_from_another_type}]
Lemma~\ref{lem:subspace_two_types} applied to $I = \{i_1, k_2\}$ and $J = \{i_1, i_2\}$ in the cube of dimension $k_2-k_1+1$ with coordinates in $\{k_1, \dots, k_2\}$ gives
\begin{align*}
& \mathbb{F}_2[X_k, k \in \overline{\{i_1, i_2\}}, k_1 \leq k \leq k_2] \\
\subset \quad
& \mathbb{F}_2[X_k, k \in \overline{\{i_1, i_2, k_2\}}, k_1 \leq k \leq k_2] (X_{k_2}+a_{k_2}) \\
\oplus \quad
& \mathbb{F}_2[X_k, k \in \overline{\{i_1, k_2\}}, k_1 \leq k \leq k_2].
\end{align*}
Multiplying by $\prod_{k < k_1} (X_k+a_k)$ and by $\prod_{k > k_2} (X_k+a_k)$ gives
\begin{align*}
& V_{k_1, i_1, i_2, k_2} \\
\subset \quad
& \left( \prod_{k < k_1} (X_k+a_k) \right) \mathbb{F}_2[X_k, k \in \overline{\{i_1, i_2, k_2\}}, k_1 \leq k \leq k_2] X_{k_2} \, \left( \prod_{k > k_2} (X_k+a_k) \right) \\
\oplus \quad
& \left( \prod_{k < k_1} (X_k+a_k) \right) \, \mathbb{F}_2[X_k, k \in \overline{\{i_1, k_2\}}, k_1 \leq k \leq k_2] \, \left( \prod_{k > k_2} (X_k+a_k) \right).
\end{align*}
Observe that
$$\left( \prod_{k < k_1} (X_k+a_k) \right) \mathbb{F}_2[X_k, k \in \overline{\{i_1, i_2, k_2\}}] X_{k_2} \, \left( \prod_{k > k_2} (X_k+a_k) \right) = V_{k_1, i_1, i_2, k_2-1, \, (a_k)}$$
and that
$$\left( \prod_{k < k_1} (X_k+a_k) \right) \, \mathbb{F}_2[X_k, k \in \overline{\{i_1, k_2\}}, k_1 \leq k \leq k_2] \, \left( \prod_{k > k_2} (X_k+a_k) \right) \subset \mathbb{F}_2[X_k, k \in \overline{\{i_1, k_2\}}].$$
Therefore
$$V_{k_1, i_1, i_2, k_2, \, (a_k)} \subset V_{k_1, i_1, i_2, k_2-1, \, (a_k)} + \mathbb{F}_2[X_k, k \in \overline{\{i_1, k_2\}}].$$
The proof of the second statement of the corollary is identical and we omit it.
\end{proof}

Definition \ref{def:diagonals} is motivated by Table~\ref{tab:subspaces_by_type_QRM_4_1_1} (for the $4$-cube) and by Table~\ref{tab:subspaces_by_type_QRM_6_1_1} (for the $6$-cube). Indeed, $V_{f_{max}}$ is the direct sum of spaces generated by the $f_{max}$ first top-left to bottom-right diagonals (starting at the top-right corner) of Table~\ref{tab:subspaces_by_type_QRM_4_1_1}~or~\ref{tab:subspaces_by_type_QRM_6_1_1}.

\begin{definition}
Let $f_{max} \in \{1, \dots, m\}$. For every $I =\{i_1, i_2\} \subset \{1, \dots, m\}$ such that $i_2 \geq i_1 + m - f_{max}$, for every $k \in \{1, \dots, k_1-1\} \cup \{k_2+1, \dots, m\}$, let $a_k^{(i_1,i_2)} \in \mathbb{F}_2$.
$$V_{f_{max}} := \bigoplus_{1 \leq f \leq f_{max}} \bigoplus_{\substack{I=\{i_1, i_2\}, \\ i_2 = i_1 + m - f, \\ I \subset \{1, \dots, m\}}} V_{i_1, i_1, i_2, i_2, \, (a_k^{(i_1,i_2)})}.$$
\label{def:diagonals}
\end{definition}

Theorem~\ref{thm:first_diagonals} justifies the notation $V_{f_{max}}$ by ensuring that $V_{f_{max}}$ doesn't depend on $(a_k^{(i_1,i_2)})_{i_1, i_2, k}$.

\begin{theorem}
$V_{f_{max}}$ is equal to the space generated by all squares of type $I = \{i_1, i_2\} \subset \{1, \dots, m\}$ with $i_2 \leq i_1 + m - f_{max}$. Written algebraically,
\begin{equation}
V_{f_{max}} = \sum_{\substack{I=\{i_1, i_2\}, \\ i_2 \geq i_1 + m - f_{max}, \\ I \subset \{1, \dots, m\}}} \mathbb{F}_2[X_k, k \in \overline{I}].
\label{eq:diagonal}
\end{equation}
\label{thm:first_diagonals}
\end{theorem}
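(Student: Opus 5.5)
The plan is to prove the two inclusions of \eqref{eq:diagonal} separately, with the nontrivial one by induction on the diagonal index $f$. I read the condition as in the displayed formula, $i_2 \geq i_1 + m - f_{max}$, i.e.\ pairs whose gap $i_2-i_1 = m-f$ has $f \leq f_{max}$.

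\emph{Inclusion $\subset$.} This is immediate. Each summand is
$$V_{i_1,i_1,i_2,i_2,(a_k)} = \Bigl(\prod_{k<i_1}(X_k+a_k)\Bigr)\,\mathbb{F}_2[X_k,\ i_1<k<i_2]\,\Bigl(\prod_{k>i_2}(X_k+a_k)\Bigr).$$
It involves no variable $X_{i_1}$ or $X_{i_2}$, so it lies in $\mathbb{F}_2[X_k, k \in \overline{\{i_1,i_2\}}]$. Since $i_2-i_1 = m-f \geq m-f_{max}$, this space is one of the terms on the right-hand side.

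\emph{Inclusion $\supset$.} I prove by induction on $f \in \{1,\dots,f_{max}\}$ that for every $I=\{i_1,i_2\}$ with $i_2-i_1=m-f$ we have $\mathbb{F}_2[X_k, k\in\overline{I}] \subset V_{f_{max}}$.

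The key observation is that $V_{1,i_1,i_2,m,(a_k)}$ has no fixed coordinates, so it equals $\mathbb{F}_2[X_k, k \in \overline{I}]$ by Lemma~\ref{lem:6_cube_1_type}. Starting from it, I shrink the window $\{k_1,\dots,k_2\}$ one coordinate at a time.
\begin{itemize}
\item First I lower $k_2$ from $m$ to $i_2$ using the first statement of Corollary~\ref{cor:translations_from_another_type}. Each step costs a term $\mathbb{F}_2[X_k, k\in\overline{\{i_1,k_2\}}]$ with $k_2>i_2$. Its gap $k_2-i_1 > m-f$, so it corresponds to a smaller diagonal index $f'<f$ and lies in $V_{f_{max}}$ by the induction hypothesis.
\item Then I raise $k_1$ from $1$ to $i_1$ using the second statement. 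Each step costs a term $\mathbb{F}_2[X_k, k\in\overline{\{k_1,i_2\}}]$ with $k_1<i_1$, whose gap is again strictly larger, so it is handled the same way.
\end{itemize}
After these steps $\mathbb{F}_2[X_k,k\in\overline{I}]$ is contained in $V_{i_1,i_1,i_2,i_2,(a_k)}$ plus terms already known to lie in $V_{f_{max}}$.

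The base case $f=1$ forces $I=\{1,m\}$. There the window is already $\{1,\dots,m\}$, no coordinate is fixed, and the summand is literally $\mathbb{F}_2[X_2,\dots,X_{m-1}]$, so nothing needs to be shrunk.

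The step I expect to need the most care is the bookkeeping of the constants $a_k$. The summand of $V_{f_{max}}$ is built with the prescribed constants $a_k^{(i_1,i_2)}$, so the shrinking must land exactly on these. This is where the freedom in Corollary~\ref{cor:translations_from_another_type} is used: each time a coordinate $k_2$ (resp.\ $k_1$) is frozen, the corollary allows an arbitrary value of $a_{k_2}$ (resp.\ $a_{k_1}$). I choose it equal to $a^{(i_1,i_2)}_{k_2}$ (resp.\ $a^{(i_1,i_2)}_{k_1}$), and must check that the indices in the corollary's statement line up with this choice. Doing so also shows that $V_{f_{max}}$ is independent of the choice of constants, as announced before the theorem.

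Finally, to justify the $\oplus$ in Definition~\ref{def:diagonals}, I would compare dimensions. Diagonal $f$ contains $f$ pairs, each summand of dimension $2^{m-f-1}$, so the sum of the summand dimensions is $\sum_{f\leq f_{max}} f\,2^{m-f-1}$. I would then show this equals the dimension of the right-hand side. One route is to exhibit that the monomial bases of the summands are distinct monomials modulo the fixed linear factors, via a triangularity argument on the leading monomials. The other is to apply Lemma~\ref{lem:subspace_two_types}-type counting inductively along the same shrinking procedure, where each step adds exactly the new dimension contributed by one fixed coordinate.
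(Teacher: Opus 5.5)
Your proposal is correct and follows essentially the same route as the paper. Both argue by induction over the diagonals: the space $\mathbb{F}_2[X_k, k \in \overline{I}] = V_{1, i_1, i_2, m}$ is shrunk to $V_{i_1, i_1, i_2, i_2, \, (a_k^{(i_1,i_2)})}$ using the two statements of Corollary~\ref{cor:translations_from_another_type} (first lowering $k_2$, then raising $k_1$), and the side terms of strictly larger gap are absorbed into $V_{f_{max}}$ by the induction hypothesis. Beyond the paper, you spell out the trivial reverse inclusion and the matching of the constants $a_k$, and you flag the directness of the sum, which the paper leaves implicit. For directness, the quickest route is to take $f_{max} = m-1$: there the right-hand side is $RM_m(m-2)$, of dimension $2^m - m - 1 = \sum_{f=1}^{m-1} f \, 2^{m-f-1}$, so the whole family of summands, and hence every subfamily, is independent.
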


\begin{proof}
The proof is by induction on $f_{max}$. The base case is given by Lemma~\ref{lem:6_cube_1_type}.

Assume that the result is proven for $f_{max}$, with $f_{max} \leq m-2$. Let $I = \{i_1, i_2\} \subset \{1, \dots, m\}$ with $i_2 = i_1 + m - (f_{max}+1)$.

We want to prove that $\mathbb{F}_2[X_k, k \in \overline{I}]$ (i.e. $V_{1, i_1, i_2, m}$, i.e. the space generated by all squares of type $I$) is a subset of $V_{f_{max}+1}$ and will use the fact that $V_{i_1, i_1, i_2, i_2, \, (a_k^{(i_1,i_2)})}$ (i.e. a space generated by some squares of type $I$) is a subset of $V_{f_{max}+1}$ (by definition of $V_{f_{max}+1}$). \\

The first statement of Corollary~\ref{cor:translations_from_another_type} applied to $I$, $k_1=1$ and $k_2$ that ranges from $m$ down to $i_2+1$ gives
$$\forall k_2 \in \{i_2+1, \dots, m\}, V_{1, i_1, i_2, k_2, \, (a_k^{(i_1,i_2)})} \subset V_{1, i_1, i_2, k_2-1, \, (a_k^{(i_1,i_2)})} + \mathbb{F}_2[X_k, k \in \overline{\{i_1, k_2\}}].$$
Therefore
$$V_{1, i_1, i_2, m} \subset V_{1, i_1, i_2, i_2, \, (a_k^{(i_1,i_2)})} + \sum_{k_2 \in \{i_2+1, \dots, m\}} \mathbb{F}_2[X_k, k \in \overline{\{i_1, k_2\}}]$$

The second statement of Corollary~\ref{cor:translations_from_another_type} applied to $I$, $k_1$ that ranges from $1$ to $i_1-1$ and $k_2=i_2$ gives
 $$\forall k_1 \in \{1, \dots, i_1-1\}, V_{k_1, i_1, i_2, i_2, \, (a_k^{(i_1,i_2)})} \subset V_{k_1+1, i_1, i_2, i_2, \, (a_k^{(i_1,i_2)})} + \mathbb{F}_2[X_k, k \in \overline{\{k_1, i_2\}}].$$
Therefore
$$V_{1, i_1, i_2, k_2, \, (a_k^{(i_1,i_2)})} \subset V_{i_1, i_1, i_2, k_2, \, (a_k^{(i_1,i_2)})} + \sum_{k_1 \in \{1, \dots, i_1-1\}} \mathbb{F}_2[X_k, k \in \overline{\{k_1, i_2\}}].$$

Combining these two results gives
{\small
$$\mathbb{F}_2[X_k, k \in \overline{I}] \subset V_{i_1, i_1, i_2, k_2, \, (a_k^{(i_1,i_2)})} + \sum_{k_1 \in \{1, \dots, i_1-1\}} \mathbb{F}_2[X_k, k \in \overline{\{k_1, i_2\}}] + \sum_{k_2 \in \{i_2+1, \dots, m\}} \mathbb{F}_2[X_k, k \in \overline{\{i_1, k_2\}}].$$
}

Since for every $k_1 \leq i_1 - 1, \,\, i_2 - k_1 \geq m - f_{max}$ and for every $k_2 \leq i_2 + 1, \,\, k_2 - i_1 \geq m - f_{max}$, the induction hypothesis gives
$$\mathbb{F}_2[X_k, k \in \overline{I}] \subset V_{i_1, i_1, i_2, k_2, \, (a_k^{(i_1,i_2)})} + V_{f_{max}}.$$

Therefore
$$\mathbb{F}_2[X_k, k \in \overline{I}] \subset V_{f_{max}+1},$$
which proves the result for $f_{max}+1$.
\end{proof}

\section{Planar layout for the Z stabilizer group generators of $QRM_4(1,1)$}
\label{QRM_4_1_1}

The (small) unfolded code described in \cite{ruiz2025unfolded},~Figure~5 is $QRM_4(1,1)$ with one puncture (see Appendix~\ref{sec:puncturing} for the definition of a puncture). The planar layout of a set of generators of the Z stabilizer group of $QRM_4(1,1)$ is depicted in Figure~\ref{fig:layout_small_unfolded}.

\begin{figure}[H]
    \centering
    \includegraphics{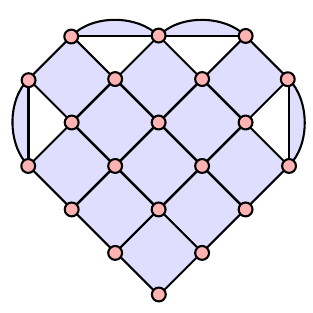}
    \caption{Planar layout of a set of generators of the Z stabilizer group of $QRM_4(1,1)$.  Each vertex is a physical qubit. Each square is a Z stabilizer of weight 4. Each half-round shape is a Z stabilizer of weight 2. Half-round shapes are used to make every weight 4 stabilizer into a square. They are not described in the main text.}
    \label{fig:layout_small_unfolded}
\end{figure}

In this section, we rederive the planar layout of $QRM_4(1,1)$ by grouping two coordinates of the $4$-cube in one direction (the North-West/South-East direction in Figure~\ref{fig:layout_small_unfolded} and \ref{fig:product_of_Gray_code_squares}) and grouping the two other coordinates in anoter direction (the North-East/South-West direction in Figure~\ref{fig:layout_small_unfolded} and \ref{fig:product_of_Gray_code_squares}). In \cite{ruiz2025unfolded}, this planar layout was found with a SAT solver. The viewpoint of grouping coordinates of an $m$-cube to flatten it allows us to derive planar layouts of larger Quantum Reed Muller codes in Sections~\ref{sec:QRM_6_1_1} and \ref{sec:big_unfolded_code} and 3D layouts of larger Quantum Reed Muller codes in Sections~\ref{sec:QRM_6_1_2} and \ref{sec:QRM_7_2_2}.

\begin{figure}[H]
    \centering
    \includegraphics{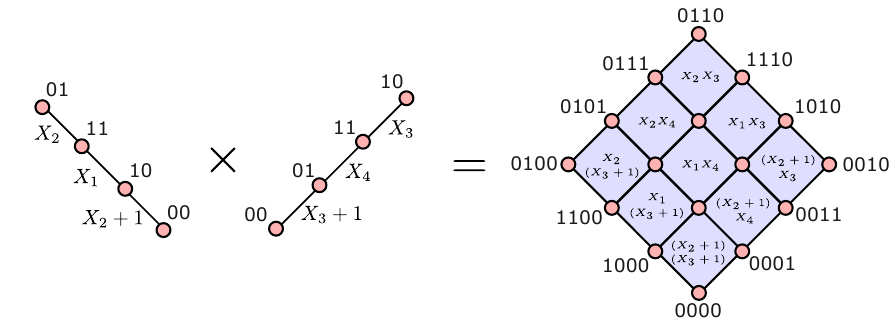}
    \caption{The cartesian product of a basis for edges in the $2$-cube with coordinate indices $1$ and $2$ by a basis for edges in the $2$-cube with coordinate indices $3$ and $4$ yields a basis for the space generated by squares of type $\{i, j\}$ with $i \in \{1,2\}$ and $j \in \{3,4\}$ in the $4$-cube. As depicted in green on Table~\ref{tab:cartesian_product_in_4_cube}, the basis is made of $4$ squares of type $\{1,4\}$, 2 squares of type $\{1,3\}$, 2 squares of type $\{2,4\}$ and 1 square of type $\{2,3\}$. To generate the full $Z$ stabilizer group of $QRM_{4}(1,1)$, Table~\ref{tab:cartesian_product_in_4_cube} shows that a square of type $\{1,2\}$ and a square of type $\{3,4\}$ must be appended to this basis. The square of type $\{3,4\}$ is the upper-left square of Figure~\ref{fig:layout_small_unfolded}. It corresponds to the polynomial $(X_1+1)X_2$. The square of type $\{1,2\}$ is the upper-right square of Figure~\ref{fig:layout_small_unfolded}. It corresponds to the polynomial $X_3(X_4+1)$.}
    \label{fig:product_of_Gray_code_squares}
\end{figure}

\begin{lemma}
The Gray code family $(Y+1, X, Y)$ is a basis of $(\mathbb{F}_2[X, Y])_{\leq 1}$, the space of polynomials of $\mathbb{F}_2[X, Y]$ of degree at most $1$.
\label{lem:Gray_code_square}
\end{lemma}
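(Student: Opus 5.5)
The statement is a three-dimensional linear algebra fact over $\mathbb{F}_2$, and I will prove it by a dimension count plus a spanning argument. Following the paper's convention, polynomials are identified with their evaluation functions on $\mathbb{F}_2^2$, so $X^2 = X$ and $Y^2 = Y$. The space $(\mathbb{F}_2[X, Y])_{\leq 1}$ is then spanned by the monomials $1, X, Y$. These monomials are linearly independent as functions: evaluating a combination $\alpha + \beta X + \gamma Y$ at $(0,0)$, $(1,0)$ and $(0,1)$ forces $\alpha = \beta = \gamma = 0$. Hence the space has dimension $3$. Since the Gray code family $(Y+1, X, Y)$ has exactly three elements, each of degree at most $1$, it suffices to show that the family spans the space; alternatively, one could show it is linearly independent.

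For spanning, I will express each monomial in terms of the family. Two of them are already members, namely $X$ and $Y$. The constant is recovered as $1 = (Y+1) + Y$. So $1, X, Y$ all lie in the span of the family, the span is the whole space, and three spanning vectors in a space of dimension $3$ form a basis.

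As a cross-check that matches the geometric reading used in Figure~\ref{fig:product_of_Gray_code_squares}, I can verify linear independence directly by evaluation. Suppose $\lambda_1 (Y+1) + \lambda_2 X + \lambda_3 Y = 0$. Evaluating at $(0,0)$ gives $\lambda_1 = 0$, evaluating at $(1,0)$ then gives $\lambda_2 = 0$, and evaluating at $(0,1)$ gives $\lambda_3 = 0$. Geometrically, the three polynomials are three edges of the square that are traversed consecutively, as in a Gray code. The fourth edge, $X+1$, is their sum, which is exactly the translation phenomenon of Corollary~\ref{cor:translations_from_another_type} in dimension $2$.

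I do not expect any real obstacle. The only point that needs care is to justify the dimension $3$ in the quotient-ring convention of the paper. The evaluation argument above handles this, and so does Lemma~\ref{lem:subspace_two_types} applied with $m=2$, $d=1$, $I=\{1\}$, $J=\{2\}$, which gives dimension $2^{1} + 2^{0} = 3$.
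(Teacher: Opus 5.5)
Your proof is correct and is essentially the paper's argument. The paper proves freeness exactly as in your cross-check, evaluating first at $00$ and then at $10$ in Gray-code order, and combines this with $\dim (\mathbb{F}_2[X, Y])_{\leq 1} = 3$; your leading spanning argument via $1 = (Y+1) + Y$ is an interchangeable way to close the same dimension count.
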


\begin{proof}
Let $\mathcal{B} = (Y+1, X, Y)$. The proof consists in showing that $\mathcal{B}$ is free and that its cardinal is the dimension of $(\mathbb{F}_2[X, Y])_{\leq 1}$.
\begin{itemize}
\item We prove that $\mathcal{B}$ is free by evaluating its elements succesively on vertices of the $2$-cube in the Gray code order depicted in Figure~\ref{fig:Gray_code_square}. Indeed, let $a, b, c \in \mathbb{F}_2$ be such that $a (Y+1) + b X + c Y = 0$. Evaluting on $00$ shows that $a=0$. Then evaluating on $10$ shows that $b=0$. Therefore $c=0$ and thus $\mathcal{B}$ is a free family.
\item $\dim (\mathbb{F}_2[X, Y])_{\leq 1} = 3 = |\mathcal{B}|$.
\end{itemize}
Therefore $\mathcal{B}$ is a basis of $(\mathbb{F}_2[X, Y])_{\leq 1}$.
\end{proof}

\begin{figure}[H]
    \centering
    \includegraphics{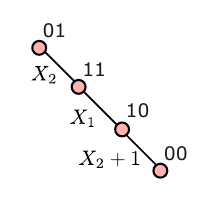}
    \caption{The $4$ vertices of a $2$-cube (i.e. a square) are grouped on a single axis (North-West/South-East) with a Gray code ordering. Each vertex is labeled $x_1 x_2$. Edges are labeled with the (degree 1) polynomial $P$ such that $P(v)=1$ if and only if $v$ is one of the two vertices incident to this edge. Lemma~\ref{lem:Gray_code_square} shows that the 3 edges of the Figure form a basis of $(\mathbb{F}_2[X_1, X_2])_{\leq 1}$.}
    \label{fig:Gray_code_square}
\end{figure}

\begin{table}[H]
\centering
\begin{tblr}{
colspec = |p{2cm}||p{0.5cm}|p{2cm}|p{2cm}|p{2.3cm}|,
cell{2}{4} = {green6},
cell{2}{5} = {green6},
cell{3}{4} = {green6},
cell{3}{5} = {green6},
}
\hline
coordinate indices & 1 & 2 & 3 & 4  \\
\hline \hline
1 &    & $X_3 (X_4+1)$ & $\mathbb{F}_2[X_2]X_4$ & $\mathbb{F}_2[X_2, X_3]$ \\
\hline
2 &    &    & $X_1 X_4$ & $X_1 \mathbb{F}_2[X_3]$ \\
\hline
3 &    &    &    & $(X_1+1) X_2$ \\
\hline
4 &    &    &    &    \\
\hline
\end{tblr}
\caption{Rows and columns are indexed by coordinate indices of the $4$-cube. Starting from the top right corner and going down and left diagonal by diagonal, the entry in $(i,j)$ shows the polynomial subspace that is added (as a direct sum) to the Z stabilizer group when adding squares of type $\{i, j\}$ to the set of generators.}
\label{tab:subspaces_by_type_QRM_4_1_1}
\end{table}

Theorem~\ref{thm:first_diagonals} states that the space generated by the direct sum over all entries of Table~\ref{tab:subspaces_by_type_QRM_4_1_1} is the Z stabilizer group of $QRM_4(1,1)$. Table~\ref{tab:cartesian_product_in_4_cube} gives the dimensions of the spaces of Table~\ref{tab:subspaces_by_type_QRM_4_1_1}.

\begin{table}[H]
\centering
\begin{tblr}{
colspec = |p{2cm}||p{1cm}|p{1cm}|p{1cm}|p{1cm}|,
cell{2}{4} = {green6},
cell{2}{5} = {green6},
cell{3}{4} = {green6},
cell{3}{5} = {green6},
}
\hline
coordinate indices & 1 & 2 & 3 & 4  \\
\hline \hline
1 &    & 1 & 2 & 4 \\
\hline
2 &    &    & 1 & 2 \\
\hline
3 &    &    &    & 1 \\
\hline
4 &    &    &    &    \\
\hline
\end{tblr}
\caption{Rows and columns are indexed by coordinate indices of the $4$-cube. The entry in $(i,j)$ is the number of elements of type $\{i, j\}$ in the chosen basis of the Z stabilizer group of $QRM_4(1,1)$. The four entries with a green background correspond to the cartesian product of spaces of polynomials of degree at most $1$: $(\mathbb{F}_2[X_1, X_2])_{\leq 1} \times (\mathbb{F}_2[X_3, X_4])_{\leq 1}$. The two other entries correspond respectively to a degree $2$ polynomial in $\mathbb{F}_2[X_3, X_4]$ (i.e. a square of type $\{1, 2\}$) and to a degree $2$ polynomial in $\mathbb{F}_2[X_1, X_1]$ (i.e. a square of type $\{3, 4\}$). The Z stabilizer group of $QRM_4(1,1)$ has dimension $11$ which corresponds to the sum of the entries of the Table.}
\label{tab:cartesian_product_in_4_cube}
\end{table}

The four entries in green in Table~\ref{tab:subspaces_by_type_QRM_4_1_1}~(and~\ref{tab:cartesian_product_in_4_cube}) are cartesian products of a subspace of $\mathbb{F}_2[X_1, X_2]$ and a subspace of $\mathbb{F}_2[X_3, X_4]$. We rely on this product structure to obtain a planar layout for the $4$-cube. \\

More precisely, coordinates $1$ and $2$ are unfolded on one axis of the planar layout and coordinates $3$ and $4$ are unfolded on the other axis of the planar layout. To unfold two coordinates on a single axis, the vertices are ordered with the Gray code shown in Figure~\ref{fig:Gray_code_square}. Thus, edges of the planar layout correspond to edges of the cube.

\begin{figure}[H]
    \centering
    % 3 TikZ subfigures
    \subfloat[$4$ squares of type $\{1, 4\}$.]{%
\resizebox{0.3\textwidth}{!}{%
\begin{tikzpicture}
	\begin{pgfonlayer}{nodelayer}
		\node [style=vertex set] (0) at (-2, 2) {1};
		\node [style=vertex set] (1) at (-2, -2) {2};
		\node [style=vertex set] (4) at (2, 2) {3};
		\node [style=vertex set] (5) at (2, -2) {4};
		\node [style=none] (6) at (1.25, -0.75) {\tiny \textit{4}};
	\end{pgfonlayer}
	\begin{pgfonlayer}{edgelayer}
		\draw (0) to (5);
	\end{pgfonlayer}
\end{tikzpicture}
}
    }\hspace{1em}
    \subfloat[$4$ squares of type $\{1, 4\}$, $2$ squares of type $\{1, 3\}$ and $2$ squares of type $\{2, 4\}$.  In total, $8$ squares of $3$ different types.]{%
\resizebox{0.3\textwidth}{!}{%
\begin{tikzpicture}
	\begin{pgfonlayer}{nodelayer}
		\node [style=vertex set] (0) at (-2, 2) {1};
		\node [style=vertex set] (1) at (-2, -2) {2};
		\node [style=vertex set] (4) at (2, 2) {3};
		\node [style=vertex set] (5) at (2, -2) {4};
		\node [style=none] (6) at (1.25, -0.75) {\tiny \textit{4}};
		\node [style=none] (7) at (-0.25, 1.75) {\tiny \textit{2}};
		\node [style=none] (8) at (-0.25, -1.75) {\tiny \textit{2}};
	\end{pgfonlayer}
	\begin{pgfonlayer}{edgelayer}
		\draw (0) to (4);
		\draw (0) to (5);
		\draw (1) to (5);
	\end{pgfonlayer}
\end{tikzpicture}
}
    }\hspace{1em}
    \subfloat[$4$ squares of type $\{1, 4\}$, $2$ squares of type $\{1, 3\}$, $2$ squares of type $\{2, 4\}$, $1$ square of type $\{2, 3\}$, $1$ square of type $\{1, 2\}$ and  $1$ square of type $\{3, 4\}$.  In total, $11$ squares of $6$ different types.]{%
\resizebox{0.3\textwidth}{!}{%
\begin{tikzpicture}
	\begin{pgfonlayer}{nodelayer}
		\node [style=vertex set] (0) at (-2, 2) {1};
		\node [style=vertex set] (1) at (-2, -2) {2};
		\node [style=vertex set] (4) at (2, 2) {3};
		\node [style=vertex set] (5) at (2, -2) {4};
		\node [style=none] (6) at (1.25, -0.75) {\tiny \textit{4}};
		\node [style=none] (7) at (-0.25, 1.75) {\tiny \textit{2}};
		\node [style=none] (8) at (-0.25, -1.75) {\tiny \textit{2}};
		\node [style=none] (9) at (0.75, 1.25) {\tiny \textit{1}};
		\node [style=none] (10) at (1.75, 0.25) {\tiny \textit{1}};
		\node [style=none] (11) at (-1.75, 0) {\tiny \textit{1}};
	\end{pgfonlayer}
	\begin{pgfonlayer}{edgelayer}
		\draw (0) to (4);
		\draw (0) to (5);
		\draw (1) to (5);
		\draw (1) to (4);
		\draw (0) to (1);
		\draw (4) to (5);
	\end{pgfonlayer}
\end{tikzpicture}
}
    }
    \caption{The $4$ nodes on each subfigure represent the $4$ coordinate indices of the $4$-cube. A type of square is characterized by $2$ coordinate indices and corresponds therefore to an edge on this Figure. There are $4$ squares of each type. It is sufficient to add a lower power of $2$ (i.e. $2$ or $1$) number of squares of each type to obtain a basis of $RM(4, 1)$. The edge label is the number of squares of a given type in the basis of $RM(4, 1)$ depicted in Figure~\ref{fig:product_of_Gray_code_squares}. Figure (a) corresponds to the top right entry of Table~\ref{tab:subspaces_by_type_QRM_4_1_1}. Figure (b) corresponds to the top right entry of Table~\ref{tab:subspaces_by_type_QRM_4_1_1} and the first diagonal beneath it. Figure (c) corresponds all entries of Table~\ref{tab:subspaces_by_type_QRM_4_1_1}.}
    \label{fig:tikz_grid_small_unfolded}
\end{figure}

\section{Planar layout for the Z stabilizer group generators of $QRM_6(1,1)$}
\label{sec:QRM_6_1_1}

In $QRM_6(1,1)$, the Z stabilizer group is generated by squares (since $r=1$) of the $6$-cube (since $m=6$). We show in this section that the Z stabilizer group of $QRM_6(1,1)$ admits the set of generators of Figure~\ref{fig:layout_QRM_6_1_1}.

\begin{figure}[H]
    \centering
    \includegraphics{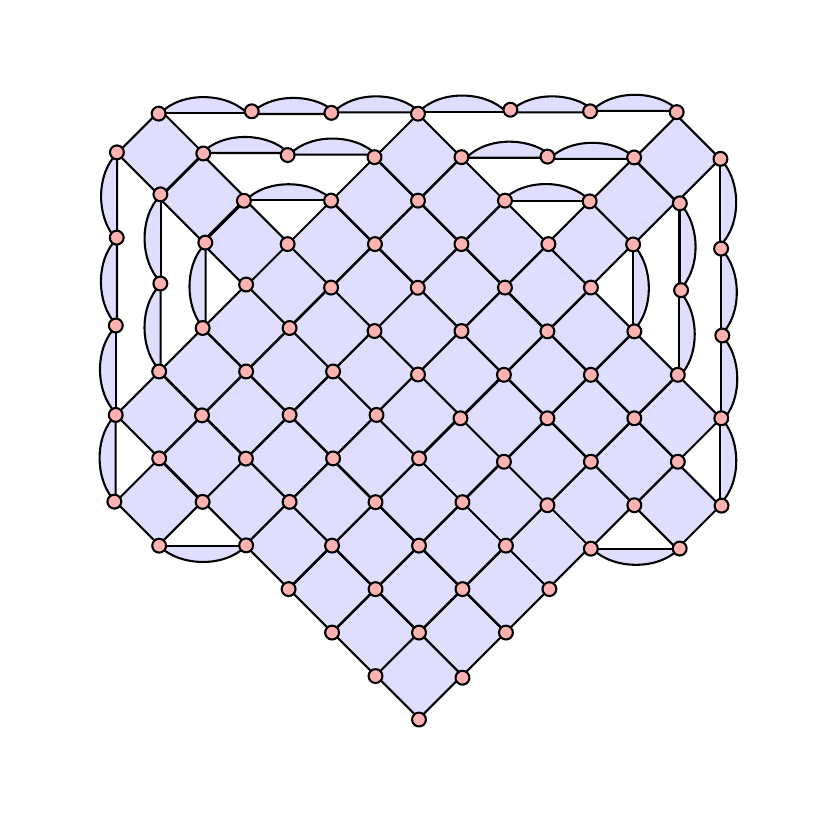}
    \caption{Planar layout of a set of generators of the Z stabilizer group of $QRM_6(1,1)$. Each vertex is a physical qubit. Each square is a Z stabilizer of weight 4. Each half-round shape is a Z stabilizer of weight 2. Half-round shapes are used to make every weight 4 stabilizer into a geometrical square. They are not described in the main text.}
    \label{fig:layout_QRM_6_1_1}
\end{figure}

To find a planar layout for a basis of the Z stabilizer group of $QRM_6(1,1)$ made of squares of the $6$-cube, we follow the strategy of Section~\ref{QRM_4_1_1} and partition coordinate indices of the $6$-cube into two groups: $\{1,2,3\}$ and $\{4,5,6\}$. Recall from Definition~\ref{def:subcube} that squares of the $6$-cube correspond to degree $4$ polynomials of $\mathbb{F}_2[X_1, X_2, X_3, X_4, X_5, X_6]$.

\begin{lemma}
The Gray code family
$$\left( (Y+1)(Z+1), X(Z+1), Y(Z+1), (X+1)Y, (Y+1)Z, XZ, YZ \right)$$
is a basis of $(\mathbb{F}_2[X, Y, Z])_{\leq 2}$, the space of polynomials of $\mathbb{F}_2[X, Y, Z]$ of degree at most $2$.
\label{lem:Gray_code_cube}
\end{lemma}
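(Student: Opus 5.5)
The plan is to mirror the proof of Lemma~\ref{lem:Gray_code_square}. I will show that the seven given polynomials form a free family, and then compare the cardinality with the dimension. For the dimension, the space $(\mathbb{F}_2[X,Y,Z])_{\leq 2}$ (with $X^2=X$ etc.) is spanned by the multilinear monomials of degree at most $2$: $1, X, Y, Z, XY, XZ, YZ$. These are linearly independent as evaluation functions on $\mathbb{F}_2^3$, so the dimension is $1+3+3 = 7$. This matches the cardinality of the family, and each element clearly has degree at most $2$. Freeness therefore suffices.

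For freeness, I will use a triangularity argument based on evaluating at the vertices of the $3$-cube in a Gray code order. Each family member is an edge of the $3$-cube (a degree-$2$ product of affine factors), so it is supported on exactly two vertices. Label vertices $xyz$ and take the Gray order
\[
000,\ 100,\ 110,\ 010,\ 011,\ 111,\ 101,\ 001 .
\]
The supports are as follows:
\begin{itemize}
\item $(Y+1)(Z+1)$ on $\{000,100\}$;
\item $X(Z+1)$ on $\{100,110\}$;
\item $Y(Z+1)$ on $\{110,010\}$;
\item $(X+1)Y$ on $\{010,011\}$;
\item $(Y+1)Z$ on $\{101,001\}$;
\item $XZ$ on $\{111,101\}$;
\item $YZ$ on $\{011,111\}$.
\end{itemize}
Suppose a linear combination $\sum_i c_i P_i$ vanishes identically. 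I will peel off coefficients one vertex at a time. Evaluating at $000$ gives $c_1=0$. At $100$ this leaves $c_2=0$. At $110$ we get $c_3=0$, and at $010$ we get $c_4=0$. Then $011$ is touched only by $(X+1)Y$ and $YZ$, which gives $c_7=0$. Next, $111$ is touched by $YZ$ and $XZ$, which gives $c_6=0$. Finally, $101$ gives $c_5=0$. Hence all coefficients vanish.

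The only delicate step is computing the supports correctly. In particular, I must check that the seven edges form a Hamiltonian path in Gray code order, omitting only the closing edge $\{001,000\}$. This path structure is what makes each successive evaluation isolate one new coefficient. Once the support table is verified, the argument closes exactly as in Lemma~\ref{lem:Gray_code_square}.
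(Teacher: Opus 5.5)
Your proof is correct and follows the same route as the paper: you count the dimension as $7$, then prove freeness by evaluating successively at the vertices of the $3$-cube in Gray code order. Your support table and peeling order are also more accurate than the paper's write-up. In your order, $011$ kills the coefficient of $YZ$, then $111$ that of $XZ$, then $101$ that of $(Y+1)Z$, whereas the paper claims that evaluating at $011$ gives the coefficient of $(Y+1)Z$, which vanishes there.
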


\begin{proof}
Let $\mathcal{B} = \left( (Y+1)(Z+1), X(Z+1), Y(Z+1), (X+1)Y, (Y+1)Z, XZ, YZ \right)$. The proof consists in showing that $\mathcal{B}$ is free and that its cardinal is the dimension of $(\mathbb{F}_2[X, Y, Z])_{\leq 2}$.
\begin{itemize}
\item We prove that $\mathcal{B}$ is free by evaluating its elements succesively on vertices of the $3$-cube in the Gray code order depicted in Figure~\ref{fig:Gray_code_cube_123}. Indeed, let $a, b, c, d, e, f, g \in \mathbb{F}_2$ be such that $a (Y+1)(Z+1) + b X(Z+1) + c Y(Z+1) + d (X+1)Y + e (Y+1)Z + f XZ + g YZ = 0$. Evaluting on $000$ shows that $a=0$. Then evaluating on $100$ shows that $b=0$. Then evaluating on $110$ shows that $c=0$. Then evaluating on $010$ shows that $d=0$. Then evaluating on $011$ shows that $e=0$. Then evaluating on $111$ shows that $f=0$. Therefore $g=0$ and thus $\mathcal{B}$ is a free family.
\item $\dim (\mathbb{F}_2[X, Y, Z])_{\leq 2} = 7 = |\mathcal{B}|$.
\end{itemize}
Therefore $\mathcal{B}$ is a basis of $(\mathbb{F}_2[X, Y, Z])_{\leq 2}$.
\end{proof}

Applying Lemma~\ref{lem:Gray_code_cube} to polynomials in $X_1$, $X_2$ and $X_3$ gives that $(\mathbb{F}_2[X_1, X_2, X_3])_{\leq 2})$ admits as a basis the following $7$ polynomials corresponding to the $7$ edges of Figure~\ref{fig:Gray_code_cube_123}:
\begin{align}
&(\mathbb{F}_2[X_1, X_2, X_3])_{\leq 2}) \notag \\
= &\Span(\{(X_2+1)(X_3+1) \}) \notag \\
\oplus &\Span(\{X_1(X_3+1) \}) \notag \\
\oplus &\Span(\{X_2(X_3+1) \}) \notag \\
\oplus &\Span(\{(X_1+1)X_2 \}) \notag \\
\oplus &\Span(\{X_2 X_3 \}) \notag \\
\oplus &\Span(\{X_1 X_3 \}) \notag \\
\oplus &\Span(\{(X_2+1) X_3 \})
\label{eq:Gray_code_cube_123}
\end{align}

\begin{figure}[H]
    \centering
    \includegraphics{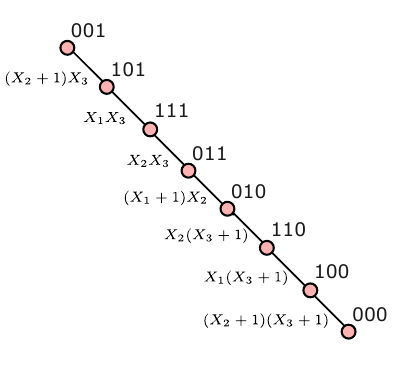}
    \caption{Gray code ordering for edges in the $3$-cube with coordinate indices $1$, $2$ and $3$. For instance the polynomial $X_1 (X_3+1)$ evaluates to $1$ on vertex $v = x_1 x_2 x_3$ of the $3$-cube if and only if $v=100$ or $v=110$ (since the term $X_1$ enforces that $x_1=1$ and the term $X_3+1$ enforces that $x_3=0$).}
    \label{fig:Gray_code_cube_123}
\end{figure}

Similarly, applying Lemma~\ref{lem:Gray_code_cube} to polynomials in $X_4$, $X_5$ and $X_6$ gives that $(\mathbb{F}_2[X_4, X_5, X_6])_{\leq 2})$ admits as a basis the following $7$ polynomials corresponding to the $7$ edges of Figure~\ref{fig:Gray_code_cube_123}:
\begin{align}
&(\mathbb{F}_2[X_4, X_5, X_6])_{\leq 2}) \notag \\
= &\Span(\{(X_4+1)(X_5+1) \}) \notag \\
\oplus &\Span(\{(X_4+1) X_6 \}) \notag \\
\oplus &\Span(\{(X_4+1) X_6 \}) \notag \\
\oplus &\Span(\{X_5 (X_6+1) \}) \notag \\
\oplus &\Span(\{X_4 X_5 \}) \notag \\
\oplus &\Span(\{X_4 X_6 \}) \notag \\
\oplus &\Span(\{X_4 (X_5+1) \})
\label{eq:Gray_code_cube_456}
\end{align}

\begin{figure}[H]
    \centering
    \includegraphics{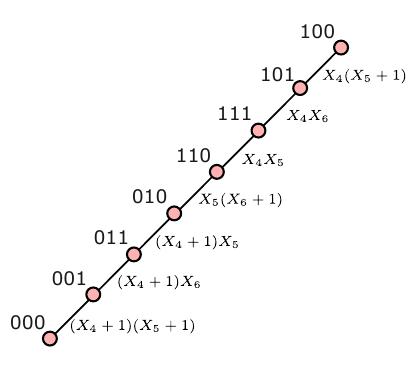}
    \caption{Gray code ordering for edges in the $3$-cube with coordinate indices $4$, $5$ and $6$. For instance the polynomial $X_4 X_5$ evaluates to $1$ on vertex $v = x_4 x_5 x_6$ of the $3$-cube if and only if $v=110$ or $v=111$ (since the term $X_4$ enforces that $x_4=1$ and the term $X_5$ enforces that $x_5=1$).}
    \label{fig:Gray_code_cube_456}
\end{figure}

Therefore, $(\mathbb{F}_2[X_1, X_2, X_3])_{\leq 2} \times (\mathbb{F}_2[X_4, X_5, X_6])_{\leq 2}$ admits as a basis the $49$ degree $4$ polynomials that are the product of a degree $2$ polynomial of Equation \ref{eq:Gray_code_cube_123} by a degree $2$ polynomial of Equation \ref{eq:Gray_code_cube_456}. These $49$ polynomials correspond to the $7 \times 7$ grid of squares of Figure~\ref{fig:layout_QRM_6_1_1}. \\

With the above Gray code orderings, the Z stabilizer group generated by the weight 4 stabilizers of Figure~\ref{fig:layout_QRM_6_1_1} therefore is equal to the following direct sum

\begin{align}
&(\mathbb{F}_2[X_1, X_2, X_3])_{\leq 2} \times (\mathbb{F}_2[X_4, X_5, X_6])_{\leq 2} \notag \\
 \notag \\
\oplus \, &\Span(\{X_2 X_4 (X_5+1) (X_6+1) \}) \notag \\
\oplus \, &\Span(\{X_1 X_4 (X_5+1) (X_6+1) \}) \notag \\
\oplus \, &\Span(\{(X_2+1) X_4 (X_5+1) (X_6+1) \}) \notag \\
\oplus \, &\Span(\{X_3 (X_4+1) (X_5+1) (X_6+1) \}) \notag  \\
 \notag \\
\oplus \, &\Span(\{(X_1+1) (X_2+1) X_3 X_5\}) \notag \\
\oplus \, &\Span(\{(X_1+1) (X_2+1) X_3 X_6\}) \notag \\
\oplus \, &\Span(\{(X_1+1) (X_2+1) X_3 (X_5+1)\}) \notag \\
\oplus \, &\Span(\{(X_1+1) (X_2+1) (X_3+1) X_4\})
\label{polynomial_basis_6_variables_degree_at_most_4}
\end{align}

We show that the space described in Equation~\ref{polynomial_basis_6_variables_degree_at_most_4} is the Z stabilizer group of $QRM_6(1,1)$. Indeed to generate the (dimension 57) Z stabilizer group of $QRM_6(1,1)$, Theorem~\ref{thm:first_diagonals} shows that it is sufficient to append to the basis of $(\mathbb{F}_2[X_1, X_2, X_3])_{\leq 2} \times (\mathbb{F}_2[X_4, X_5, X_6])_{\leq 2}$:
\begin{itemize}
\item $2$ polynomials in $\mathbb{F}_2[X_2, X_4, X_5, X_6]$ (corresponding to squares of type $\{1, 3\}$)
\item $1$ polynomial in $\mathbb{F}_2[X_1, X_4, X_5, X_6]$ (corresponding to a square of type $\{2, 3\}$)
\item $1$ polynomial in $\mathbb{F}_2[X_3, X_4, X_5, X_6]$ (corresponding to a square of type $\{1, 2\}$)
\item $2$ polynomials in $\mathbb{F}_2[X_1, X_2, X_3, X_5]$ (corresponding to squares of type $\{4, 6\}$)
\item $1$ polynomial in $\mathbb{F}_2[X_1, X_2, X_3, X_4]$ (corresponding to a square of type $\{5, 6\}$)
\item $1$ polynomial in $\mathbb{F}_2[X_1, X_2, X_3, X_6]$ (corresponding to a square of type $\{4, 5\}$).
\end{itemize}
\vspace{10pt}
The $3$ polynomials that correspond to the $3$ upper-left squares of Figure~\ref{fig:layout_QRM_6_1_1} are
\begin{align*}
                    &X_2 X_4 (X_5+1) (X_6+1), \\
                    &X_1 X_4 (X_5+1) (X_6+1)  \\
\text{and   } &(X_2+1) X_4 (X_5+1) (X_6+1).
\end{align*}
The $3$ polynomials that correspond to the $3$ upper-right squares of Figure~\ref{fig:layout_QRM_6_1_1} are
\begin{align*}
                    &(X_1+1) (X_2+1) X_3 X_5, \\
                    &(X_1+1) (X_2+1) X_3 X_6 \\
\text{and   }  &(X_1+1) (X_2+1) X_3 (X_5+1).
\end{align*}
The polynomial that corresponds to the lower-left square of Figure~\ref{fig:layout_QRM_6_1_1} is
$$X_3 (X_4+1) (X_5+1) (X_6+1).$$
The polynomial that corresponds to the lower-right square of Figure~\ref{fig:layout_QRM_6_1_1} is
$$(X_1+1) (X_2+1) (X_3+1) X_4.$$

\begin{table}[H]
\centering
\begin{tblr}{
colspec = |p{0.1cm}||p{0.1cm}|p{2.3cm}|p{2.4cm}|p{1.8cm}|p{2.0cm}|p{2.5cm}|,
}
\hline
   & 1 & 2 & 3 & 4 & 5 & 6  \\
\hline \hline
1&
&\scriptsize \makecell{$X_3 (X_4+1)$                                \\ $(X_5+1) (X_6+1)$}
&\scriptsize \makecell{$\mathbb{F}_2[X_2] X_4$                \\ $(X_5+1) (X_6+1)$}
&\scriptsize \makecell{$\mathbb{F}_2[X_2, X_3]$               \\ $X_5 (X_6+1)$}
&\scriptsize \makecell{$\mathbb{F}_2[X_2, X_3, X_4]$        \\ $X_6$}
&\scriptsize \makecell{$\mathbb{F}_2[X_2, X_3, X_4, X_5]$ \\  } \\
\hline
2&&
&\scriptsize \makecell{$X_1 X_4$                              \\ $(X_5+1) (X_6+1)$}
&\scriptsize \makecell{$X_1 \mathbb{F}_2[X_3]$ \\ $X_5 (X_6+1)$}
&\scriptsize \makecell{$X_1 \mathbb{F}_2[X_3, X_4]$ \\ $X_6$}
&\scriptsize \makecell{$X_1$                                     \\ $\mathbb{F}_2[X_3, X_4, X_5]$}   \\
\hline
3&&&
&\scriptsize \makecell{$(X_1+1) X_2$ \\ $X_5 (X_6+1)$}
&\scriptsize \makecell{$(X_1+1) X_2$ \\ $\mathbb{F}_2[X_4] X_6$}
&\scriptsize \makecell{$(X_1+1) X_2$ \\ $\mathbb{F}_2[X_4, X_5]$}   \\
\hline
4&&&&
&\scriptsize \makecell{$(X_1+1)$ \\ $(X_2+1) X_3 X_6$}
&\scriptsize \makecell{$(X_1+1) (X_2+1)$ \\ $X_3 \mathbb{F}_2[X_5]$}  \\
\hline
5&&&&&
&\scriptsize \makecell{$(X_1+1) (X_2+1)$ \\ $(X_3+1) X_4$}   \\
\hline
6&&&&&&   \\
\hline
\end{tblr}
\caption{Rows and columns are indexed by coordinate indices of the $6$-cube. Starting from the top right corner and going down and left diagonal by diagonal, the entry in $(i,j)$ shows the polynomial subspace that is added (as a direct sum) when adding squares of type $\{i, j\}$.}
\label{tab:subspaces_by_type_QRM_6_1_1}
\end{table}

\begin{table}[H]
\centering
\begin{tblr}{
colspec = |p{2cm}||p{1cm}|p{1cm}|p{1cm}|p{1cm}|p{1cm}|p{1cm}|,
cell{2}{5} = {green6},
cell{2}{6} = {green6},
cell{2}{7} = {green6},
cell{3}{5} = {green6},
cell{3}{6} = {green6},
cell{3}{7} = {green6},
cell{4}{5} = {green6},
cell{4}{6} = {green6},
cell{4}{7} = {green6},
}
\hline
coordinate indices & 1 & 2 & 3 & 4 & 5 & 6  \\
\hline \hline
1 &    & 1 & 2 & 4 & 8 & 16 \\
\hline
2 &    &    & 1 & 2 & 4 & 8   \\
\hline
3 &    &    &    & 1 & 2 & 4   \\
\hline
4 &    &    &    &    & 1 & 2   \\
\hline
5 &    &    &    &    &    & 1   \\
\hline
6 &   &    &    &    &    &      \\
\hline
\end{tblr}
\caption{Rows and columns are indexed by coordinate indices of the $6$-cube. The entry in $(i,j)$ is the number of elements of type $\{i, j\}$ in the chosen basis of the Z stabilizer group of $QRM_6(1,1)$. The nine entries with a green background correspond to the cartesian product $(\mathbb{F}_2[X_1, X_2, X_3])_{\leq 2} \times (\mathbb{F}_2[X_4, X_5, X_6])_{\leq 2}$. The six other entries correspond to squares whose type is either a subset of $\{1, 2, 3\}$ or a subset of $\{4, 5, 6\})$. The Z stabilizer group has dimension $57$.}
\label{tab:cartesian_product_in_6_cube}
\end{table}

\begin{figure}[H]%
    \centering
\begin{tikzpicture}
	\begin{pgfonlayer}{nodelayer}
		\node [style=vertex set] (0) at (-2, 2) {1};
		\node [style=vertex set] (1) at (-2, 0) {2};
		\node [style=vertex set] (2) at (-2, -2) {3};
		\node [style=vertex set] (3) at (2, 2) {4};
		\node [style=vertex set] (4) at (2, 0) {5};
		\node [style=vertex set] (5) at (2, -2) {6};
		\node [style=none] (6) at (1.5, -1) {\tiny \textit{16}};
		\node [style=none] (7) at (1.5, 0.5) {\tiny \textit{8}};
		\node [style=none] (8) at (-1.5, -0.5) {\tiny \textit{8}};
		\node [style=none] (9) at (0, 2.25) {\tiny \textit{4}};
		\node [style=none] (10) at (-0.75, 0.25) {\tiny \textit{4}};
		\node [style=none] (11) at (0, -2.25) {\tiny \textit{4}};
		\node [style=none] (12) at (0.5, 1.5) {\tiny \textit{2}};
		\node [style=none] (13) at (-0.5, -1.5) {\tiny \textit{2}};
		\node [style=none] (14) at (-0.25, -0.5) {\tiny \textit{1}};
	\end{pgfonlayer}
	\begin{pgfonlayer}{edgelayer}
		\draw (0) to (4);
		\draw (0) to (3);
		\draw (0) to (5);
		\draw (1) to (3);
		\draw (1) to (4);
		\draw (1) to (5);
		\draw (2) to (5);
		\draw (2) to (4);
		\draw (2) to (3);
	\end{pgfonlayer}
\end{tikzpicture}
\caption{Taking the cartesian product of the $7$ edges that correspond to a basis of $(\mathbb{F}_2[X_1, X_2, X_3])_{\leq 2}$ by the $7$ edges that correspond to a basis of $(\mathbb{F}_2[X_4, X_5, X_6])_{\leq 2}$ gives the $49$ squares that correspond to the space generated by all squares of $9$ different types: $\{1, 4\}$, $\{1, 5\}$, $\{1, 6\}$, $\{2, 4\}$, $\{2, 5\}$, $\{2, 6\}$, $\{3, 4\}$, $\{3, 5\}$, $\{3, 6\}$. It corresponds to the green background entries in Table~\ref{tab:cartesian_product_in_6_cube}. On this figure, an edge represents a square type and is labeled by the number of squares of this type in the 2D layout of Figure~\ref{fig:layout_QRM_6_1_1}.}%
\end{figure}

Equation~\ref{6_cube_direct_sum_of_polynomial_space} groups the spaces described in Table~\ref{tab:subspaces_by_type_QRM_6_1_1} by North-West to South-East diagonals (starting at the top-right corner). Theorem~\ref{thm:first_diagonals} states that the direct sum of the $f_{max}$ first diagonals of Table~\ref{tab:subspaces_by_type_QRM_6_1_1} (i.e. the $f_{max}$ first blocks\footnote{Blocks of Equation~\ref{6_cube_direct_sum_of_polynomial_space} are seperated by an empty line.} of Equation~\ref{6_cube_direct_sum_of_polynomial_space}) is the space generated by all squares of type $\{i_1, i_2\}$ with $i_2 \geq i_1 + 6 - f_{max}$.

\begin{align}
&(\mathbb{F}_2[X_1, X_2, X_3, X_4, X_5, X_6])_{\leq 4} \notag \\
 \notag \\
=
&{\color{green6} \mathbb{F}_2[X_2, X_3, X_4, X_5] \notag} \\
& \notag \\
&{\color{green6}\oplus \mathbb{F}_2[X_2, X_3, X_4] \, X_6 \notag} \\
&{\color{green6}\oplus X_1 \, \mathbb{F}_2[X_3, X_4, X_5] \notag} \\
& \notag \\
&{\color{green6}\oplus \mathbb{F}_2[X_2, X_3] \, X_5 (X_6+1) \notag} \\
&{\color{green6}\oplus X_1 \, \mathbb{F}_2[X_3, X_4] \, X_6 \notag} \\
&{\color{green6}\oplus (X_1+1) X_2 \, \mathbb{F}_2[X_4, X_5] \notag} \\
& \notag \\
&\oplus \mathbb{F}_2[X_2] \, X_4 (X_5+1) (X_6+1) \notag \\
&{\color{green6}\oplus X_1 \, \mathbb{F}_2[X_3] \, X_5 (X_6+1) \notag} \\
&{\color{green6}\oplus (X_1+1) X_2 \, \mathbb{F}_2[X_4] \, X_6 \notag} \\
&\oplus (X_1+1) (X_2+1) X_3 \, \mathbb{F}_2[X_5] \notag \\
& \notag \\
&\oplus X_3 (X_4+1) (X_5+1) (X_6+1) \notag \\
&\oplus X_1 X_4 (X_5+1) (X_6+1) \notag \\
&{\color{green6}\oplus (X_1+1) X_2 X_5 (X_6+1) \notag} \\
&\oplus (X_1+1) (X_2+1) X_3 X_6 \notag \\
&\oplus (X_1+1) (X_2+1) (X_3+1) X_4.
\label{6_cube_direct_sum_of_polynomial_space}
\end{align}

Note that for $f \in \{1,\dots,5\}$, block $f$ of Equation~\ref{6_cube_direct_sum_of_polynomial_space} can be written as
$$\bigoplus_{\substack{I=\{i_1, i_2\}, \\ i_2 = i_1 + 6 - f, \\ I \subset \{1, \dots, 6\}}} V_{i_1, i_1, i_2, i_2, \, (a_k^{(i_1,i_2)})}.$$

\begin{figure}[H]
    \centering
    % First row: 2 TikZ subfigures
    \subfloat[$16$ squares of type $\{1, 6\}$.]{%
\resizebox{0.24\textwidth}{!}{%
\begin{tikzpicture}
	\begin{pgfonlayer}{nodelayer}
		\node [style=vertex set] (0) at (-2, 2) {1};
		\node [style=vertex set] (1) at (-2, 0) {2};
		\node [style=vertex set] (2) at (-2, -2) {3};
		\node [style=vertex set] (3) at (2, 2) {4};
		\node [style=vertex set] (4) at (2, 0) {5};
		\node [style=vertex set] (5) at (2, -2) {6};
		\node [style=none] (6) at (1.5, -1) {\tiny \textit{16}};
	\end{pgfonlayer}
	\begin{pgfonlayer}{edgelayer}
		\draw (0) to (5);
	\end{pgfonlayer}
\end{tikzpicture}
}
    }\hspace{1em}
    \subfloat[$16$ squares of type $\{1, 6\}$, 8 squares of type $\{1, 5\}$ and 8 squares of type $\{2, 6\}$. In total, $32$ squares of $3$ different types.]{%
\resizebox{0.24\textwidth}{!}{%
\begin{tikzpicture}
	\begin{pgfonlayer}{nodelayer}
		\node [style=vertex set] (0) at (-2, 2) {1};
		\node [style=vertex set] (1) at (-2, 0) {2};
		\node [style=vertex set] (2) at (-2, -2) {3};
		\node [style=vertex set] (3) at (2, 2) {4};
		\node [style=vertex set] (4) at (2, 0) {5};
		\node [style=vertex set] (5) at (2, -2) {6};
		\node [style=none] (6) at (1.5, -1) {\tiny \textit{16}};
		\node [style=none] (7) at (1.5, 0.5) {\tiny \textit{8}};
		\node [style=none] (8) at (-1.5, -0.5) {\tiny \textit{8}};
	\end{pgfonlayer}
	\begin{pgfonlayer}{edgelayer}
		\draw (0) to (4);
		\draw (0) to (5);
		\draw (1) to (5);
	\end{pgfonlayer}
\end{tikzpicture}
}
    }

    \vspace{1em} % Space between rows

    % Second row: 3 TikZ subfigures
    \subfloat[$16$ squares of type $\{1, 6\}$, $8$ squares of type $\{1, 5\}$, $8$ squares of type $\{2, 6\}$, $4$ squares of type $\{1, 4\}$, $4$ squares of type $\{2, 5\}$ and $4$ squares of type $\{3, 6\}$. In total, $44$ squares of $6$ different types.]{%
\resizebox{0.23\textwidth}{!}{%
\begin{tikzpicture}
	\begin{pgfonlayer}{nodelayer}
		\node [style=vertex set] (0) at (-2, 2) {1};
		\node [style=vertex set] (1) at (-2, 0) {2};
		\node [style=vertex set] (2) at (-2, -2) {3};
		\node [style=vertex set] (3) at (2, 2) {4};
		\node [style=vertex set] (4) at (2, 0) {5};
		\node [style=vertex set] (5) at (2, -2) {6};
		\node [style=none] (6) at (1.5, -1) {\tiny \textit{16}};
		\node [style=none] (7) at (1.5, 0.5) {\tiny \textit{8}};
		\node [style=none] (8) at (-1.5, -0.5) {\tiny \textit{8}};
		\node [style=none] (9) at (0, 2.25) {\tiny \textit{4}};
		\node [style=none] (10) at (-0.75, 0.25) {\tiny \textit{4}};
		\node [style=none] (11) at (0, -2.25) {\tiny \textit{4}};
	\end{pgfonlayer}
	\begin{pgfonlayer}{edgelayer}
		\draw (0) to (4);
		\draw (0) to (3);
		\draw (0) to (5);
		\draw (1) to (4);
		\draw (1) to (5);
		\draw (2) to (5);
	\end{pgfonlayer}
\end{tikzpicture}
}
    }\hspace{1em}
    \subfloat[$16$ squares of type $\{1, 6\}$, $8$ squares of type $\{1, 5\}$, $8$ squares of type $\{2, 6\}$, $4$ squares of type $\{1, 4\}$, $4$ squares of type $\{2, 5\}$, $4$ squares of type $\{3, 6\}$, $2$ squares of type $\{2, 4\}$, $2$ squares of type $\{3, 5\}$, $2$ squares of type $\{1, 3\}$ and $2$ squares of type $\{4, 6\}$.  In total, $52$ squares of $10$ different types.]{%
\resizebox{0.33\textwidth}{!}{%
\begin{tikzpicture}
	\begin{pgfonlayer}{nodelayer}
		\node [style=vertex set] (0) at (-2, 2) {1};
		\node [style=vertex set] (1) at (-2, 0) {2};
		\node [style=vertex set] (2) at (-2, -2) {3};
		\node [style=vertex set] (3) at (2, 2) {4};
		\node [style=vertex set] (4) at (2, 0) {5};
		\node [style=vertex set] (5) at (2, -2) {6};
		\node [style=none] (6) at (1.5, -1) {\tiny \textit{16}};
		\node [style=none] (7) at (1.5, 0.5) {\tiny \textit{8}};
		\node [style=none] (8) at (-1.5, -0.5) {\tiny \textit{8}};
		\node [style=none] (9) at (0, 2.25) {\tiny \textit{4}};
		\node [style=none] (10) at (-0.75, 0.25) {\tiny \textit{4}};
		\node [style=none] (11) at (0, -2.25) {\tiny \textit{4}};
		\node [style=none] (12) at (0.5, 1.5) {\tiny \textit{2}};
		\node [style=none] (13) at (-0.5, -1.5) {\tiny \textit{2}};
		\node [style=none] (14) at (-3, 0) {\tiny \textit{2}};
		\node [style=none] (15) at (3, 0) {\tiny \textit{2}};
	\end{pgfonlayer}
	\begin{pgfonlayer}{edgelayer}
		\draw (0) to (4);
		\draw (0) to (3);
		\draw (0) to (5);
		\draw (1) to (3);
		\draw (1) to (4);
		\draw (1) to (5);
		\draw (2) to (5);
		\draw (2) to (4);
		\draw [bend left] (3) to (5);
		\draw [bend right] (0) to (2);
	\end{pgfonlayer}
\end{tikzpicture}
}
    }\hspace{1em}
    \subfloat[$16$ squares of type $\{1, 6\}$, $8$ squares of type $\{1, 5\}$, $8$ squares of type $\{2, 6\}$, $4$ squares of type $\{1, 4\}$, $4$ squares of type $\{2, 5\}$, $4$ squares of type $\{3, 6\}$, $2$ squares of type $\{2, 4\}$, $2$ squares of type $\{3, 5\}$, $2$ squares of type $\{1, 3\}$, $2$ squares of type $\{4, 6\}$, $1$ square of type $\{3, 4\}$, $1$ square of type $\{1, 2\}$, $1$ square of type $\{2, 3\}$, $1$ squares of type $\{4, 5\}$ and $1$ square of type $\{5, 6\}$.  In total, $57$ squares of $15$ different types.]{%
\resizebox{0.33\textwidth}{!}{%
\begin{tikzpicture}
	\begin{pgfonlayer}{nodelayer}
		\node [style=vertex set] (0) at (-2, 2) {1};
		\node [style=vertex set] (1) at (-2, 0) {2};
		\node [style=vertex set] (2) at (-2, -2) {3};
		\node [style=vertex set] (3) at (2, 2) {4};
		\node [style=vertex set] (4) at (2, 0) {5};
		\node [style=vertex set] (5) at (2, -2) {6};
		\node [style=none] (6) at (1.5, -1) {\tiny \textit{16}};
		\node [style=none] (7) at (1.5, 0.5) {\tiny \textit{8}};
		\node [style=none] (8) at (-1.5, -0.5) {\tiny \textit{8}};
		\node [style=none] (9) at (0, 2.25) {\tiny \textit{4}};
		\node [style=none] (10) at (-0.75, 0.25) {\tiny \textit{4}};
		\node [style=none] (11) at (0, -2.25) {\tiny \textit{4}};
		\node [style=none] (12) at (0.5, 1.5) {\tiny \textit{2}};
		\node [style=none] (13) at (-0.5, -1.5) {\tiny \textit{2}};
		\node [style=none] (14) at (-3, 0) {\tiny \textit{2}};
		\node [style=none] (15) at (3, 0) {\tiny \textit{2}};
		\node [style=none] (16) at (-1.75, -1.25) {\tiny \textit{1}};
		\node [style=none] (17) at (-1.75, 1) {\tiny \textit{1}};
		\node [style=none] (18) at (1.75, 1.25) {\tiny \textit{1}};
		\node [style=none] (19) at (1.75, -0.5) {\tiny \textit{1}};
		\node [style=none] (20) at (-0.25, -0.5) {\tiny \textit{1}};
	\end{pgfonlayer}
	\begin{pgfonlayer}{edgelayer}
		\draw (0) to (4);
		\draw (0) to (3);
		\draw (0) to (5);
		\draw (1) to (3);
		\draw (1) to (4);
		\draw (1) to (5);
		\draw (2) to (5);
		\draw (2) to (3);
		\draw (2) to (4);
		\draw (0) to (1);
		\draw (1) to (2);
		\draw (3) to (4);
		\draw (4) to (5);
		\draw [bend left] (3) to (5);
		\draw [bend right] (0) to (2);
	\end{pgfonlayer}
\end{tikzpicture}
}
    }
    \caption{The $6$ nodes represent the $6$ coordinate indices of the $6$-cube. A type of square is characterized by $2$ coordinate indices and corresponds therefore to an edge on this figure. There are $16$ squares of each type. It is sufficient to add a lower power of $2$ (i.e. $8$, $4$, $2$ or $1$) number of squares of each type to obtain a basis of $RM_6(1)$. The edge label is the number of squares of a given type in the basis of $RM_6(1)$ described in Equation~\ref{6_cube_direct_sum_of_polynomial_space}. Figure (a) corresponds to the top right entry of Table~\ref{tab:subspaces_by_type_QRM_6_1_1}. Figure (b) corresponds to the top right entry of Table~\ref{tab:subspaces_by_type_QRM_6_1_1} and the first diagonal beneath it. Figure (c) corresponds to the top right entry of Table~\ref{tab:subspaces_by_type_QRM_6_1_1} and the two first diagonals beneath it. Figure (d) corresponds to the top right entry of Table~\ref{tab:subspaces_by_type_QRM_6_1_1} and the three first diagonals beneath it. Figure (e) corresponds to all entries of Table~\ref{tab:subspaces_by_type_QRM_6_1_1}.}
    \label{fig:tikz_grid}
\end{figure}

\section{Logic in $QRM_6(1,2)$}

This section is entirely about $QRM_6(1,2)$. Statements about stabilizers, Pauli logical operators and logical action of other operators therefore refer to $QRM_6(1,2)$ and we don't always specify it explicitely.

Recall from \cite{barg2025geometric}, Section~5.2 that the following sets are generating sets for stabilizer groups and logical operator groups of $QRM_6(1,2)$:
\begin{itemize}
\item X stabilizers are generated by the set of $5$-subcubes of the $6$-cube:
$$\mathcal{S}_X = \langle X(Q) \, | \, Q \text{ is a $5$-subcube of the $6$-cube} \rangle.$$
\item Z stabilizers are generated by the set of $3$-subcubes of the $6$-cube:
$$\mathcal{S}_Z = \langle Z(Q) \, | \, Q \text{ is a $3$-subcube of the $6$-cube} \rangle.$$ 
\item X logical operators are generated by the set of $4$-subcubes of the $6$-cube:
$$\mathcal{L}_X = \langle X(Q) \, | \, Q \text{ is a $4$-subcube of the $6$-cube} \rangle.$$ 
\item Z logical operators are generated by the set of $2$-subcubes of the $6$-cube:
$$\mathcal{L}_Z = \langle Z(Q) \, | \, Q \text{ is a $2$-subcube of the $6$-cube} \rangle.$$ 
\end{itemize}

\begin{definition}[\cite{barg2025geometric}, Equations (74) and (75)]
To a one qubit unitary $U$ and a subcube $Q$ of the 6-cube, we associate two operators:
\begin{itemize}
\item An operator that acts on every qubit of $Q$ by $U$. By abuse of terminology, we refer to this operator as $U$ on the subcube $Q$.
\item An operator that acts as $U$ on qubits of $Q$ that have an even Hamming weight in the 6-cube and as $U^{\dag}$ on qubits of $Q$ that have an odd Hamming weight in the 6-cube. We refer to this operator as $\widetilde{U}$ on the subcube $Q$.
\end{itemize}
\end{definition}

\begin{theorem}[\cite{barg2025geometric}, Theorem~6.2, B.4 and B.5]
The logical action on $QRM_6(1,2)$ of $\widetilde{S}$, $S$, $\widetilde{T}$ and $T$ operators on subcubes of the $6$-cube depends on the dimension of the subcube as follows: 
\begin{itemize}
\item $\widetilde{S}$ and $S$ operators on subcubes of dimension $1$, $2$ or $3$ don't preserve the codespace.
\item $\widetilde{S}$ and $S$ operators on subcubes of dimension $4$ have a nontrivial logical action on the codespace.
\item $\widetilde{S}$ and $S$ operators on subcubes of dimension $5$ or $6$ and more have a trivial logical action on the codespace.
\item $\widetilde{T}$ and $T$ operators on subcubes of dimension $1$, $2$, $3$, $4$ or $5$ don't preserve the codespace.
\item $\widetilde{T}$ and $T$ operators on subcubes of dimension $6$ have a nontrivial logical action on the codespace.
\end{itemize}
\label{thm:validity}
\end{theorem}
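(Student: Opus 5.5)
We verify Theorem~\ref{thm:validity} through the standard phase-polynomial description of diagonal gates on CSS codes. Write a codeword of $QRM_6(1,2)$ as $\ket{x + s}$, where $x$ ranges over representatives of $\mathcal{L}_X/\mathcal{S}_X$ and $s \in \mathcal{S}_X$. In the polynomial language of this paper, $\mathcal{S}_X = RM_6(1)$ and $\mathcal{L}_X = RM_6(2)$. Let $Q$ be a subcube of dimension $d$; it is the indicator polynomial of degree $6-d$. The gate $T$ on $Q$ multiplies $\ket{c}$ by $\omega^{|c \cap Q|}$, with $\omega = e^{i\pi/4}$. The gate $\widetilde{T}$ on $Q$ uses the signed count $\sum_{v \in Q} (-1)^{|v|} c_v$ instead.

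The gate preserves the codespace exactly when this phase is constant along each coset $x + \mathcal{S}_X$. For $S$ and $T$, the phase depends only on $|c \cap Q| \bmod 4$ and $\bmod 8$ respectively. We therefore expand these quantities using the Ward/divisibility identity
$$|c \cap Q| = \sum_{\emptyset \neq A \subseteq \{\text{summands}\}} (-2)^{|A|-1} \, \Big| Q \cap \bigcap_{a \in A} a \Big|,$$
where $c = \sum_a a$ is written as a sum of generators. Products of polynomials correspond to intersections of supports. A monomial product of degree $e$ restricted to a $d$-subcube has weight divisible by $2^{d-e}$, and has odd weight when $e \geq d$ for a suitable choice. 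We are therefore reduced to degree counting: $Q$ has degree $6-d$, the $X$ stabilizers have degree $1$, and the $X$ logicals have degree $2$.

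First take $T$ with $d = 6$. Every term of the expansion involving at least one stabilizer has total degree at most $2 + 2 + 1 = 5 < 6$. Its $(-2)^{|A|-1}$-weighted contribution is then divisible by $8$. This holds for $|A| = 1, 2, 3$ by checking $2^{|A|-1} \cdot 2^{6 - \deg}$ case by case, and terms with $|A| \geq 4$ vanish modulo $8$ outright. Hence the phase is constant on cosets, so the codespace is preserved. Nontriviality follows by exhibiting $x, x'$ whose intersection terms of degree $\geq 6$, for instance the triple product of three degree-$2$ logicals, have odd weight and so differ modulo $8$.

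Next take $d \leq 5$. We pick a single stabilizer $s$ (a degree-$1$ polynomial) with $|Q \cap s|$ odd modulo the relevant power of $2$. This gives a codeword $\ket{c}$ and a codeword $\ket{c+s}$ in the same coset that receive different phases, so the codespace is not preserved. When $d \leq 1$, a degree-$1$ stabilizer can already cut $Q$ in a single vertex. In the intermediate cases we use pairwise or triple intersections with logicals, e.g.\ $x_1 x_2 s$ on $Q$, to produce a weight not divisible by the needed power. The analysis of $S$ (modulo $4$, with expansion terms up to $|A| = 2$) is identical. It gives that $d \leq 3$ fails, $d = 4$ acts nontrivially via $x_1 x_2$ of degree $4$, and $d \geq 5$ acts trivially because every degree-$\leq 4$ term is then divisible by $4$.

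The tilde versions are handled by the same computation. The sign $(-1)^{|v|}$ equals $1 - 2\,(\sum_i X_i \bmod 2)$, which shifts the counting weights by $2\times$ a degree-$1$ correction. One checks that this preserves every divisibility estimate above; this is the content of \cite{barg2025geometric}, Appendix B.

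The main obstacle is the nontriviality and non-preservation witnesses at the boundary dimensions ($d = 5$ for $T$, $d = 3$ for $S$). At these dimensions the degree count alone is tight, and explicit low-degree polynomials must be found. Since the statement is cited, we would ultimately defer those explicit witnesses to \cite{barg2025geometric}, Theorems~6.2, B.4 and B.5.
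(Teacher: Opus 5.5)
Your argument is correct, but it takes a genuinely different route from the paper's. The paper works in the Heisenberg picture. It computes $\widetilde{S_A}X_B\widetilde{S_A}^{\dagger}=Z_{A\cap B}X_B$ and $\widetilde{T_A}X_B\widetilde{T_A}^{\dagger}=\widetilde{S_{A\cap B}}X_B$ for $X$-type subcube operators $B$ of dimension $4$ (logicals) and $5$ (stabilizers), and then climbs the Clifford hierarchy. The $S$ statements reduce to whether $A\cap B$ has dimension $\geq 3$ (so $Z_{A\cap B}$ is a $Z$ stabilizer) or dimension $2$ (a nontrivial $Z$ logical). The $T$ statements are then reduced to the already established $S$ statements applied to $A\cap B$.

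You work in the Schr\"odinger picture instead. You test constancy of the diagonal phase on the cosets $x+RM_6(1)$ inside $RM_6(2)$, and use the Ward identity to turn everything into $2$-adic divisibility of weights of intersections of subcubes, which are again subcubes. Your route treats $S$ and $T$ uniformly with no induction. The phases of the untilded $S_A$, $T_A$ also need no separate bookkeeping, whereas the paper defers that case to Appendix~B.1 of \cite{barg2025geometric}. In exchange, the paper's route directly yields the conjugation rules $X_I\mapsto Z_J X_I$ and $X_{K_B}\mapsto (\prod CZ)\, X_{K_B}$. These are reused in Lemma~\ref{lem:logic_S_4_subcube} and Theorem~\ref{thm:logic_T_6_subcube} to identify the logical gates as products of $CZ$ and $CCZ$.

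The boundary witnesses you defer are not really an obstacle.
\begin{itemize}
\item For $T$ with $d=5$, take $Q=\{v_6=0\}$, $s=X_1$ and $x=X_2X_3+X_4X_5$. Then $|(x+s)\cap Q|-|x\cap Q| = 16-2(4+4)+4 \equiv 4 \pmod 8$.
\item For $S$ with $d=3$, take $Q=\{v_4=v_5=v_6=0\}$, $s=X_1$ and $x=X_2X_3$. The difference is $4-2\equiv 2 \pmod 4$.
\end{itemize}

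For the tilde versions, your $(-1)^{|v|}$ remark only shows that the divisibility bounds survive. The witnesses need a separate check. It is easy: the signed count of any positive-dimensional subcube is $0$, so the signed Ward expansion keeps only the terms where $Q\cap\bigcap_{a\in A}a$ is a single vertex, each contributing $\pm(-2)^{|A|-1}$. The two witnesses above still work, with signed differences $-4$ and $2$. In low dimension, however, the witness can change. For $d=2$, a single stabilizer meets $Q$ in an edge (or in $Q$ or nothing), which changes the signed count by $0$. You then need $s$ together with a logical that meets $Q\cap s$ in one vertex.
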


\begin{proof}
These results are special cases of Theorem~6.2 of \cite{barg2025geometric}. For the reader's convenience, we reproduce their proof in this special case for $\widetilde{S}$ and $\widetilde{T}$ operators. \\

The proof is by induction on the levels of the Clifford hierarchy, by investigating the action by conjugation of $\widetilde{S}$ and $\widetilde{T}$ operators on stabilizers and logical operators. We therefore first prove the statements about $\widetilde{S}$ operators and then use these results to prove the statements about $\widetilde{T}$ operators. Since $\widetilde{S}$ and $\widetilde{T}$ operators commute with $Z$ stabilizers and Z logical operators, we focus on their action by conjugation on $X$ stabilizers and logical operators. \\

\underline{Proof of statements about $\widetilde{S}$ operators:} \\

Let $A$ be a subcube of the $6$-cube of dimension at least $1$. Let $B$ be a subcube of the $6$-cube.

Direct calculations show that
$$S X S^{\dagger} =-i Z X$$
and
$$S^{\dagger} X S = i Z X.$$
Therefore
$$\widetilde{S_A} X_B \widetilde{S_A}^{\dagger} = Z_{A \cap B} X_B.$$
Indeed, the phases $i$ and $-i$ cancel out since there are as many vertices of odd and even Hamming weights in $A$.

\begin{itemize}
\item If $\dim(A) \leq 3$, there exists a $5$-subcube $B_0$ such that $A \cap B_0$ is a $2$-subcube. Since $Z_{A \cap B_0}$ is not a stabilizer of $QRM_6(1,2)$, neither is $Z_{A \cap B_0} X_{B_0}$ and thus $\widetilde{S_A}$ doesn't preserve the codespace.
\item If $\dim(A) = 4$, for every $5$-subcube $B$, $\dim(A \cap B) \geq 3$. Therefore, $Z_{A \cap B} X_B$ is a stabilizer of $QRM_6(1,2)$ for every $5$-subcube $B$ and thus $\widetilde{S_A}$ preserves the codespace.

However, there exists a $4$-subcube $B_0$ such that $A \cap B_0$ is a $2$-subcube. Since $Z_{A \cap B_0}$ is a nontrivial logical operator, $Z_{A \cap B_0} X_{B_0}$ is not logically equivalent to $X_{B_0}$ and thus $\widetilde{S_A}$ has a nontrivial logical action on the codespace.
\item If $\dim(A) \geq 5$, for every $4$-subcube $B$, $\dim(A \cap B) \geq 3$. Therefore, $Z_{A \cap B}$ is a stabilizer of $QRM_6(1,2)$ and $Z_{A \cap B} X_B$ is logically equivalent to $X_B$ for every $5$-subcube $B$. Thus $\widetilde{S_A}$ preserves the codespace and acts on it as a logical identity.
\end{itemize}

\underline{Proof of statements about $\widetilde{T}$ operators:} \\

Let $A$ be a subcube of the $6$-cube of dimension at least $1$. Let $B$ be a subcube of the $6$-cube.

Direct calculations show that
$$T X T^{\dagger} = - e^{i\frac{\pi}{4}} S X$$
and
$$T^{\dagger} X T = e^{i\frac{\pi}{4}} S^{\dagger} X.$$
Therefore
$$\widetilde{T_A} X_B \widetilde{T_A}^{\dagger} = \widetilde{S_{A \cap B}} X_B.$$
Indeed, the phases $e^{i\frac{\pi}{4}}$ and $e^{-i\frac{\pi}{4}}$ cancel out since there are as many vertices of odd and even Hamming weights in $A$.

\begin{itemize}
\item If $\dim(A) \leq 5$, there exists a $5$-subcube $B_0$ such that $A \cap B_0$ is a $4$-subcube. Since $S_{A \cap B_0}$ has a nontrivial logical action on the codespace, so does $S_{A \cap B_0} X_{B_0}$ and thus $\widetilde{T_A}$ doesn't preserve the codespace.

\item If $\dim(A) = 6$, for every $5$-subcube $B$, $A \cap B = B$ and thus $\dim(A \cap B) = 5$. Therefore, $\widetilde{S_{A \cap B}}$ stabilizes the codespace and $\widetilde{S_{A \cap B}} X_B$ is logically equivalent to $X_B$ for every $5$-subcube $B$. Thus $\widetilde{T_A}$ preserves the codespace.

However, given any (one of them is sufficient for our result) $4$-subcube $B_0$,  $A \cap B_0 = B_0$. Since $\widetilde{S_{B_0}}$ has a nontrivial logical action on the codespace, $\widetilde{S_{B_0}} X_{B_0}$ is not logically equivalent to $X_{B_0}$ and thus $\widetilde{T_A}$ has a nontrivial logical action on the codespace.
\end{itemize}

We refer to Appendix~B.1 of \cite{barg2025geometric} for statements about $S$ and $T$ operators. Note that results of Appendix~B.1 of \cite{barg2025geometric} apply for $S$ operators since $S_A X_B S_A^{\dagger}$ is phase-free when $\dim(A \cap B) \geq 2$. They also apply for $T$ operators since $T_A X_B T_A^{\dagger}$ is phase-free when $\dim(A \cap B) \geq 3$. These conditions are satisfied for X logicals and stabilizers and $S$ and $T$ operators that preserve the codespace.
\end{proof}

Since $r=q+1=2$ in $QRM_6(1,2)$, logical qubits are indexed by subsets of $\{1, \cdots, 6\}$ of cardinal $2$ (see \cite{barg2025geometric}, Section 5.2). Given such a subset $I$, any $2$-subcube of type $I$ represents the Z logical operator $Z_I$ and any $4$-subcube of type $\overline{I}$ (the complementary of $I$ in $\{1, \cdots, 6\}$) represents the logical operator $X_I$. This case ($r=q+1$) is simpler than the general case described in \cite{barg2025geometric} where some care must be taken to ensure that the basis of X and Z logical operators is symplectic (see Lemma~5.13 from \cite{barg2025geometric}). Beware that $X$ logical operators are defined by operators acting nontrivially on subcubes whose type is the complement of the type that indexes this logical qubit: $X_{\langle \overline{I} \rangle}$ is a representant of the X logical operator of logical qubit $I$ and is therefore also denoted $X_I$ (without brackets). For Z logical operators, the situation is more convenient: $Z_{\langle I \rangle}$ is a representant of the Z logical operator of logical qubit $I$ and is therefore also denoted $Z_I$. \\

The rest of this section concerns the nontrivial logical actions stated in Theorem~\ref{thm:validity}: we characterize the logical action of $\tilde{S}$ and $S$ operators on subcubes of dimension 4 and of $\tilde{T}$ and $T$ operators on subcubes of dimension 6. We begin with Lemma~\ref{lem:logic_S_4_subcube} about the logical action of $\tilde{S}$ and $S$ gates on subcubes of dimension 4. Lemma~\ref{lem:logic_S_4_subcube} is then used in the proof of Theorem~\ref{thm:logic_T_6_subcube} about the logical action of $\tilde{T}$ and $T$ gates on the cube of dimension 6.

\begin{lemma}[special case of Theorem~7.10 from \cite{barg2025geometric}]
Let $A$ be a $4$-subcube of the $6$-cube. $\widetilde{S_A}$ (and $S_A$) act on the codespace like the product of every $CZ$ gates that acts on a pair of qubits that partitions the type $K_A$ of $A$.
\label{lem:logic_S_4_subcube}
\end{lemma}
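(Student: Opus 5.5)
We characterize a logical Clifford diagonal gate by its conjugation action on the X logical operators, as in the proof of Theorem~\ref{thm:validity}. Theorem~\ref{thm:validity} already tells us that $\widetilde{S_A}$ preserves the codespace and commutes with all Z stabilizers and Z logicals. Since it is diagonal, its logical action is a diagonal Clifford (up to a global phase). It is therefore determined by how it maps each logical $X_I$, with $I$ ranging over the $2$-subsets of $\{1,\dots,6\}$. The product $\prod CZ_{I,\overline{I}\setminus\ldots}$ of the statement, namely $CZ$ on pairs $\{I, K_A\setminus I\}$ with $I \subset K_A$, $|I|=2$, maps $X_I \mapsto X_I Z_{K_A \setminus I}$ when $I\subset K_A$ and fixes $X_I$ otherwise. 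So it suffices to show that $\widetilde{S_A}$ has the same action, up to stabilizers.

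\textbf{Step 1: conjugation.} Take the representative $B = v + \langle \overline{I}\rangle$, a $4$-subcube, of $X_I$. As computed in the proof of Theorem~\ref{thm:validity}, $\widetilde{S_A} X_B \widetilde{S_A}^\dagger = Z_{A\cap B} X_B$. This identity needs $A\cap B$ to be nonempty and of dimension at least $1$, so that the phases cancel. We therefore choose $v \in A$, so that $A \cap B$ is nonempty.

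\textbf{Step 2: intersections.} With $v\in A\cap B$, $A\cap B = v + \langle K_A \cap \overline{I}\rangle$ is a subcube of type $K_A\setminus I$. Since $|K_A|=4$ and $|I|=2$, there are three cases.
\begin{itemize}
\item If $|K_A\cap I|\le 1$, then $\dim(A\cap B)\ge 3$, so $Z_{A\cap B}$ is a Z stabilizer and $X_I$ is fixed.
\item If $I\subset K_A$, then $A\cap B$ is a $2$-subcube of type $K_A\setminus I$. It therefore represents the logical $Z_{K_A\setminus I}$, giving $X_I\mapsto X_I Z_{K_A\setminus I}$.
\end{itemize}
This is exactly the action of the $CZ$ between logical qubits $I$ and $K_A\setminus I$. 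Note that $\{I, K_A\setminus I\}$ is a partition of $K_A$ into two pairs, and there are three such partitions. The action is consistent: each partition's $CZ$ contributes to both $X_I$ and $X_{K_A\setminus I}$, and no other $CZ$ touches these qubits' X operators. Moreover, the independence from the chosen representative follows from codespace preservation.

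\textbf{Step 3: conclusion.} Two diagonal logical Cliffords agreeing on all X logicals by conjugation agree up to a diagonal logical Pauli and a phase. We then need to rule out a residual logical $Z$-type Pauli. Since $\widetilde{S_A}X_B\widetilde{S_A}^\dagger$ contains no extra sign (the $\pm i$ phases cancel), there is no residual $Z$ correction, and the actions coincide up to global phase.

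\textbf{Extension to $S_A$.} For $S_A$ we instead get $S_AX_BS_A^\dagger = i^{|A\cap B|}(\text{sign})Z_{A\cap B}X_B$. Since $|A\cap B|\ge 4$ is a multiple of $4$, the phase is trivial, as noted in the paper's remark that the result is phase-free when $\dim(A\cap B)\ge 2$. The same argument then applies.

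I expect the main obstacle to be this sign and phase bookkeeping: confirming that no extra logical $Z$ appears in the $S_A$ versus $\widetilde{S_A}$ cases, and that the choice $v\in A$ of representative is legitimate. For the second point, I would first appeal to the fact that logical action is representative-independent once the codespace is preserved.
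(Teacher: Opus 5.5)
Your proposal is correct and follows essentially the same route as the paper. Both conjugate a $4$-subcube representative $X_B$ of $X_I$ by $\widetilde{S_A}$, note that $A\cap B$ is a $2$-subcube of type $K_A\setminus I$ exactly when $I\subset K_A$ (the paper's condition $K_A\cup K_B=\{1,\dots,6\}$) and a stabilizer otherwise, and then recognize the conjugation action of the product of $CZ$ gates. Your extra bookkeeping (representative independence, ruling out a residual sign or logical Pauli, and the $|A\cap B|\equiv 0 \bmod 4$ phase check for $S_A$) makes explicit what the paper leaves implicit; the choice $v\in A$ is harmless but unnecessary, since the intersection is automatically nonempty when $I\subset K_A$, and an empty intersection simply fixes $X_B$.
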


For instance, for $K_A = \{1, 2, 3, 4\}$, $\widetilde{S_A}$ is logically equivalent to the product of the three $CZ$ gates that respectively act on pairs of qubits $\{ \{1, 2\}, \{3, 4\} \}$, $\{ \{1, 3\}, \{2, 4\} \}$ and $\{ \{1, 4\}, \{2, 3\} \}$. For the reader's convenience, we sketch the proof of Theorem~7.10 from \cite{barg2025geometric} for this special case. \\

\begin{proof}
The proof consists in computing the action by conjugation of $\widetilde{S_A}$ on logical operators and recognizing the action of the aforementionned product of $CZ$ gates.

Without loss of generality, assume that $A = \langle \{1, 2, 3, 4\} \rangle$ (i.e. $K_A = \{1, 2, 3, 4\}$). Let $B$ be a $4$-subcube of type $K_B$.
$$\widetilde{S_A} X_B \widetilde{S_A}^{\dagger} = Z_{A \cap B} X_B.$$

\begin{itemize}
\item if $K_A \cup K_B = \{1, \cdots, 6\}$, $A \cap B$ is a $2$-subcube of type $K_A \cap K_B$. Therefore, $Z_{A \cap B}$ is a nontrivial logical operator. More precisely, $Z_{A \cap B}$ is $Z_{K_A \cap K_B}$: the Z logical operator of the logical qubit $K_A \cap K_B$.
\item if $K_A \cup K_B$ is a strict subset of  $\{1, \cdots, 6\}$, $A \cap B$ is a subcube of dimension at least $3$. Therefore $Z_{A \cap B}$ is a stabilizer and thus $Z_{A \cap B} X_B$ is logically equivalent to $X_B$.
\end{itemize}

Putting things together, we see that for every pair of qubits $I$ and $J$ ($I$ and $J$ are subsets of $\{1, \cdots, 6\}$ of cardinal $2$) such that $I \cup J = K_A$,
$$\widetilde{S_A} X_I \widetilde{S_A}^{\dagger} \equiv Z_J X_I.$$
and for every qubit $I$ such that $I \not\subset K_A$,
$$\widetilde{S_A} X_I \widetilde{S_A}^{\dagger} \equiv X_I,$$
where $\equiv$ denotes the same logical action on the codespace.

We recognize (see Section~4.4 from \cite{barg2025geometric} for a formal derivation) the action by conjugation of the product of $CZ$ gates over all pairs of logical qubits $\{I, J\}$ such that $I \cup J = K_A$. Therefore
$$\widetilde{S_A} \equiv CZ_{\{ \{1, 2\}, \{3, 4\} \}} CZ_{\{ \{1, 3\}, \{2, 4\} \}} CZ_{\{ \{1, 4\}, \{2, 3\} \}}.$$

The action by conjugation of $S_A$ is identical to the one of $\widetilde{S_A}$ since $\dim(A \cap B) \geq 2$ in all the cases considered in this proof. 
\end{proof}

With Lemma~\ref{lem:logic_S_4_subcube} at hand, we are now ready to prove Theorem~\ref{thm:logic_T_6_subcube}, which describes the nontrivial logical action of $T_{\langle \{1, 2, 3, 4, 5, 6\} \rangle}$ and $\widetilde{T}_{\langle \{1, 2, 3, 4, 5, 6\} \rangle}$.

\begin{theorem}[special case of Theorem~7.10 from \cite{barg2025geometric}]
Let $A$ denote the 6-cube (in itself). The logical action of $\widetilde{T_A}$ (and of $T_A$) is given by the product of every $CCZ$ gates that acts on a triple of logical qubits that partitions $\{1, 2, 3, 4, 5, 6\}$. There are $15$ such partitions, as shown in Figure~\ref{fig:CCZ_circuit_15_logical_qubits}: $\{ \{1, 2\}, \{3, 4\}, \{5, 6\} \}$, $\{ \{1, 2\}, \{3, 5\}, \{4, 6\} \}$, $\{ \{1, 2\}, \{3, 6\}, \{4, 5\} \}$, $\{ \{1, 3\}, \{2, 4\}, \{5, 6\} \}$, $\{ \{1, 3\}, \{2, 5\}, \{4, 6\} \}$, $\{ \{1, 3\}, \{2, 6\}, \{4, 5\} \}$, $\{ \{1, 4\}, \{2, 3\}, \{5, 6\} \}$, $\{ \{1, 4\}, \{2, 5\}, \{3, 6\} \}$, $\{ \{1, 4\}, \{2, 6\}, \{3, 5\} \}$, $\{ \{1, 5\}, \{2, 3\}, \{4, 6\} \}$, $\{ \{1, 5\}, \{2, 4\}, \{3, 6\} \}$, $\{ \{1, 5\}, \{2, 6\}, \{3, 4\} \}$, $\{ \{1, 6\}, \{2, 3\}, \{4, 5\} \}$, $\{ \{1, 6\}, \{2, 4\}, \{3, 5\} \}$ and $\{ \{1, 6\}, \{2, 5\}, \{3, 4\} \}$:
\begin{align*}
\widetilde{T_A} \equiv &CCZ_{\{ \{1, 2\}, \{3, 4\}, \{5, 6\} \}} CCZ_{\{ \{1, 2\}, \{3, 5\}, \{4, 6\} \}} CCZ_{\{ \{1, 2\}, \{3, 6\}, \{4, 5\} \}} \\
&CCZ_{\{ \{1, 3\}, \{2, 4\}, \{5, 6\} \}} CCZ_{\{ \{1, 3\}, \{2, 5\}, \{4, 6\} \}}  CCZ_{\{ \{1, 3\}, \{2, 6\}, \{4, 5\} \}} \\
&CCZ_{\{ \{1, 4\}, \{2, 3\}, \{5, 6\} \}} CCZ_{\{ \{1, 4\}, \{2, 5\}, \{3, 6\} \}} CCZ_{\{ \{1, 4\}, \{2, 6\}, \{3, 5\} \}} \\
&CCZ_{\{ \{1, 5\}, \{2, 3\}, \{4, 6\} \}} CCZ_{\{ \{1, 5\}, \{2, 4\}, \{3, 6\} \}} CCZ_{\{ \{1, 5\}, \{2, 6\}, \{3, 4\} \}} \\
&CCZ_{\{ \{1, 6\}, \{2, 3\}, \{4, 5\} \}} CCZ_{\{ \{1, 6\}, \{2, 4\}, \{3, 5\} \}} CCZ_{\{ \{1, 6\}, \{2, 5\}, \{3, 4\} \}}.
\end{align*}
\label{thm:logic_T_6_subcube}
\end{theorem}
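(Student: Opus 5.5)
We follow the strategy of the proof of Lemma~\ref{lem:logic_S_4_subcube}, one level higher in the Clifford hierarchy. We compute the action by conjugation of $\widetilde{T_A}$ on a generating set of logical $X$ operators, and compare it with the action by conjugation of the claimed product of $CCZ$ gates. $\widetilde{T_A}$ is diagonal, as is the product of $CCZ$ gates, so both commute with all $Z$ stabilizers and $Z$ logical operators. Theorem~\ref{thm:validity} guarantees that $\widetilde{T_A}$ preserves the codespace. Two diagonal logical operators that act identically by conjugation on all logical $X$ operators agree up to a global phase, which we ignore. So it suffices to compare conjugations of the $X_I$, with $I$ ranging over the $15$ two-element subsets of $\{1,\dots,6\}$.

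\textbf{Left-hand side.} Take $X_I$ represented by $X_B$, where $B = \langle \overline{I} \rangle$ is a $4$-subcube of type $\overline{I}$. From the identity used in the proof of Theorem~\ref{thm:validity},
$$\widetilde{T_A} X_B \widetilde{T_A}^{\dagger} = \widetilde{S_{A\cap B}} X_B = \widetilde{S_B} X_B .$$
Lemma~\ref{lem:logic_S_4_subcube} applies to the $4$-subcube $B$ of type $K_B = \overline{I}$. It gives that $\widetilde{S_B}$ acts logically as the product of the $CZ$ gates on the pairs $\{J,L\}$ of logical qubits with $J \sqcup L = \overline{I}$. There are exactly three such pairs, so
$$\widetilde{T_A} X_I \widetilde{T_A}^{\dagger} \equiv \Bigl(\prod_{\{J,L\},\, J\sqcup L=\overline{I}} CZ_{J,L}\Bigr) X_I .$$
For $T_A$ instead of $\widetilde{T_A}$, we use the remark at the end of the proof of Theorem~\ref{thm:validity}. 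There $\dim(A\cap B) = 4 \geq 3$, so $T_A X_B T_A^{\dagger}$ is phase-free, and the same computation holds with $S_B$ in place of $\widetilde{S_B}$. Lemma~\ref{lem:logic_S_4_subcube} covers $S_B$ as well.

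\textbf{Right-hand side.} For a single gate $CCZ_{a,b,c}$ we have
$$CCZ_{a,b,c}\, X_a\, CCZ_{a,b,c}^{\dagger} = CZ_{b,c} X_a ,$$
and $CCZ_{a,b,c}$ commutes with $X_d$ for $d \notin \{a,b,c\}$. All $CCZ$ and $CZ$ gates are diagonal and commute. Hence conjugating $X_I$ by the product over all partitions of $\{1,\dots,6\}$ into three pairs gives $X_I$ times the product of $CZ_{J,L}$ over the partitions containing $I$. Those partitions are exactly the $\{I,J,L\}$ with $J\sqcup L=\overline{I}$, so this matches the left-hand side term by term.

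\textbf{Main obstacle.} The delicate step is justifying that equality of conjugation actions on the $X_I$ determines the diagonal logical operator, given that the conjugated operators are now Clifford rather than Pauli. We handle it as in Section~4.4 of \cite{barg2025geometric}. Let $D = \widetilde{T_A}\, \mathcal{C}^{\dagger}$, where $\mathcal{C}$ denotes the product of $CCZ$ gates. Then $D$ is a diagonal logical operator that commutes with every logical $X_I$. A diagonal operator commuting with all $X$'s is a scalar, which gives the result. A secondary point is to check that the representative of $X_I$ is indeed of type $\overline{I}$, in line with the convention fixed before Lemma~\ref{lem:logic_S_4_subcube}, so that the bookkeeping of which pairs appear is correct. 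Finally, the list of $15$ partitions in the statement is obtained by the count $5\cdot 3 = 15$: there are $5$ choices for the partner of $1$, then $3$ ways to pair the remaining four indices.
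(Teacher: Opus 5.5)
Your proposal is correct and follows the paper's proof. You conjugate the representatives $X_{\langle \overline{I}\rangle}$ of the logical $X_I$ by $\widetilde{T_A}$, use $A\cap B=B$ to obtain $\widetilde{S_B}X_B$, and invoke Lemma~\ref{lem:logic_S_4_subcube} to turn $\widetilde{S_B}$ into the three $CZ$ gates on the pairs partitioning $\overline{I}$; you also handle $T_A$ via the phase-freeness remark, since $\dim(A\cap B)\geq 3$. Your identity $CCZ_{a,b,c}X_aCCZ_{a,b,c}^{\dagger}=CZ_{b,c}X_a$ and your argument that $\widetilde{T_A}\mathcal{C}^{\dagger}$ acts as a logical scalar just write out the ``recognition'' step that the paper defers to Section~4.4 of \cite{barg2025geometric}.
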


For the reader's convenience, we sketch the proof of Theorem~7.10 from \cite{barg2025geometric} for the special case of $\widetilde{T}_A$.

\begin{proof}
The proof consists in computing the action by conjugation of $\widetilde{T_A}$ on logical operators and recognizing the action of the aforementionned product of $CCZ$ gates. Let $B$ be a $4$-subcube of type $K_B$. The action by conjugation of $\widetilde{T_A}$ on $X_B$ is

$$\widetilde{T_A} X_B \widetilde{T_A}^{\dagger} = \widetilde{S_B} X_B$$
since $A \cap B = B$.

Since Lemma~\ref{lem:logic_S_4_subcube} states that $\widetilde{S_B}$ is logically equivalent to three $CZ$ gates on the three partitions of $K_B$ into two subsets of cardinal $2$,  we obtain
$$\widetilde{T_A} X_{K_B} \widetilde{T_A}^{\dagger} \equiv \prod_{\{I, J\} \text{ s.t. } |I|=|J|=2 \text{ and } I \cup J = K_B} CZ_{I,J} X_{K_B}.$$

We recognize (see Section~4.4 from \cite{barg2025geometric} for a formal derivation) the action by conjugation of the product of $CCZ$ gates over all triples of logical qubits $\{I, J, K\}$ such that $I \cup J \cup K = \{1, 2, 3, 4, 5, 6\}$. Therefore
\begin{align*}
\widetilde{T_A} \equiv &CCZ_{\{ \{1, 2\}, \{3, 4\}, \{5, 6\} \}} CCZ_{\{ \{1, 2\}, \{3, 5\}, \{4, 6\} \}} CCZ_{\{ \{1, 2\}, \{3, 6\}, \{4, 5\} \}} \\
&CCZ_{\{ \{1, 3\}, \{2, 4\}, \{5, 6\} \}} CCZ_{\{ \{1, 3\}, \{2, 5\}, \{4, 6\} \}}  CCZ_{\{ \{1, 3\}, \{2, 6\}, \{4, 5\} \}} \\
&CCZ_{\{ \{1, 4\}, \{2, 3\}, \{5, 6\} \}} CCZ_{\{ \{1, 4\}, \{2, 5\}, \{3, 6\} \}} CCZ_{\{ \{1, 4\}, \{2, 6\}, \{3, 5\} \}} \\
&CCZ_{\{ \{1, 5\}, \{2, 3\}, \{4, 6\} \}} CCZ_{\{ \{1, 5\}, \{2, 4\}, \{3, 6\} \}} CCZ_{\{ \{1, 5\}, \{2, 6\}, \{3, 4\} \}} \\
&CCZ_{\{ \{1, 6\}, \{2, 3\}, \{4, 5\} \}} CCZ_{\{ \{1, 6\}, \{2, 4\}, \{3, 5\} \}} CCZ_{\{ \{1, 6\}, \{2, 5\}, \{3, 4\} \}}.
\end{align*}

The action by conjugation of $T_A$ is identical to the one of $\widetilde{T_A}$ since $\dim(A \cap B) \geq 3$ in all the cases considered in this proof. 
\end{proof}

\begin{figure}[H]
    \centering
    \includegraphics[width=0.7\textwidth]{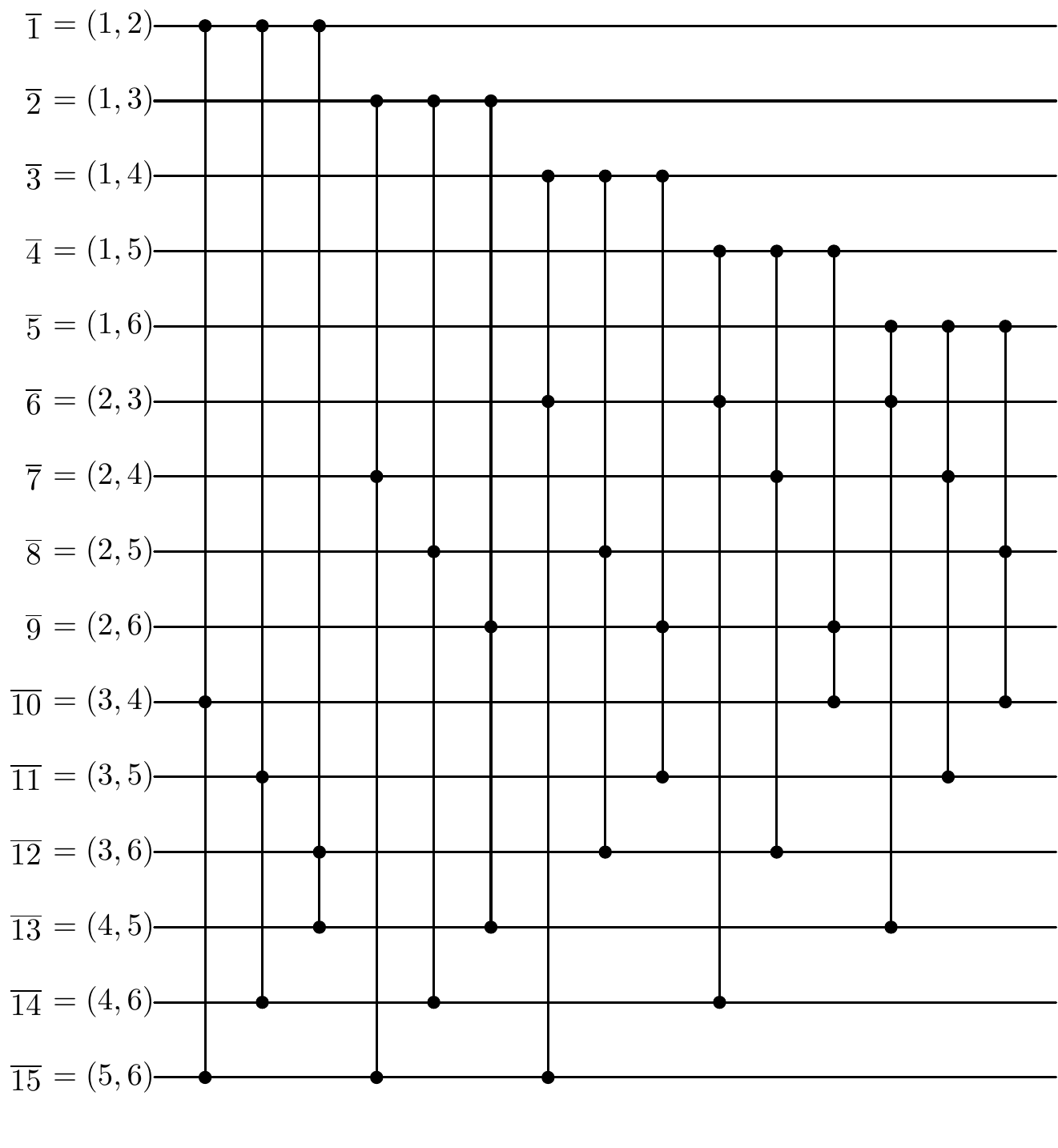}
    \caption{CCZ circuit from applying $T$ or $\widetilde{T}$ on the $15$ logical qubits of $QRM_6(1,2)$ (source: \cite{barg2025geometric})}
    \label{fig:CCZ_circuit_15_logical_qubits}
\end{figure}

\section{Planar layout for the Z stabilizer group generators of the big unfolded code}
\label{sec:big_unfolded_code}

$Z$ stabilizers are generated by cubes of dimension $r+1=3$ in $QRM_6(1,2)$ and therefore these generators have weight $2^3=8$. Finding a planar layout for weight $8$ stabilizers seems unlikely. In $QRM_6(1,1)$, $Z$ stabilizers are generated by cubes of dimension $r+1=2$ (squares) and therefore have weight $4$. But $QRM_6(1,1)$ encodes $0$ logical qubit. \\

In this section, we describe an interpolation between $QRM_6(1,2)$ and $QRM_6(1,1)$ that has $Z$ stabilizer generators of weight $4$ (a subset of the generators of $QRM_6(1,1)$, depicted in Figure~\ref{fig:layout_QRM_6_1_1}) and that encodes $3$ of the $15$ logical qubits of $QRM_6(1,2)$. We call this code the big unfolded code to insist on its similarity with the (small) unfolded code from \cite{ruiz2025unfolded}. Note that $QRM_6(1,2)$ and $QRM_6(1,1)$ have the same X stabilizer groups (and therefore so does the big unfolded code). The $15$ logical qubits of $QRM_6(1,2)$ correspond to the $6 \choose 2$ subsets of weight $2$ of $\{1,2,3,4,5,6\}$. The $3$ logical qubits of the big unfolded code correspond to subsets $\{1, 2\}$, $\{3,4\}$ and $\{5,6\}$ of $\{1,2,3,4,5,6\}$. 
As these $3$ subsets partition $\{1,2,3,4,5,6\}$, Theorem~\ref{thm:logic_T_6_subcube} ensures that $\widetilde{T}_{\{1,2,3,4,5,6\}}$ is logically equivalent to a $CCZ$ gate applied to the $3$ logical qubits of the big unfolded code. Therefore, the big unfolded code is a $64 \ket{T}$ to $\ket{CCZ}$ magic state factory.

To obtain 3 logical qubits (respectively corresponding to types $\{1, 2\}$, $\{3, 4\}$ and $\{5, 6\}$), we get rid of the $3$ Z stabilizer generators of $QRM_6(1,1)$ that correspond to the $3$ polynomials
\begin{align*}
                  &X_3 (X_4+1) (X_5+1) (X_6+1), \\
                  &(X_1+1) X_2 X_5 (X_6+1) \\
\text{and } &(X_1+1) (X_2+1) (X_3+1) X_4.
\end{align*}

They correspond to the $3$ red entries of Table~\ref{tab:big_unfolded_code}. In Figure~\ref{fig:layout_big_unfolded}, the omission of a Z stabilizer (a square) in the center gives rise to the logical qubit of type $\{3, 4\}$. Logical qubits of type $\{1, 2\}$ and $\{5, 6\}$ correspond to the $2$ other Z stabilizers (squares) that are in Figure~\ref{fig:layout_QRM_6_1_1} (below the grid of $7 \times 7$ squares) but are omitted in Figure~\ref{fig:layout_big_unfolded}. \\
 
\begin{table}[H]
\centering
\begin{tblr}{
colspec = |p{2cm}||p{1cm}|p{1cm}|p{1cm}|p{1cm}|p{1cm}|p{1cm}|,
cell{2}{3} = {red},
cell{4}{5} = {red},
cell{6}{7} = {red},
}
\hline
coordinate indices & 1 & 2 & 3 & 4 & 5 & 6  \\
\hline \hline
1 &    & 1 & 2 & 4 & 8 & 16 \\
\hline
2 &    &    & 1 & 2 & 4 & 8   \\
\hline
3 &    &    &    & 1 & 2 & 4   \\
\hline
4 &    &    &    &    & 1 & 2   \\
\hline
5 &    &    &    &    &    & 1   \\
\hline
6 &   &    &    &    &    &      \\
\hline
\end{tblr}
\caption{Rows and Columns are indexed by coordinate indices of the $6$-cube. The entry in $(i,j)$ is the number of squares of type $\{i, j\}$ in the generator set of the Z stabilizer group of the big unfolded code. The big unfolded code has three more logical qubits than $QRM_6(1,1)$ (which has no logical qubit) since three generators of the Z stabilizer group of $QRM_6(1,1)$ are not in the Z generator set of the big unfolded code. These three elements correspond to squares of types $\{1, 2\}$, $\{3, 4\}$ and $\{5, 6\}$. The Z stabilizer group of the big unfolded code has dimension $54$.}
\label{tab:big_unfolded_code}
\end{table}

\begin{figure}[H]
\centering
\includegraphics{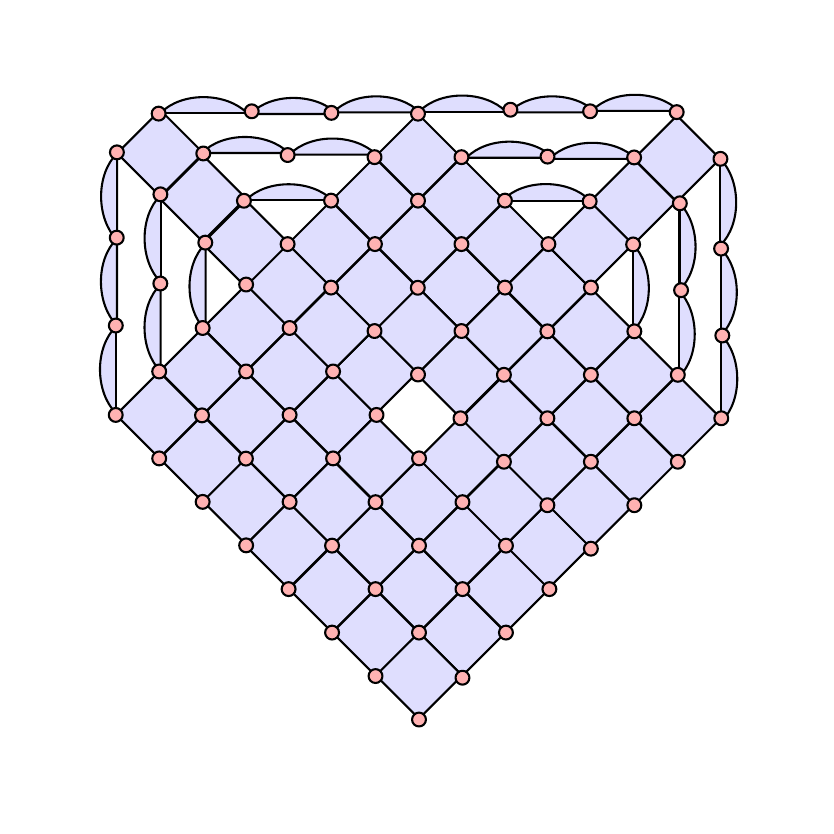}
\caption{Planar layout of the big unfolded code. The set of Z stabilizer generators of the big unfolded code is the set of Z stabilizer generators of $QRM_6(1,1)$ (depicted in Figure~\ref{fig:layout_QRM_6_1_1}) without $3$ generators: the square in the middle of the $7 \times 7$ grid of squares and the $2$ squares, respectively on the bottom-left and bottom-right sides of the $7 \times 7$ grid of squares. These $3$ generators of the Z stabilizer group of $QRM_6(1,1)$ are Z logical operators of the $3$ logical qubits of the big unfolded code.  Each vertex is a physical qubit. Each square is a Z stabilizer of weight 4. Each half-round shape is a Z stabilizer of weight 2. Half-round shapes are used to make every weight 4 stabilizer into a square. They are not described in the main text.}
\label{fig:layout_big_unfolded}
\end{figure}

\begin{theorem}
The $Z$ stabilizer group of $QRM_6(1,2)$ is contained in the $Z$ stabilizer group of the big unfolded code.
\label{thm:Z_stab_big_contains_Z_stab_6_1_2}
\end{theorem}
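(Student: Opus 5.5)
The plan is to show that every $3$-subcube of the $6$-cube lies in the span $W$ of the $54$ weight-$4$ generators of the big unfolded code. These $3$-subcubes generate $\mathcal{S}_Z$, the $Z$ stabilizer group of $QRM_6(1,2)$. The argument has three steps: locate the removed generators in the diagonal decomposition, show that $W$ still contains every square of every type other than $\{1,2\}$, $\{3,4\}$, $\{5,6\}$, and then write each $3$-subcube as a sum of two squares of an allowed type.

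\textbf{Step 1: where the removed generators sit.} The three removed generators are of types $\{1,2\}$, $\{3,4\}$ and $\{5,6\}$ (the red entries of Table~\ref{tab:big_unfolded_code}). All three satisfy $i_2 = i_1 + 1$, so they lie on the last diagonal $f=5$ of Table~\ref{tab:subspaces_by_type_QRM_6_1_1}. Consequently $W$ contains the first four blocks of Equation~\ref{6_cube_direct_sum_of_polynomial_space} in full. By Definition~\ref{def:diagonals}, that sum of four blocks is $V_4$. Theorem~\ref{thm:first_diagonals} with $f_{max}=4$ then gives
\[
V_4 = \sum_{\substack{I=\{i_1,i_2\} \\ i_2 \geq i_1+2}} \mathbb{F}_2[X_k, k \in \overline{I}] \subset W .
\]
So $W$ contains every square of every type $\{i_1,i_2\}$ with $i_2 - i_1 \geq 2$.

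\textbf{Step 2: the two surviving adjacent types.} On the last diagonal, $W$ still contains exactly one square of type $\{2,3\}$ and one of type $\{4,5\}$. I would rerun the induction step of the proof of Theorem~\ref{thm:first_diagonals} for $I=\{2,3\}$ and $I=\{4,5\}$ separately. That step only uses three ingredients: Corollary~\ref{cor:translations_from_another_type}, the single space $V_{i_1,i_1,i_2,i_2,(a_k)}$ (here one square, since $i_2 = i_1+1$), and $V_{f_{max}}$ with $f_{max}=4$. The corollary is used to translate along coordinates $k_2 > i_2$ via squares of type $\{i_1,k_2\}$, and along coordinates $k_1 < i_1$ via squares of type $\{k_1,i_2\}$. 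All of these translating types have gap at least $2$, so they lie in $V_4 \subset W$ by Step~1. Hence $\mathbb{F}_2[X_k, k\in\overline{\{2,3\}}]$ and $\mathbb{F}_2[X_k, k\in\overline{\{4,5\}}]$ are contained in $W$.

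This is the step that needs the most care. The issue is that the induction step of Theorem~\ref{thm:first_diagonals} is stated for a whole diagonal at once, so I must check that it works type by type. The danger would be if establishing one type required the full span of another type on the same diagonal. Checking the translating types above shows this does not happen, since each of them has gap at least $2$. After Steps~1 and~2, $W$ contains all squares of every type except $\{1,2\}$, $\{3,4\}$ and $\{5,6\}$.

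\textbf{Step 3: decomposing a $3$-subcube.} Let $Q = v + \langle K\rangle$ be a $3$-subcube with $|K|=3$. The pairs $\{1,2\}$, $\{3,4\}$, $\{5,6\}$ are pairwise disjoint, so $K$ contains at most one of them. Hence $K$ contains a pair $\{i,j\}$ that is none of the three. Write $K=\{i,j,k\}$. Then
\[
Q = \bigl(v + \langle \{i,j\}\rangle\bigr) + \bigl(v + e_k + \langle \{i,j\}\rangle\bigr),
\]
a disjoint union of two parallel squares of type $\{i,j\}$. Algebraically, this is $(X_k+v_k+1)R + (X_k+v_k)R = R$, where $R=\prod_{l\notin K}(X_l+v_l+1)$. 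Both squares are in $W$ by Steps~1 and~2, so $Z(Q) \in W$. Since the $3$-subcubes generate the $Z$ stabilizer group of $QRM_6(1,2)$, that group is contained in $W$, which proves the statement.

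As a sanity check on dimensions: $\dim RM_6(2) = 22 \leq 54$. It is also consistent that the three removed squares themselves are not in $RM_6(2)$, since they are degree-$4$ polynomials.
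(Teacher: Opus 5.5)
Your proof is correct, and its last step matches the paper's. In both, a $3$-subcube of type $K$ is written as the sum of two parallel squares of a type $\{i,j\}\subset K$ other than $\{1,2\},\{3,4\},\{5,6\}$.

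The difference is in showing that the span $W$ of the $54$ generators contains \emph{all} squares of the allowed types. The paper's proof simply asserts this. Its only support is the dimension count $57-3=54$ in a figure caption, which relies on the independence of the $57$ generators of $QRM_6(1,1)$. The paper then uses translation invariance and the symmetry of $\{\{1,2\},\{3,4\},\{5,6\}\}$ to reduce to the standard cubes of types $\{1,2,3\}$ and $\{1,3,5\}$. You instead derive the containment from Theorem~\ref{thm:first_diagonals} with $f_{max}=4$, after locating the removed generators on the last diagonal. Your ``consequently'' in Step~1 silently uses that each Gray-code edge has a single type, so the $49$ product squares split type by type into the green entries. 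That is true and worth stating. You also decompose every translate directly, so you need no reduction to standard cubes. This makes your argument more self-contained.

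Your decomposition is also stated correctly, whereas the paper's displayed identities are not. For example, $Z_{\langle 1,2,3\rangle}=Z_{\langle 1,3\rangle}+Z_{e_1+\langle 2,3\rangle}$ fails: the two squares share two vertices, so their sum has weight $4$. The intended second term is $Z_{e_2+\langle 1,3\rangle}$, which is exactly your parallel translate.

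Two small remarks. First, your Step~2 is unnecessary. For any $3$-set $K$ the pair $\{\min K,\max K\}$ has gap at least $2$, so Step~1 alone already supplies both squares used in Step~3. Second, in your sanity check the $Z$ stabilizer group of $QRM_6(1,2)$ is $RM_6(3)$, of dimension $42$, not $RM_6(2)$, since $3$-subcubes correspond to degree-$3$ polynomials. This does not affect the argument.
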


\begin{proof}
It is sufficient to prove that every $3$-subcube of the $6$-cube is a $Z$ stabilizer of the big unfolded code since the set of $3$-subcubes is a generating set of the $Z$ stabilizer group of the big unfolded code.

Let $Q$ be a $3$-subcube of the $6$-cube of type $I$, where $I$ is a subset of cardinal $3$ of $\{1,2,3,4,5,6\}$. The $Z$ stabilizer group of the big unfolded code is generated by every square of type $J$ for $J$ that ranges through the $12$ subsets of cardinal $2$ of $\{1,2,3,4,5,6\}$ that are not $\{1,2\}$, $\{3,4\}$ or $\{5,6\}$. Since this generating set is invariant under translations, so is the Z stabilizer group of the big unfoded code. Therefore we can assume without loss of generality that $Q$ is the standard cube of type $I$. We can leverage the symmetries of the set $\{\{1,2\}, \{3,4\}, \{5,6\}\}$ to assume, still without loss of generality, that $I = \{1,2,3\}$ or $I = \{1,3,5\}$.
\begin{itemize}
\item first case: $I = \{1,2,3\}$. \\
In this case $I = \{1,3\} \cup \{2,3\}$. Therefore the Z stabilizer corresponding to the standard $3$-subcube of type $I$ is the sum of the Z stabilizer corresponding to the standard square of type $\{1,3\}$ and the translation along the first coordinate of the standard square of type $\{2,3\}$:
$$Z_{\langle 1,2,3 \rangle} = Z_{\langle 1,3 \rangle}  + Z_{e_1 + \langle 2,3 \rangle}.$$

\item second case: $I = \{1,3,5\}$. \\
In this case $I = \{1,3\} \cup \{3,5\}$. Therefore the Z stabilizer corresponding to the standard $3$-subcube of type $I$ is the sum of the Z stabilizer corresponding to the standard square of type $\{1,3\}$ and the translation along the first coordinate of the standard square of type $\{3,5\}$:
$$Z_{\langle 1,3,5 \rangle} = Z_{\langle 1,3 \rangle}  + Z_{e_1 + \langle 3,5 \rangle}.$$
\end{itemize}
\end{proof}

\begin{figure}[H]%
    \centering
\begin{tikzpicture}
	\begin{pgfonlayer}{nodelayer}
		\node [style=vertex set] (0) at (-2, 2) {1};
		\node [style=vertex set] (1) at (-2, 0) {2};
		\node [style=vertex set] (2) at (-2, -2) {3};
		\node [style=vertex set] (3) at (2, 2) {4};
		\node [style=vertex set] (4) at (2, 0) {5};
		\node [style=vertex set] (5) at (2, -2) {6};
		\node [style=none] (6) at (1.5, -1) {\tiny \textit{16}};
		\node [style=none] (7) at (1.5, 0.5) {\tiny \textit{8}};
		\node [style=none] (8) at (-1.5, -0.5) {\tiny \textit{8}};
		\node [style=none] (9) at (0, 2.25) {\tiny \textit{4}};
		\node [style=none] (10) at (-0.75, 0.25) {\tiny \textit{4}};
		\node [style=none] (11) at (0, -2.25) {\tiny \textit{4}};
		\node [style=none] (12) at (0.5, 1.5) {\tiny \textit{2}};
		\node [style=none] (13) at (-0.5, -1.5) {\tiny \textit{2}};
		\node [style=none] (14) at (-3, 0) {\tiny \textit{2}};
		\node [style=none] (15) at (3, 0) {\tiny \textit{2}};
		\node [style=none] (16) at (-1.75, -1.25) {\tiny \textit{1}};
		\node [style=none] (18) at (1.75, 1.25) {\tiny \textit{1}};
	\end{pgfonlayer}
	\begin{pgfonlayer}{edgelayer}
		\draw (0) to (4);
		\draw (0) to (3);
		\draw (0) to (5);
		\draw (1) to (3);
		\draw (1) to (4);
		\draw (1) to (5);
		\draw (2) to (5);
		\draw (2) to (4);
		\draw (1) to (2);
		\draw (3) to (4);
		\draw [bend left] (3) to (5);
		\draw [bend right] (0) to (2);
	\end{pgfonlayer}
\end{tikzpicture}
\caption{The $6$ nodes represent the $6$ coordinate indices of the $6$-cube. A type of square is characterized by $2$ coordinate indices and corresponds therefore to an edge on this Figure. The edge label is the number of squares of a given type in the basis of the Z stabilizer group of the big unfolded code. The above family with $54$ elements is a basis for the space generated by all squares of $12$ different types: $\{1, 3\}$, $\{1, 4\}$, $\{1, 5\}$, $\{1, 6\}$, $\{2, 3\}$, $\{2, 4\}$, $\{2, 5\}$, $\{2, 6\}$, $\{3, 5\}$, $\{3, 6\}$, $\{4, 5\}$, $\{4, 6\}$. The three missing types correspond to the red background entries in Table~\ref{tab:big_unfolded_code}. Figure~\ref{fig:layout_big_unfolded} shows a planar layout for the $54$ above Z stabilizer generators.}%
\end{figure}

\begin{theorem}
The logical action of $\widetilde{T}_{\langle \{1, 2, 3, 4, 5, 6\} \rangle}$ on the big unfolded code is given by the restriction of its action on $QRM_6(1,2)$ on the 3 logical qubits of the big unfolded code:
\begin{equation*}
\widetilde{T}_{\langle \{1, 2, 3, 4, 5, 6\} \rangle} \equiv CCZ_{\{ \{1, 2\}, \{3, 4\}, \{5, 6\} \}}.
\end{equation*}
\label{thm:logic_T_6_subcube_buc}
\end{theorem}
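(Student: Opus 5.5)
The plan is to view the big unfolded code as a subsystem of $QRM_6(1,2)$. The three codes share the same $X$ stabilizer group $RM_6(1)$. By Theorem~\ref{thm:Z_stab_big_contains_Z_stab_6_1_2}, the $Z$ stabilizer group $\mathcal{S}_Z^{big}$ of the big unfolded code contains the $Z$ stabilizer group of $QRM_6(1,2)$. Hence the codespace of the big unfolded code is the subspace of the $QRM_6(1,2)$ codespace fixed by the extra $Z$ stabilizers. We will identify this subspace logically.

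The first step is to check that the extra generators, namely squares of the 12 types other than $\{1,2\},\{3,4\},\{5,6\}$, act as logical $Z_J$ operators of $QRM_6(1,2)$. This holds because a $2$-subcube of type $J$ represents $Z_J$, and translates differ by $3$-cubes, which are stabilizers. So the big unfolded codespace is the set of $QRM_6(1,2)$ code states in which the 12 logical qubits $J \notin \{\{1,2\},\{3,4\},\{5,6\}\}$ are fixed in logical $\ket{0}$. The remaining three logical qubits are free, with $X$ logicals given by $4$-subcubes of type $\{3,4,5,6\}$, $\{1,2,5,6\}$ and $\{1,2,3,4\}$. These commute with all extra squares: such a square meets a $4$-subcube of complementary-to-non-partner type in an even set, because its type is not the complement of the $4$-cube's type. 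The $Z$ logicals are the three omitted squares.

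The second step is to apply Theorem~\ref{thm:logic_T_6_subcube}. On the $QRM_6(1,2)$ codespace, $\widetilde{T}_{\langle\{1,\dots,6\}\rangle}$ acts as the product of the 15 $CCZ$ gates indexed by the partitions of $\{1,\dots,6\}$ into pairs. It is diagonal in the logical $Z$ basis, so it preserves the subspace where the 12 other qubits are $\ket{0}$. Every $CCZ$ except $CCZ_{\{\{1,2\},\{3,4\},\{5,6\}\}}$ involves at least one of those 12 qubits; a partition into three pairs that avoids them all must be exactly $\{\{1,2\},\{3,4\},\{5,6\}\}$. Each such $CCZ$ therefore acts as the identity on the subspace. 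The restriction is thus $CCZ_{\{\{1,2\},\{3,4\},\{5,6\}\}}$ on the three remaining qubits, which gives the statement.

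The main obstacle is the first step: we must make rigorous that adding the 12 types of squares exactly freezes those 12 logical qubits to $\ket{0}$, and does not also fix a combination involving $\{1,2\},\{3,4\},\{5,6\}$. I would handle this with a dimension count, using $\dim \mathcal{S}_Z^{big}=54=42+12$ and $\dim \mathcal{S}_Z(QRM_6(1,2))=42$. The dimension of $RM_6(2)$ is $22$, so the dimension of $RM_6(3)$ is $42$. The two counts show that the extra generators contribute exactly 12 independent logical $Z_J$'s, so they span the logical subgroup generated by those 12 $Z_J$. A secondary point is that the $\widetilde{T}$ action in Theorem~\ref{thm:logic_T_6_subcube} holds up to a global phase, and this phase persists under restriction to the subspace.
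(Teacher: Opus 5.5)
Your proof is correct and follows essentially the same route as the paper: the big unfolded codespace is the subspace of the $QRM_6(1,2)$ codespace on which the 12 logical qubits other than $\{1,2\},\{3,4\},\{5,6\}$ are frozen to $\ket{0}$, so every $CCZ$ from Theorem~\ref{thm:logic_T_6_subcube} except $CCZ_{\{\{1,2\},\{3,4\},\{5,6\}\}}$ acts trivially there. Your write-up is more explicit than the paper's, which argues via the inclusion of $X$ logical groups and a projection of quotients; also, the dimension count you single out as the main obstacle is not needed, because your first step already places the extra $Z$ stabilizers inside $\mathcal{S}_Z(QRM_6(1,2))$ plus the span of the 12 frozen $Z_J$, which rules out freezing any combination involving the three kept qubits.
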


\begin{proof}
The big unfolded code and $QRM_6(1,2)$ have the same X stabilizer groups:
$$ \mathcal{S}^X_{\text{big unfolded code}} = \mathcal{S}^X_{QRM_6(1,2)}. $$ 
Therefore any $S$ or $T$ operator that is a logical operation of $QRM_6(1,2)$ is also a logical operation of the big unfolded code (since $S$ and $T$ operators trivially commute with $Z$ stabilizers).

Theorem~\ref{thm:Z_stab_big_contains_Z_stab_6_1_2} states that the Z stabilizer group of the big unfolded code contains the Z stabilizer group of $QRM_6(1,2)$:
$$ \mathcal{S}^Z_{\text{big unfolded code}} \supset \mathcal{S}^Z_{QRM_6(1,2)}. $$

Therefore the X logical operator group of the big unfolded code is contained in the X logical operator group of $QRM_6(1,2)$:
$$ \mathcal{L}^X_{\text{big unfolded code}}
\subset
\mathcal{L}^X_{QRM_6(1,2)}. $$

Therefore the projection $\pi$ from the quotient space that defines distinct X logical operators in $QRM_6(1,2)$ to its counterpart in the big unfolded code
$$
\mathcal{L}^X_{QRM_6(1,2)}
/
\mathcal{S}^X_{QRM_6(1,2)}
\xrightarrow{\pi}
\mathcal{L}^X_{\text{big unfolded code}}
/
\mathcal{S}^X_{\text{big unfolded code}}
$$
is given by imposing to logical qubits that exist in $QRM_6(1,2)$ but not in the big unfolded code to be in state $\ket{0}$ (since they are measured by a Z stabilizer in the big unfolded code). Therefore $CCZ$ gates that act in $QRM_6(1,2)$ on a logical qubit that does not exist in the big unfolded code are trivial in the big unfolded code. The result follows.
\end{proof}

\begin{figure}[H]
\center
\begin{quantikz}
\lstick{$\ket{\psi^{(l_1)}}_{(1,2)}$}  & \ctrl{9}      & \ctrl{10}   & \ctrl{11}   & \qw          & \qw       & \qw          & \qw          & \qw         & \qw          & \qw       & \qw          & \qw          & \qw         & \qw          & \qw  \\
\lstick{$\ket{0}_{(1,3)}$}  & \qw          & \qw          & \qw         & \ctrl{5}      & \ctrl{6}  & \ctrl{7}      & \qw          & \qw         & \qw          & \qw       & \qw          & \qw          & \qw         & \qw          & \qw  \\
\lstick{$\ket{0}_{(1,4)}$}  & \qw          & \qw          & \qw         & \qw          & \qw       & \qw          & \ctrl{3}      & \ctrl{5}    & \ctrl{6}      & \qw       & \qw          & \qw          & \qw         & \qw          & \qw  \\
\lstick{$\ket{0}_{(1,5)}$}  & \qw          & \qw          & \qw         & \qw          & \qw       & \qw          & \qw          & \qw         & \qw          & \ctrl{2}   & \ctrl{3}     & \ctrl{5}      & \qw         & \qw          & \qw  \\
\lstick{$\ket{0}_{(1,6)}$}  & \qw          & \qw          & \qw         & \qw          & \qw       & \qw          & \qw          & \qw         & \qw          & \qw       & \qw          & \qw          & \ctrl{1}    & \ctrl{2}      & \ctrl{3} \\
\lstick{$\ket{0}_{(2,3)}$}  & \qw          & \qw          & \qw         & \qw          & \qw       & \qw          & \ctrl{9}      & \qw         & \qw          & \ctrl{8}  & \qw          & \qw          & \ctrl{7}      & \qw         & \qw  \\
\lstick{$\ket{0}_{(2,4)}$}  & \qw          & \qw          & \qw         & \ctrl{8}      & \qw       & \qw          & \qw          & \qw         & \qw          & \qw       & \ctrl{5}     & \qw          & \qw         & \ctrl{4}      & \qw  \\
\lstick{$\ket{0}_{(2,5)}$}  & \qw          & \qw          & \qw         & \qw          & \ctrl{6}   & \qw          & \qw          & \ctrl{4}    & \qw          & \qw       & \qw          & \qw          & \qw         & \qw          & \ctrl{2} \\
\lstick{$\ket{0}_{(2,6)}$}  & \qw          & \qw          & \qw         & \qw          & \qw       & \ctrl{4}     & \qw          & \qw         & \ctrl{2}      & \qw       & \qw          & \ctrl{1}     & \qw         & \qw          & \qw  \\
\lstick{$\ket{\psi^{(l_2)}}_{(3,4)}$}  & \ctrl{5}     & \qw          & \qw         & \qw          & \qw       & \qw          & \qw          & \qw         & \qw          & \qw       & \qw          & \control{} & \qw         & \qw          & \control{} \\
\lstick{$\ket{0}_{(3,5)}$}  & \qw          & \ctrl{3}     & \qw         & \qw          & \qw       & \qw          & \qw          & \qw         &\control{}  & \qw       & \qw          & \qw          & \qw         & \control{} & \qw  \\
\lstick{$\ket{0}_{(3,6)}$}  & \qw          & \qw          & \ctrl{1}    & \qw          & \qw       & \qw          & \qw          & \control{} & \qw          & \qw      &\control{}   & \qw          & \qw         & \qw          & \qw  \\
\lstick{$\ket{0}_{(4,5)}$}  & \qw          & \qw          & \control{}& \qw          & \qw       & \control{} & \qw          & \qw         & \qw          & \qw       & \qw          & \qw          & \control{} & \qw          & \qw  \\
\lstick{$\ket{0}_{(4,6)}$}  & \qw          & \control{} & \qw         & \qw          &\control{}& \qw         & \qw          & \qw         & \qw          &\control{}& \qw          & \qw          & \qw         & \qw          & \qw  \\
\lstick{$\ket{\psi^{(l_3)}}_{(5,6)}$}  & \control{} & \qw          & \qw         & \control{} & \qw       & \qw          & \control{} & \qw         & \qw          & \qw       & \qw          & \qw          & \qw         & \qw          & \qw 
\end{quantikz} \\
=
\begin{quantikz}
\lstick{$\ket{\psi^{(l_1)}}_{(1,2)}$} & \ctrl{1}      & \\
\lstick{$\ket{\psi^{(l_1)}}_{(3,4)}$} & \ctrl{1}      & \\
\lstick{$\ket{\psi^{(l_1)}}_{(5,6)}$} & \control{} &
\end{quantikz}
\caption{The big unfolded code has 3 of the 15 logical qubits of $QRM_6(1,2)$. The logical action of a transversal T gate on the big unfolded code is the same as on $QRM_6(1,2)$ if the 12 additional qubits of $QRM_6(1,2)$ are in the state $\ket{0}$. It is therefore a CCZ gate on the 3 logical qubits of the big unfolded code.}
\end{figure}

\section{3D layout for the Z stabilizer group generators of $QRM_6(1,2)$: the rubik's cube layout}
\label{sec:QRM_6_1_2}

We associate 2 coordinates to each of the 3 spatial directions: $x$ corresponds to $\{1, 2\}$, $y$ corresponds to $\{3, 4\}$ and $z$ corresponds to $\{5, 6\}$. The $Z$ stabilizer group of $QRM_6(1,2)$ has dimension $${6 \choose 6} + {6 \choose 5} + {6 \choose 4} + {6 \choose 3} = 1 + 6 + 15 + 20 = 42.$$ It can be generated by cubes (of dimension 3). Cubes of type $\{i, j, k\}$ with $i \in \{1, 2\}$, $j \in \{3, 4\}$ and $k \in \{5, 6\}$ are local cubes in 3D. Bases of the space generated by edges in a  square are given by:
\begin{itemize}
\item 2 edges of type $\{1\}$ and 1 edge of type $\{2\}$ in the square $\{1, 2\}$,
\item 2 edges of type $\{3\}$ and 1 edge of type $\{4\}$ in the square $\{3, 4\}$,
\item 2 edges of type $\{5\}$ and 1 edge of type $\{6\}$ in the square $\{5, 6\}$.
\end{itemize}

Taking the cartesian product of the 3 above bases (each basis has 3 elements) yields 27 cubes with the following types:
\begin{itemize}
\item 8 cubes of type $\{1, 3, 5\}$,
\item 4 cubes of type $\{2, 3, 5\}$,
\item 4 cubes of type $\{1, 4, 5\}$,
\item 4 cubes of type $\{1, 3, 6\}$,
\item 2 cubes of type $\{2, 4, 5\}$,
\item 2 cubes of type $\{2, 3, 6\}$,
\item 2 cubes of type $\{1, 4, 6\}$,
\item 1 cube of type $\{2, 4, 6\}$.
\end{itemize}

Lemma~\ref{lem:rubik_bulk} below states that these 27 cubes generate the space generated by all cubes of the 8 above types. When arranged into the 3 spatial directions x, y and z, they correspond exactly to the 27 cubes of a rubik's cube (see Figure~\ref{fig:3D_layout_6_cube}). \\

More formally, we define $T_{bulk}$ as the set of types corresponding to $\{1,2\} \times \{3,4\} \times \{5,6\}$:
\begin{align*}
T_{bulk} &= \{1,2\} \times \{3,4\} \times \{5,6\} \\
T_{bulk} &= \{ \{1,3,5\}, \{2,3,5\}, \{1,4,5\}, \{1,3,6\}, \{2,4,5\}, \{2,3,6\}, \{1,4,6\}, \{2,4,6\} \}.
\end{align*}

$$|T_{bulk}| = 8.$$

\begin{itemize}
\item $B_{edges \, of \, the \, square \{1,2\}}$ is made of the two edges of type $\{1\}$ (corresponding respectively to polynomials $X_2$ and $X_2+1$ in $\mathbb{F}_2[X_1, X_2]$) and one edge of type $\{2\}$ (corresponding to polynomial $X_1$ in $\mathbb{F}_2[X_1, X_2]$).
$$|B_{edges \, of \, the \, square \{1,2\}}| = 3.$$

\item $B_{edges \, of \, the \, square \{3,4\}}$ is made of the two edges of type $\{3\}$ (corresponding respectively to polynomials $X_4$ and $X_4+1$ in $\mathbb{F}_2[X_3, X_4]$) and one edge of type $\{4\}$ (corresponding to polynomial $X_3$ in $\mathbb{F}_2[X_3, X_4]$).
$$|B_{edges \, of \, the \, square \{3,4\}}| = 3.$$

\item $B_{edges \, of \, the \, square \{5,6\}}$ is made of the two edges of type $\{5\}$ (corresponding respectively to polynomials $X_6$ and $X_6+1$ in $\mathbb{F}_2[X_5, X_6]$) and one edge of type $\{6\}$ (corresponding to polynomial $X_5$ in $\mathbb{F}_2[X_5, X_6]$).
$$|B_{edges \, of \, the \, square \{5,6\}}| = 3.$$
\end{itemize}

$$B_{bulk} = B_{edges \, of \, the \, square \{1,2\}} \times B_{edges \, of \, the \, square \{3,4\}} \times B_{edges \, of \, the \, square \{5,6\}}.$$

$$|B_{bulk}| = 3*3*3 = 27.$$

\begin{lemma}
Let $B_a$ be a basis for subcubes of types $T_a$ in the cube $\mathbb{F}_2^K$ and $B_b$ be a basis for subcubes of types $T_b$ in the cube $\mathbb{F}_2^L$ where $K$ and $L$ are disjoint coordinate sets. Then $B_a \times B_b$ is a basis for subcubes of types $T_{a \times b}$ in the cube $\mathbb{F}_2^{K \cup L}$, where
$$ T_{a \times b} = \{ I \cup J \, | \, I \in T_a \text{ and } J \in T_b\} .$$
\label{lem:bases_product}
\end{lemma}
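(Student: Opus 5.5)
We work in the polynomial picture. By Lemma~\ref{lem:6_cube_1_type}, the space generated by all subcubes of type $I$ in $\mathbb{F}_2^K$ is $\mathbb{F}_2[X_k, k \in K\setminus I]$. So $\Span(B_a)=\mathcal{A}:=\sum_{I\in T_a}\mathbb{F}_2[X_k, k\in K\setminus I]$ and $\Span(B_b)=\mathcal{B}:=\sum_{J\in T_b}\mathbb{F}_2[X_l, l\in L\setminus J]$, both viewed as subspaces of functions on $\mathbb{F}_2^K$ and $\mathbb{F}_2^L$. Write $B_a\times B_b$ for the family of products $PQ$ with $P\in B_a$, $Q\in B_b$, viewed as functions on $\mathbb{F}_2^{K\cup L}$. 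Geometrically, $PQ$ is the Cartesian product of the subcube $V_P\subset\mathbb{F}_2^K$ (type $I$) and the subcube $V_Q\subset\mathbb{F}_2^L$ (type $J$). This product is a subcube of $\mathbb{F}_2^{K\cup L}$ of type $I\cup J$: its polynomial is $\prod_{k\in K\setminus I}(X_k+\cdot)\prod_{l\in L\setminus J}(X_l+\cdot)$, which is exactly the subcube polynomial for the complement $(K\cup L)\setminus(I\cup J)$. Hence every element of $B_a\times B_b$ is a subcube of a type in $T_{a\times b}$.

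The central identity to prove is
\[
\sum_{I\in T_a,\,J\in T_b}\mathbb{F}_2[X_k,k\in K\setminus I]\otimes \mathbb{F}_2[X_l,l\in L\setminus J]=\mathcal{A}\otimes\mathcal{B}.
\]
Here the left-hand side is the space generated by subcubes of types in $T_{a\times b}$. This follows from Lemma~\ref{lem:6_cube_1_type} together with the fact that
\[
\mathbb{F}_2[X_k, k\in (K\cup L)\setminus(I\cup J)] = \mathbb{F}_2[X_k, k\in K\setminus I]\otimes\mathbb{F}_2[X_l, l\in L\setminus J]
\]
inside functions on $\mathbb{F}_2^{K\cup L}$. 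The identity itself is then bilinearity: the tensor product distributes over finite sums of subspaces, $(\sum_i U_i)\otimes(\sum_j W_j)=\sum_{i,j}U_i\otimes W_j$.

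The one point to justify carefully is that the function space on $\mathbb{F}_2^{K\cup L}$ is canonically $\mathbb{F}_2^{\mathbb{F}_2^K}\otimes\mathbb{F}_2^{\mathbb{F}_2^L}$, with $f\otimes g\mapsto f(x)g(y)$. This holds because the indicator functions of points factor as products of indicators. Under this isomorphism, products of subspaces of functions in disjoint variables are genuine tensor products.

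With the identity in hand, the basis claim follows from linear algebra. If $B_a$ is a basis of $\mathcal{A}$ and $B_b$ is a basis of $\mathcal{B}$, then $\{P\otimes Q\}$ is a basis of $\mathcal{A}\otimes\mathcal{B}$. Spanning is immediate. Freeness follows from the dimension count $|B_a||B_b|=\dim\mathcal{A}\dim\mathcal{B}=\dim(\mathcal{A}\otimes\mathcal{B})$. Alternatively, one can prove freeness directly: given $\sum c_{PQ}P(x)Q(y)=0$, fix $y$ and use freeness of $B_a$ to get $\sum_Q c_{PQ}Q(y)=0$ for all $y$, then use freeness of $B_b$.

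The main obstacle is only the identification of products of function spaces in disjoint variables with tensor products, especially the injectivity needed for the dimension count. The direct freeness argument above avoids relying on that identification, so I would present that argument.
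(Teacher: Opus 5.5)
Your proposal is correct and is essentially the paper's argument. Spanning comes from writing each factor subcube in the given bases and expanding $S_I \times S_J$ bilinearly, which is your distributivity of $\otimes$ over sums; freeness comes from slicing along one factor. Your fix-$y$ freeness argument is actually more careful than the paper's ``group by the second factor'' step: to conclude $\sum_i C_{a,i,j_0}=0$, that step needs the freeness of $B_b$ (applied pointwise in $x$), but the paper leaves this implicit.
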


\begin{proof}
Let $I$ be a type in the cube $\mathbb{F}_2^K$ (i.e. a subset of $K$) and $J$ be a type in the cube $\mathbb{F}_2^K$ (i.e. a subset of $L$). Let $S_I$ be a subcube of type $I$ and $S_J$  be a subcube of type $J$. There exist elements $C_{a, 1}, \dots, C_{a, m_a}$ of $B_a$ and $C_{b, 1}, \dots, C_{b, m_b}$ of $B_b$ such that
\begin{align*}
S_I &= \sum_{i=1}^{m_a} C_{a, i} \\
S_J &= \sum_{j=1}^{m_b} C_{b, j}.
\end{align*} 
Therefore,
$$ S_I \times S_J =  \sum_{i=1}^{m_a} \sum_{j=1}^{m_b} C_{a, i} \times C_{b, j}. $$

This proves that $B_a \times B_b$ generates $T_{a \times b}$.

Assume now for contradiction that $B_a \times B_b$ is not free. Then there exists a non empty, trivial, linear combination of elements of $B_a \times B_b$. We can group these elements by their second factor:
$$ \sum_{j=1}^{m_b} (\sum_{i=1}^{m_{a,j}} C_{a, i, j}) \times C_{b, j} = 0.$$
Since at least one $\sum_{i=1}^{m_{a,j}} C_{a, i, j}$ is non empty, there exists $j_0$ such that $m_{a,j_0} \geq 1$ and
$$ \sum_{i=1}^{m_{a,j_0} }C_{a, i, j_0} = 0. $$
This contradicts the freedom of $B_a$ and therefore $B_a \times B_b$ is free.
\end{proof}

\begin{lemma}
The basis $B_{bulk}$ generates the same space as the set of cubes whose type belongs to $T_{bulk}$:
$$\Span(B_{bulk})
=
\Span(\text{Subcubes}(T_{bulk}).$$ 
\label{lem:rubik_bulk}
\end{lemma}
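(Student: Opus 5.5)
The plan is to deduce Lemma~\ref{lem:rubik_bulk} directly from Lemma~\ref{lem:bases_product}, applied twice. First I check the one-factor statement: in the square with coordinates $\{1,2\}$, the family $B_{edges \, of \, the \, square \{1,2\}} = (X_2, X_2+1, X_1)$ is a basis of the space generated by all edges of the $2$-cube. By Lemma~\ref{lem:6_cube_1_type} applied in the $2$-cube, edges of type $\{1\}$ generate $\mathbb{F}_2[X_2]$ and edges of type $\{2\}$ generate $\mathbb{F}_2[X_1]$. Their sum is $(\mathbb{F}_2[X_1,X_2])_{\leq 1}$, of dimension $3$, which is also the case $m=2$, $d=1$ of Lemma~\ref{lem:subspace_two_types}. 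The family $(X_2, X_2+1, X_1)$ is free: evaluate at $00$, $01$, $10$ as in the proof of Lemma~\ref{lem:Gray_code_square}. Since it has the right cardinality, it is a basis of $\text{Subcubes}(\{\{1\},\{2\}\})$. The same argument gives bases for the squares $\{3,4\}$ and $\{5,6\}$, with type sets $\{\{3\},\{4\}\}$ and $\{\{5\},\{6\}\}$.

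Next I apply Lemma~\ref{lem:bases_product} with $K=\{1,2\}$ and $L=\{3,4\}$. This shows that the $9$ products form a basis of the space generated by subcubes of the types $\{i,j\}$ with $i\in\{1,2\}$ and $j\in\{3,4\}$ in the $4$-cube on coordinates $\{1,2,3,4\}$. Applying Lemma~\ref{lem:bases_product} again with $K=\{1,2,3,4\}$ and $L=\{5,6\}$ shows that $B_{bulk}$ is a basis of the space generated by the subcubes of the $6$-cube whose types lie in $T_{a\times b}=T_{bulk}$. This gives $\Span(B_{bulk})=\Span(\text{Subcubes}(T_{bulk}))$, together with the freeness of $B_{bulk}$.

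The main obstacle is that Lemma~\ref{lem:bases_product} is phrased loosely, so I must confirm that every subcube of the $6$-cube with type in $T_{bulk}$ really is a product $S_I\times S_J$ of a subcube of $\mathbb{F}_2^K$ and a subcube of $\mathbb{F}_2^L$. This holds because a subcube $v+\langle I\cup J\rangle$ with $I\subset K$ and $J\subset L$ corresponds to the polynomial $\prod_{k\in K\setminus I}(X_k+v_k+1)\prod_{l\in L\setminus J}(X_l+v_l+1)$, which is a product of polynomials in disjoint variables. Conversely, every product of basis elements is such a subcube, so it lies in $\text{Subcubes}(T_{bulk})$.

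As a sanity check I will compare dimensions. Writing $\text{Subcubes}(T_{bulk})$ as a tensor product gives $(\mathbb{F}_2[X_1,X_2])_{\leq1}\otimes(\mathbb{F}_2[X_3,X_4])_{\leq1}\otimes(\mathbb{F}_2[X_5,X_6])_{\leq1}$, of dimension $27=|B_{bulk}|$. This matches the freeness argument in Lemma~\ref{lem:bases_product}.
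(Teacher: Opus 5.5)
Your proposal is correct and is essentially the paper's proof, which consists solely of applying Lemma~\ref{lem:bases_product} twice with $T_a=\{\{1\},\{2\}\}$, $T_b=\{\{3\},\{4\}\}$ and $T_c=\{\{5\},\{6\}\}$. Your additional checks make explicit what the paper leaves implicit: that each three-edge family is a basis of the edge space of its square, and that every subcube with type in $T_{bulk}$ is a product of subcubes on disjoint coordinate sets.
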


\begin{proof}
Applying Lemma~\ref{lem:bases_product} twice to $T_{a} = \{ \{1\}, \{2\} \}$, $T_{b} = \{ \{3\}, \{4\} \}$ and $T_{c} = \{ \{5\}, \{6\} \}$, gives the result.
\end{proof}

Lemma~\ref{lem:subspace_two_types} implies that cubes of type $\{i, j, k\}$ give translation in the $k$ coordinate to cubes of type $\{i, j, l\}$.

For instance a cube of type $\{1, 2, 3\}$ can be be translated along the fifth and sixth coordinates with cubes of type $\{i,j,k\}$ with $i \in \{1,2\}$, $j \in \{3,4\}$ and $k \in \{5,6\}$. To obtain translation along the 4th coordinates, we add a second cube of type $\{1, 2, 3\}$, image of the first by the translation along the fourth coordinate.

To the 27 cubes of type $\{i,j,k\}$ with $i \in \{1,2\}$, $j \in \{3,4\}$ and $k \in \{5,6\}$, we add the 15 following cubes to obtain a basis of the Z stabilizer group of $QRM_6(1,2)$:
\begin{itemize}
\item 2 cubes of type $\{1, 2, 3\}$ (translation of each other along the $4^{th}$ coordinate),
\item 1 cube of type $\{1, 2, 4\}$,
\item 1 cube of type $\{1, 3, 4\}$,
\item 1 cube of type $\{2, 3, 4\}$,
\item 2 cubes of type $\{3, 4, 5\}$ (translation of each other along the $6^{th}$ coordinate),
\item 1 cube of type $\{3, 4, 6\}$,
\item 1 cube of type $\{3, 5, 6\}$,
\item 1 cube of type $\{4, 5, 6\}$,
\item 2 cubes of type $\{1, 5, 6\}$ (translation of each other along the $2^{nd}$ coordinate),
\item 1 cube of type $\{2, 5, 6\}$,
\item 1 cube of type $\{1, 2, 5\}$,
\item 1 cube of type $\{1, 2, 6\}$.
\end{itemize}

More formally, we define:
\begin{align*}
T_{bottom} &= \{ \{1,2,3\}, \{1,2,4\} \} \\
T_{top} &= \{ \{1,3,4\}, \{2,3,4\} \} \\
T_{left} &= \{ \{3,4,5\}, \{3,4,6\} \} \\
T_{right} &= \{ \{3,5,6\}, \{4,5,6\} \} \\
T_{back} &= \{ \{1,5,6\}, \{2,5,6\} \} \\
T_{front} &= \{ \{1,2,5\}, \{1,2,6\} \}. \\
\end{align*}

Even though the following bases are trivial (they each have cardinal 1), we will need them later to define bases as products of other bases.
\begin{itemize}
\item $B_{square \, of \, the \, square \, \{1,2\}}$ is made of the entire square $\{1, 2\}$ in itself (corresponding to polynomial $1$ in $\mathbb{F}_2[X_1, X_2]$).
$$|B_{square \, of \, the \, square \{1,2\}}| = 1.$$

\item $B_{square \, of \, the \, square \, \{3,4\}}$ is made of the entire square $\{3, 4\}$ in itself (corresponding to polynomial $1$ in $\mathbb{F}_2[X_3, X_4]$).
$$|B_{square \, of \, the \, square \{3,4\}}| = 1.$$

\item $B_{square \, of \, the \, square \, \{5,6\}}$ is made of the entire square $\{5, 6\}$ in itself (corresponding to polynomial $1$ in $\mathbb{F}_2[X_5, X_6]$).
$$|B_{square \, of \, the \, square \, \{5,6\}}| = 1.$$
\end{itemize}

We are now ready to define $B_{bottom}$, $B_{top}$, $B_{left}$, $B_{right}$, $B_{back}$, $B_{front}$. 

\begin{equation*}
B_{bottom} = B_{square \, of \, the \, square \, \{1,2\}} \times B_{edges \, of \, the \, square \, \{3,4\}}
\end{equation*}
\begin{equation*}
|B_{bottom}| = 1 * 3 = 3.
\end{equation*}

\begin{lemma}
The concatenation of bases $(B_{bulk}, B_{bottom})$ generates the same space as the set of cubes whose type belongs to $T_{bulk}$ or to $T_{bottom}$:
$$\Span(B_{bulk}, B_{bottom})
=
\Span(\text{Subcubes}(T_{bulk} \cup T_{bottom}).$$ 
\label{lem:rubik_bulk_bottom}
\end{lemma}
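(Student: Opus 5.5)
The plan is to prove the two inclusions separately, reducing the nontrivial one to a dimension count carried out modulo $\Span(B_{bulk})$. Throughout I read the three elements of $B_{bottom}$ as cubes of the $6$-cube. The entire square $\{1,2\}$ contributes the factor $1$, and the three edges of the square $\{3,4\}$ contribute $X_4$, $X_4+1$ (type $\{3\}$) and $X_3$ (type $\{4\}$). The coordinates $5,6$ are pinned to a fixed vertex, i.e.\ multiplied by a fixed $q=(X_5+a_5)(X_6+a_6)$. So $B_{bottom}=\{X_4q,\,(X_4+1)q,\,X_3q\}$, which are cubes of types $\{1,2,3\}$, $\{1,2,3\}$ and $\{1,2,4\}$.

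\emph{Inclusion $\subset$.} This is immediate. Every element of $B_{bulk}$ is a cube with type in $T_{bulk}$, and every element of $B_{bottom}$ is a cube with type in $T_{bottom}$.

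\emph{Inclusion $\supset$.} First identify $\Span(B_{bulk})$ algebraically. By Lemma~\ref{lem:rubik_bulk} and Lemma~\ref{lem:Gray_code_square} applied to each pair of coordinates, $\Span(B_{bulk})$ equals the product space
$$W:=(\mathbb{F}_2[X_1,X_2])_{\leq 1}\,(\mathbb{F}_2[X_3,X_4])_{\leq 1}\,(\mathbb{F}_2[X_5,X_6])_{\leq 1},$$
which has dimension $27$. By Lemma~\ref{lem:6_cube_1_type}, the cubes of types $\{1,2,3\}$ and $\{1,2,4\}$ span $\mathbb{F}_2[X_4,X_5,X_6]+\mathbb{F}_2[X_3,X_5,X_6]$. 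It therefore suffices to show that
$$\mathbb{F}_2[X_4,X_5,X_6]+\mathbb{F}_2[X_3,X_5,X_6]\subset W+\Span(B_{bottom}).$$

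To do this, write any such polynomial as $\ell(X_3,X_4)\,p(X_5,X_6)$, where $\ell$ is one of $1,X_3,X_4$ and $p\in\mathbb{F}_2[X_5,X_6]$. Split $p=p_{\leq1}+c\,X_5X_6$. The piece $\ell\,p_{\leq 1}$ lies in $W$, because $1\in(\mathbb{F}_2[X_1,X_2])_{\leq1}$ and $\ell\in(\mathbb{F}_2[X_3,X_4])_{\leq1}$. Modulo $W$, the target space is therefore spanned by $(\mathbb{F}_2[X_3,X_4])_{\leq1}\,X_5X_6$, a space of dimension at most $3$.

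Now look at $B_{bottom}$ modulo $W$. Since $q=X_5X_6+(\text{degree}\leq1 \text{ in } X_5,X_6)$, each element $\ell q$ of $B_{bottom}$ satisfies $\ell q\equiv \ell X_5X_6 \pmod W$. The three factors $X_4,\,X_4+1,\,X_3$ form a basis of $(\mathbb{F}_2[X_3,X_4])_{\leq1}$ by Lemma~\ref{lem:Gray_code_square}. Hence $B_{bottom}$ spans $(\mathbb{F}_2[X_3,X_4])_{\leq1}X_5X_6$ modulo $W$, and this gives the inclusion $\supset$.

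As a byproduct, one can check that $(\mathbb{F}_2[X_3,X_4])_{\leq1}X_5X_6\cap W=0$ by looking at the degree in $X_5,X_6$. So $(B_{bulk},B_{bottom})$ is in fact a basis, of size $30$.

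The only step needing care is the bookkeeping in the reduction modulo $W$. In particular I must make sure that the degree-$\leq1$ part of $q$ in $(X_5,X_6)$, multiplied by $\ell$, genuinely lands in $W$. It does, since each factor lies in the corresponding degree-$\leq 1$ space. The rest is routine.
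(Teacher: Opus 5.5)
Your proof is correct, but it uses a different mechanism from the paper's.

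The paper argues by translations. Lemma~\ref{lem:bases_product} gives that $B_{bottom}$ spans the cubes of types in $T_{bottom}$ inside the $4$-cube on coordinates $1,2,3,4$, at one fixed position in coordinates $5,6$. Corollary~\ref{cor:translations_from_another_type}, used in its cube form (Lemma~\ref{lem:subspace_two_types} with $d=3$), then lets the bulk cubes of types $\{i,k,5\}$ and $\{i,k,6\}$ supply all translates along coordinates $5$ and $6$.

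You instead identify $\Span(B_{bulk})$ as the product space $W$ and filter by degree in $(X_5,X_6)$. $W$ absorbs every relevant term of degree at most $1$ in $(X_5,X_6)$, so modulo $W$ only $(\mathbb{F}_2[X_3,X_4])_{\leq 1}X_5X_6$ remains. There $B_{bottom}$ reduces to the Gray-code basis $(X_4,X_4+1,X_3)$ times $X_5X_6$.

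The two arguments are the same idea seen from two sides: the sum of two translates along coordinate $5$ or $6$ is exactly a term of lower degree in $(X_5,X_6)$. Your version nonetheless buys several things:
\begin{itemize}
\item it is self-contained and does not stretch a corollary stated for squares to cubes;
\item it makes explicit the paper's unstated convention that $B_{bottom}$ sits at a fixed vertex in coordinates $5,6$, and shows that this vertex, even chosen separately for each element, is irrelevant;
\item it yields linear independence of $(B_{bulk},B_{bottom})$ immediately, whereas the paper gets freeness only globally from the count $42$ in Theorem~\ref{thm:rubik_basis}.
\end{itemize}
The paper's translation language is what it reuses verbatim for the other faces and for $QRM_7(2,2)$. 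Your filtration extends just as easily: modulo $W+\Span(B_{bottom})$, the top types contribute only $\Span(X_1,X_2)\,X_5X_6$, which explains why $|B_{top}|=2$.

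One cosmetic point: an element of $\mathbb{F}_2[X_4,X_5,X_6]+\mathbb{F}_2[X_3,X_5,X_6]$ is a sum of products $\ell\,p$, not a single one. Linearity handles this.
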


\begin{proof}
Lemma~\ref{lem:rubik_bulk} states that $\Span(B_{bulk}) = \Span(\text{Subcubes}(T_{bulk}))$. Lemma~\ref{lem:bases_product} implies that $\Span(B_{bottom}) = \Span(\text{Subcubes}(T_{bottom}))$ in $\mathbb{F}_2^{\{1,2,3,4\}}$, the 4-cube with coordinates 1, 2, 3 and 4.

Let $I = \{i, j, k\} \in T_{bottom}$. Without loss of generality we can assume that $i \in \{1,2\}$ and $k \in \{3,4\}$. Therefore $\{i,k,5\}$ and $\{i,k,6\}$ belong to $T_{bulk}$ and Corollary~\ref{cor:translations_from_another_type} implies that we can translate cubes of type $I$ along coordinates 5 and 6. The result follows.
\end{proof}

\begin{equation*}
B_{top} = B_{square \, of \, the \, square \, \{3,4\}} \times (\text{edge}(X_2 \in \mathbb{F}_2[X_1, X_2]), \, \text{edge}(X_1 \in \mathbb{F}_2[X_1, X_2]))
\end{equation*}
\begin{equation*}
|B_{top}| = 1 * 2 = 2.
\end{equation*}

\begin{lemma}
The concatenation of bases $(B_{bulk}, B_{bottom}, B_{top})$ generates the same space as the set of cubes whose type belongs to $T_{bulk}$, $T_{bottom}$ or $T_{top}$:
$$\Span(B_{bulk}, B_{bottom}, B_{top})
=
\Span(\text{Subcubes}(T_{bulk} \cup T_{bottom} \cup T_{top}).$$ 
\label{lem:rubik_bulk_bottom_top}
\end{lemma}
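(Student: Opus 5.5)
We prove the two inclusions. The inclusion $\subset$ is immediate: every element of $B_{bulk}$, $B_{bottom}$ and $B_{top}$ is a cube whose type lies in $T_{bulk}\cup T_{bottom}\cup T_{top}$. In particular, the elements of $B_{top}$ are products of the whole square $\{3,4\}$ with an edge of type $\{1\}$ or $\{2\}$ in the square $\{1,2\}$. They are therefore cubes of type $\{1,3,4\}$ or $\{2,3,4\}$, with all coordinates $5,6$ fixed to the value $0\,0$ inherited from the trivial bases.

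For the converse we use Lemma~\ref{lem:rubik_bulk_bottom}, so that $\Span(\text{Subcubes}(T_{bulk}\cup T_{bottom}))$ is already available. It remains to show that every cube of type $\{1,3,4\}$ and every cube of type $\{2,3,4\}$ lies in $\Span(B_{bulk},B_{bottom},B_{top})$. We proceed in two steps.

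\textbf{Step 1: the $4$-cube with coordinates $1,2,3,4$.} Here we work with coordinates $5,6$ frozen at the values used by $B_{top}$. Polynomially, the space generated by cubes of types $\{1,2,3\},\{1,2,4\},\{1,3,4\},\{2,3,4\}$ inside $\mathbb{F}_2^{\{1,2,3,4\}}$ is $(\mathbb{F}_2[X_1,\dots,X_4])_{\le 1}$, of dimension $5$. The basis $B_{bottom}$ gives $\mathbb{F}_2[X_3,X_4]_{\le 1}$, i.e.\ $3$ elements, which are the cubes of types $\{1,2,3\}$ and $\{1,2,4\}$. The basis $B_{top}$ adds $X_2$ and $X_1$. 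Since $X_1,X_2,1,X_3,X_4$ span the degree $\le 1$ polynomials, the $5$ elements form a basis, and every cube of type $\{1,3,4\}$ or $\{2,3,4\}$ with fixed coordinates $5,6$ is in the span. Equivalently, one can argue geometrically: the standard cube of type $\{1,3,4\}$ is one element of $B_{top}$, and its translate along coordinate $2$ is the sum of this element with the whole $4$-cube, which lies in $\Span(B_{bottom})$ as the sum of two translated cubes of type $\{1,2,3\}$. The same argument handles type $\{2,3,4\}$ with translation along coordinate $1$.

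\textbf{Step 2: translations along coordinates $5$ and $6$.} For a type $I=\{1,3,4\}$ we use Corollary~\ref{cor:translations_from_another_type}, or directly Lemma~\ref{lem:subspace_two_types} applied to $I$ and $J=\{1,3,5\}$ (respectively $\{1,3,6\}$), both of which lie in $T_{bulk}$. This shows that the difference between a cube of type $\{1,3,4\}$ and its translate along coordinate $5$ (respectively $6$) lies in $\Span(\text{Subcubes}(\{\{1,3,5\}\}))$ (respectively $\{1,3,6\}$). Concretely, the sum of a type-$\{1,3,4\}$ cube and its translate by $e_5$ is a $4$-cube of type $\{1,3,4,5\}$, which is a sum of type-$\{1,3,5\}$ cubes translated along coordinate $4$. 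Type $\{2,3,4\}$ is handled the same way with $\{2,3,5\}$ and $\{2,3,6\}$. Iterating over coordinates $5$ and $6$ yields all four $(x_5,x_6)$ translates, and combined with Step 1 this gives every cube of the two new types.

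The main obstacle we expect is bookkeeping rather than substance. We must be careful that the frozen coordinates of $B_{top}$ (the $5,6$ values from the trivial square bases) match those of $B_{bottom}$ in Step 1; otherwise the $4$-cube argument needs an extra translation by Step 2 first. Since Step 2 works for any cube of those types, we can apply it before Step 1 if needed, so the argument is robust to this choice.
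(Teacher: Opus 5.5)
Your proof is correct and follows the paper's route: translations along coordinates $5$ and $6$ come from the bulk types via Corollary~\ref{cor:translations_from_another_type}, as in Lemma~\ref{lem:rubik_bulk_bottom}, and the missing translation of the two $B_{top}$ cubes along coordinate $2$ (resp.\ $1$) is supplied by $T_{bottom}$ cubes inside the $4$-cube on coordinates $1,2,3,4$. The paper names type $\{1,2,4\}$ (resp.\ $\{1,2,3\}$) for this, while you use the whole $4$-cube as a sum of two $\{1,2,3\}$ cubes, which is equivalent; and although $B_{top}$ has no factor on coordinates $5,6$ (so its frozen values there are not ``inherited from trivial bases''), this is harmless because Lemma~\ref{lem:rubik_bulk_bottom} already puts every $T_{bottom}$ cube, in every $(x_5,x_6)$ slice, in the span.
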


\begin{proof}
The proof is almost the same as the one of Lemma~\ref{lem:rubik_bulk_bottom}. However there are only 2 elements in $B_{top}$ whereas there were 3 elements in $B_{bottom}$. This is because cubes of type $\{1,2,4\}$ give translations along the $2^{nd}$ coordinate to cubes of type $\{1,3,4\}$ and cubes of type $\{1,2,3\}$  give translations along the $1^{st}$ coordinate to cubes of type $\{2,3,4\}$.
\end{proof}

\begin{equation*}
B_{left} = B_{square \, of \, the \, square \, \{3,4\}} \times B_{edges \, of \, the \, square \, \{5,6\}}
\end{equation*}
\begin{equation*}
|B_{left}| = 1 * 3 = 3.
\end{equation*}

\begin{equation*}
B_{right} = B_{square \, of \, the \, square \, \{5,6\}} \times (\text{edge}(X_4 \in \mathbb{F}_2[X_3, X_4]), \, \text{edge}(X_3 \in \mathbb{F}_2[X_3, X_4]))
\end{equation*}
\begin{equation*}
|B_{right}| = 1 * 2 = 2.
\end{equation*}

\begin{equation*}
B_{back} = B_{square \, of \, the \, square \, \{5,6\}} \times B_{edges \, of \, the \, square \, \{1,2\}}
\end{equation*}
\begin{equation*}
|B_{left}| = 1 * 3 = 3.
\end{equation*}

\begin{equation*}
B_{front} = B_{square \, of \, the \, square \, \{1,2\}} \times (\text{edge}(X_6 \in \mathbb{F}_2[X_5, X_6]), \, \text{edge}(X_5 \in \mathbb{F}_2[X_5, X_6]))
\end{equation*}
\begin{equation*}
|B_{front}| = 1 * 2 = 2.
\end{equation*}

\begin{theorem}
The concatenation of bases
$$B_{QRM_6(1,2)} := (B_{bulk}, B_{bottom}, B_{top}, B_{left}, B_{right}, B_{back}, B_{front})$$
is a basis for the Z stabilizer group of $QRM_6(1,2)$.
\label{thm:rubik_basis}
\end{theorem}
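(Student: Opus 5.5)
The plan is to combine a cardinality count with a spanning argument. First I check the count: $|B_{bulk}|+|B_{bottom}|+|B_{top}|+|B_{left}|+|B_{right}|+|B_{back}|+|B_{front}| = 27+3+2+3+2+3+2 = 42$. This equals the dimension of the $Z$ stabilizer group of $QRM_6(1,2)$, namely $RM_6(2)$. Every element of $B_{QRM_6(1,2)}$ is a $3$-subcube, so the concatenation lies in that group. It therefore suffices to prove that it spans $\text{Subcubes}(T)$, where $T$ is the set of all $20$ types of cardinal $3$. Freeness then follows from the dimension count.

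Next I check that the types cover everything. The set $T_{bulk}$ has $8$ types. The sets $T_{bottom},T_{top},T_{left},T_{right},T_{back},T_{front}$ have $2$ types each, for $12$ more. These $12$ are exactly the $3$-subsets containing one of the pairs $\{1,2\}$, $\{3,4\}$, $\{5,6\}$. The union is therefore all $20$ types, and it remains to show that all translates of each type are generated.

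I would extend the chain of Lemmas~\ref{lem:rubik_bulk}, \ref{lem:rubik_bulk_bottom} and \ref{lem:rubik_bulk_bottom_top} by adding one face at a time, in the order left, right, back, front. The argument is cyclic, mapping $(\{1,2\},\{3,4\},\{5,6\})$ to $(\{3,4\},\{5,6\},\{1,2\})$, so left/right mirror bottom/top and back/front do likewise.

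\textbf{Left.} Take a type $\{3,4,k\}$ with $k\in\{5,6\}$. Lemma~\ref{lem:bases_product} shows that $B_{left}$ spans all such cubes inside $\mathbb{F}_2^{\{3,4,5,6\}}$. Translations along coordinates $1$ and $2$ come from Corollary~\ref{cor:translations_from_another_type}. Concretely, the bulk types $\{i,3,k\}$ and $\{i,4,k\}$ with $i\in\{1,2\}$ differ from $\{3,4,k\}$ in one element, and Lemma~\ref{lem:subspace_two_types} converts a cube of type $\{3,4,k\}$ into its translate along $i$ modulo bulk cubes.

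\textbf{Right.} Take a type $\{j,5,6\}$ with $j\in\{3,4\}$. Only two generators are given, so I follow the proof of Lemma~\ref{lem:rubik_bulk_bottom_top}. Cubes of type $\{4,5,6\}$ are translated along coordinate $3$ using $\{3,4,5\}$ or $\{3,4,6\}$, which are left types already available. Cubes of type $\{3,5,6\}$ are translated along coordinate $4$ using left types in the same way. Translations along coordinates $1$ and $2$ come from bulk types $\{i,j,5\}$ and $\{i,j,6\}$.

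\textbf{Back and front.} These are handled identically after the cyclic relabelling. Front uses back types to supply the missing translation, exactly as top uses bottom.

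The step I expect to be the main obstacle is the bookkeeping of translations in the two-element faces (top, right, front). There I must check that each missing translation coordinate is supplied by a type already generated earlier in the order, and that this dependency is acyclic. I would verify this explicitly for each of the three cases. The dimension count then closes the argument: spanning $42$ elements in a $42$-dimensional space gives a basis.
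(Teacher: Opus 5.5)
Your proposal is correct and follows the paper's proof essentially step for step. The count $27+3+2+3+2+3+2=42$ reduces the claim to spanning. The left and back faces are then handled exactly as in Lemma~\ref{lem:rubik_bulk_bottom}, and the right and front faces as in Lemma~\ref{lem:rubik_bulk_bottom_top}, with left (resp.\ back) types supplying the one missing translation direction.

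One slip: the $Z$ stabilizer group of $QRM_6(1,2)$ is $RM_6(6-2-1)=RM_6(3)$, not $RM_6(2)$, which has dimension $22$. The dimension $42$ and the $3$-subcubes you actually use are the right ones, so this does not affect your argument.
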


\begin{proof}
It is sufficient to show that the concatenation of bases $B_{QRM_6(1,2)}$ generates the same space as the set of cubes whose type belongs to $T_{bulk}$, $T_{bottom}$, $T_{top}$, $T_{left}$, $T_{right}$, $T_{back}$ or $T_{front}$. Indeed $T_{bulk}$, $T_{bottom}$, $T_{top}$, $T_{left}$, $T_{right}$, $T_{back}$ and $T_{front}$ partition the subsets of cardinal 3 of $\{1,2,3,4,5,6\}$ and $ |B_{bulk}|+|B_{bottom}|+|B_{top}|+|B_{left}|+|B_{right}|+|B_{back}|+|B_{front}|=42$, which is the dimension of the Z stabilizer group of $QRM_6(1,2)$.
The same proof as Lemma~\ref{lem:rubik_bulk_bottom} shows that adding $B_{left}$ to $B_{bulk}$ generates all cubes of type $T_{left}$ and that adding $B_{back}$ to $B_{bulk}$ generates all cubes of type $T_{back}$. The same proof as Lemma~\ref{lem:rubik_bulk_bottom_top} shows that adding $B_{right}$ to $(B_{bulk}, B_{left})$ generates all cubes of type $T_{right}$ and that adding $B_{front}$ to $(B_{bulk}, B_{back})$ generates all cubes of type $T_{front}$. 
\end{proof}

As a sanity check, we verified numerically (see source code\footnote{\url{https://github.com/vivienlonde/unfolded_Quantum_Reed_Muller/tree/main/6_cube_3D/3D_Z_stab_basis.py}}) that $B_{QRM_6(1,2)}$ is indeed a basis for the Z stabilizer group of $QRM_6(1,2)$.

\begin{figure}[H]
\centering
\includegraphics[width=0.7\textwidth]{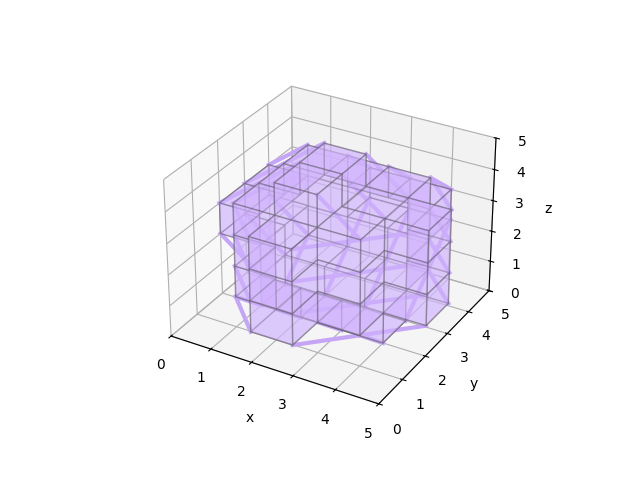}
\caption{A 3D interactive version can be generated with this script \url{https://github.com/vivienlonde/unfolded_Quantum_Reed_Muller/tree/main/6_cube/3D_layout.py}. The 42 generators of the Z stabilizer group of $QRM_6(1,2)$ correspond to 42 local cubes in 3D. Some weight 2 stabilizers are used to turn the 15 stabilizer generators of $B_{bottom}$, $B_{top}$, $B_{left}$, $B_{right}$, $B_{back}$ and $B_{front}$ into local cubes in 3D.}
\label{fig:3D_layout_6_cube}
\end{figure}

A transversal $T$ gate on the 64 physical qubits of $QRM_6(1,2)$ has the logical effect of 15 $CCZ$ gates on the 15 logical qubits of $QRM_6(1,2)$ (see Theorem~\ref{thm:logic_T_6_subcube} for the proof and Figure~\ref{fig:CCZ_circuit_15_logical_qubits} for an illustration of which triplets of logical qubits undergo a $CCZ$ gate).

\section{3D layout for the Z stabilizer group generators of $QRM_7(2,2)$}
\label{sec:QRM_7_2_2}

We partition the seven coordinates of $QRM_7(2,2)$ into 3 groups corresponding to the 3 spatial directions x, y and z: coordinates 1, 2 and 3 are unfolded in the x direction, coordinates 4, 5 and 6 are unfolded in the y direction and coordinate 7 occupies the z direction.

$Z$ stabilizers generators of $QRM_7(2,2)$ are cubes (of dimension 3). The $Z$ stabilizer group has dimension $${7 \choose 7} + {7 \choose 6} + {7 \choose 5} + {7 \choose 4} + {7 \choose 3} = 1 + 7 + 21 + 35 + 35 = 99.$$

Cubes of type $\{i, j, k\}$ with $i \in \{1, 2, 3\}$, $j \in \{4, 5, 6\}$ and $k \in \{7\}$ are local cubes in the 3D layout. There are 49 such cubes. Lemma~\ref{lem:bases_product} implies that these 49 cubes form a basis for cubes of these 9 types.

More formally, we define
\begin{align*}
&T_{main \, layer, \, center} = \{1, 2, 3\} \times \{4, 5, 6\} \times \{7\} \\
&= \{ \{1, 4, 7\}, \{1, 5, 7\}, \{1, 6, 7\}, \{2, 4, 7\}, \{2, 5, 7\}, \{2, 6, 7\}, \{3, 4, 7\}, \{3, 5, 7\}, \{3, 6, 7\} \}.
\end{align*}
$$ |T_{main layer, \, center}| = 9.$$

\begin{itemize}
\item $B_{edges \, of \, the \, cube \, \{1,2,3\}}$ is made of 4 edges of type $\{1\}$ (corresponding respectively to polynomials $(X_2+1)(X_3+1)$, $X_2 (X_3+1)$,  $X_2 X_3$ and $(X_2+1) X_3$ in $\mathbb{F}_2[X_1,X_2,X_3]$), 2 edges of type $\{2\}$ (corresponding respectively to polynomials $X_1 (X_3+1)$ and $X_1 X_3$ in $\mathbb{F}_2[X_1,X_2,X_3]$) and 1 edge of type $\{3\}$ (corresponding to polynomial $(X_1+1) X_2$ in $\mathbb{F}_2[X_1,X_2,X_3]$) (see Figure~\ref{fig:Gray_code_cube_123}).
$$|B_{edges \, of \, the \, cube \, \{1,2,3\}}| = 7.$$

\item $B_{edges \, of \, the \, cube \, \{4,5,6\}}$ is made of 4 edges of type $\{6\}$ (corresponding respectively to polynomials $(X_4+1)(X_5+1)$, $(X_4+1) X_5$,  $X_4 X_5$ and $X_4 (X_5+1)$ in $\mathbb{F}_2[X_4,X_5,X_6]$), 2 edges of type $\{5\}$ (corresponding respectively to polynomials $(X_4+1) X_6$ and $X_4 X_6$ in $\mathbb{F}_2[X_4,X_5,X_6]$) and 1 edge of type $\{4\}$ (corresponding to polynomial $X_5 (X_6+1)$ in $\mathbb{F}_2[X_4,X_5,X_6]$) (see Figure~\ref{fig:Gray_code_cube_456}).
$$|B_{edges \, of \, the \, cube \, \{4,5,6\}}| = 7.$$

\item $B_{edge \, of \, the \, edge \, \{7\}}$ is made of the entire edge $\{7\}$ itself (corresponding to polynomial $1$ in $\mathbb{F}_2[X_7]$).
$$|B_{edge \, of \, the \, edge \, \{7\}}| = 1.$$
\end{itemize}

$$ B_{main \, layer, \, center} = B_{edges \, of \, the \, cube \, \{1,2,3\}} \times B_{edges \, of \, the \, cube \, \{4,5,6\}} \times B_{edge \, of \, the \, edge \, \{7\}}. $$
$$ |B_{main \, layer, \, center}| = 7*7*1 = 49. $$

\begin{lemma}
The basis $B_{main \, layer, \, center}$ generates the same space as the set of cubes whose type belongs to $T_{main \, layer, \, center}$:
$$\Span(B_{main \, layer, \, center})
=
\Span(Subcubes(T_{main \, layer, \, center})).$$
\label{lem:main_layer_center}
\end{lemma}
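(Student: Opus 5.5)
The plan is to apply Lemma~\ref{lem:bases_product} twice, exactly as in the proof of Lemma~\ref{lem:rubik_bulk}, with three factor cubes on disjoint coordinate sets: $\mathbb{F}_2^{\{1,2,3\}}$, $\mathbb{F}_2^{\{4,5,6\}}$ and $\mathbb{F}_2^{\{7\}}$. The key preliminary step is to identify, in each factor, which set of types the given family is a basis for.

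For the first factor, take $T_a = \{\{1\},\{2\},\{3\}\}$, the set of all edge types of the 3-cube with coordinates $1,2,3$. The space generated by all edges of the 3-cube is $\text{Subcubes}(T_a) = (\mathbb{F}_2[X_1,X_2,X_3])_{\leq 2}$, by the identification $RM_3(2) = \text{Subcubes}$ of all $1$-dimensional types. I will check that the seven polynomials listed for $B_{edges \, of \, the \, cube \, \{1,2,3\}}$ are exactly the seven polynomials of Equation~\ref{eq:Gray_code_cube_123}. These are $(X_2+1)(X_3+1)$, $X_2(X_3+1)$, $X_2X_3$, $(X_2+1)X_3$, $X_1(X_3+1)$, $X_1X_3$ and $(X_1+1)X_2$. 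Lemma~\ref{lem:Gray_code_cube}, applied with $(X,Y,Z) = (X_1,X_2,X_3)$, then shows that this family is a basis of $\text{Subcubes}(T_a)$.

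For the second factor, take $T_b = \{\{4\},\{5\},\{6\}\}$. The listed family is $(X_4+1)(X_5+1)$, $(X_4+1)X_5$, $X_4X_5$, $X_4(X_5+1)$, $(X_4+1)X_6$, $X_4X_6$, $X_5(X_6+1)$. This is the image of the family in Lemma~\ref{lem:Gray_code_cube} under the substitution $X \mapsto X_4$, $Y \mapsto X_5$, $Z \mapsto X_6$ up to a relabeling of variables. I expect the simplest route is to redo the free-family argument of Lemma~\ref{lem:Gray_code_cube} directly, by evaluating along a Gray code order of $\mathbb{F}_2^{\{4,5,6\}}$. Concretely, I will find an ordering of the vertices such that each polynomial, taken in a suitable order, is the first one to be nonzero at some new vertex. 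The evaluation is then triangular, the family is free, and since it has $7 = \dim(\mathbb{F}_2[X_4,X_5,X_6])_{\leq 2}$ elements it is a basis.

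This verification is the only real obstacle. Equation~\ref{eq:Gray_code_cube_456} as printed contains a duplicated entry, so I cannot simply quote it and must do the triangular check by hand. One workable order is to peel off $X_5(X_6+1)$ at vertex $010$, then $(X_4+1)X_6$ at $001$, then $X_4X_6$ at $101$, and then handle the four type-$\{6\}$ edges. These four edges are disjoint and each is the unique remaining polynomial supported on one of its vertices.

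For the third factor, $T_c = \{\{7\}\}$, and the single polynomial $1$ is trivially a basis of the space of the one edge of $\mathbb{F}_2^{\{7\}}$. Applying Lemma~\ref{lem:bases_product} to $(T_a,T_b)$ gives a basis for the types $\{i,j\}$ with $i \in \{1,2,3\}$ and $j \in \{4,5,6\}$ in $\mathbb{F}_2^{\{1,\dots,6\}}$. Applying it again with $T_c$ gives the types $\{i,j,7\}$, which is exactly $T_{main \, layer, \, center}$. This yields $\Span(B_{main \, layer, \, center}) = \Span(\text{Subcubes}(T_{main \, layer, \, center}))$.
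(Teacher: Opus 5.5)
Your overall strategy is exactly the paper's. Its proof consists only of the two applications of Lemma~\ref{lem:bases_product} with $T_a=\{\{1\},\{2\},\{3\}\}$, $T_b=\{\{4\},\{5\},\{6\}\}$, $T_c=\{\{7\}\}$, and it takes the factor bases for granted. Your handling of the $\{1,2,3\}$ factor, the $\{7\}$ factor and the two product steps is fine.

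The step that fails as written is your hand check for the $\{4,5,6\}$ factor. The four type-$\{6\}$ edges $(X_4+1)(X_5+1)$, $(X_4+1)X_5$, $X_4X_5$, $X_4(X_5+1)$ partition all eight vertices. So none of the other three polynomials can be isolated while those edges are still in play:
\begin{itemize}
\item at $010$, both $X_5(X_6+1)$ and $(X_4+1)X_5$ equal $1$;
\item at $001$, both $(X_4+1)X_6$ and $(X_4+1)(X_5+1)$ do;
\item at $101$, both $X_4X_6$ and $X_4(X_5+1)$ do.
\end{itemize}
Your first three evaluations therefore only relate pairs of coefficients. Your last step, where the type-$\{6\}$ edges are the ``unique remaining'' polynomials, presupposes eliminations that never happened. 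Nor can $(X_4+1)X_5$ or $X_4X_5$ be isolated first, since each of their vertices is shared with $X_5(X_6+1)$, $(X_4+1)X_6$ or $X_4X_6$. The eliminations must be interleaved.

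The fix is easy. Under the substitution $X\mapsto X_6$, $Y\mapsto X_5$, $Z\mapsto X_4$ (rather than the one you wrote), the family of Lemma~\ref{lem:Gray_code_cube} maps exactly onto the seven listed polynomials. So that lemma applies directly and no hand check is needed.

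Equivalently, the seven polynomials are the edges of the Gray-code path $000,001,011,010,110,111,101,100$ (in $x_4x_5x_6$). Evaluating successively at $000,001,011,010,110,111$ kills, in turn, the coefficients of $(X_4+1)(X_5+1)$, $(X_4+1)X_6$, $(X_4+1)X_5$, $X_5(X_6+1)$, $X_4X_5$ and $X_4X_6$. This leaves $X_4(X_5+1)$, whose coefficient must then vanish too.
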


\begin{proof}
Applying Lemma~\ref{lem:bases_product} twice to $T_a = \{ \{1\}, \{2\}, \{3\} \}$, $T_b = \{ \{4\}, \{5\}, \{6\} \}$ and $T_c = \{ \{7\} \}$ gives the result.
\end{proof}

\begin{align*}
T_{main \, layer, \, NE-SW} &= \{ \{1,2\}, \{1,3\}, \{2,3\} \} \times \{7\} \\
T_{main \, layer, \, NE-SW} &= \{ \{1,2,7\}, \{1,3,7\}, \{2,3,7\} \}.
\end{align*}
$$ |T_{main \, layer, \, NE-SW}| = 3.$$

$B_{squares \, of \, the \, cube \, \{1,2,3\}}$ is made of 2 squares of type $\{1, 3\}$ (corresponding to polynomials $X_2$ and $X_2+1$ in $\mathbb{F}_2[X_1,X_2,X_3]$), 1 square of type $\{1, 2\}$ (corresponding to polynomial $X_3$ in $\mathbb{F}_2[X_1,X_2,X_3]$) and 1 square of type $\{2, 3\}$ (corresponding to polynomial $X_1$ in $\mathbb{F}_2[X_1,X_2,X_3]$) (see Figure~\ref{fig:layout_QRM_6_1_1}).
$$ |B_{squares \, of \, the \, cube \, \{1,2,3\}}| = 4.$$

$$ B_{main \, layer, \, NE-SW} = B_{squares \, of \, the \, cube \, \{1,2,3\}} \times B_{edge \, of \, the \, edge \, \{7\}}. $$
$$ |B_{main \, layer, \, NE-SW}| = 4*1 = 4. $$

\begin{align*}
T_{main \, layer, \, NW-SE} &= \{ \{4,5\}, \{4,6\}, \{5,6\} \} \times \{7\} \\
T_{main \, layer, \, NW-SE} &= \{ \{4,5,7\}, \{4,6,7\}, \{5,6,7\} \}.
\end{align*}
$$ |T_{main \, layer, \, NW-SE}| = 3.$$

$B_{squares \, of \, the \, cube \, \{4,5,6\}}$ is made of 2 squares of type $\{4, 6\}$ (corresponding to polynomials $X_5$ and $X_5+1$ in $\mathbb{F}_2[X_4,X_5,X_6]$), 1 square of type $\{5, 6\}$ (corresponding to polynomial $X_4$ in $\mathbb{F}_2[X_4,X_5,X_6]$) and 1 square of type $\{4, 5\}$ (corresponding to polynomial $X_6$ in $\mathbb{F}_2[X_4,X_5,X_6]$) (see Figure~\ref{fig:layout_QRM_6_1_1}).
$$ |B_{squares \, of \, the \, cube \, \{4,5,6\}}| = 4.$$

$$ B_{main \, layer, \, NW-SE} = B_{squares \, of \, the \, cube \, \{4,5,6\}} \times B_{edge \, of \, the \, edge \, \{7\}}. $$
$$ |B_{main \, layer, \, NW-SE}| = 4*1 = 4. $$

\begin{lemma}
The concatenation of bases $B_{main \, layer} := (B_{main \, layer, \, center}, B_{main \, layer, \, NE-SW}, B_{main \, layer, \, NW-SE})$ generates the same space as the set of cubes whose type belongs to $T_{main \, layer, \, center}$, $T_{main \, layer, \, NE-SW}$ or $T_{main \, layer, \, NW-SE}$:
$$\Span(B_{main \, layer})
=
\Span(Subcubes(T_{main \, layer}))$$
where
$$T_{main \, layer} \coloneq T_{main \, layer, \, center} \cup T_{main \, layer, \, NE-SW} \cup T_{main \, layer, \, NW-SE}. $$
\label{lem:main_layer}
\end{lemma}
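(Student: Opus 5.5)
The plan is to reduce the statement to a statement about squares in the $6$-cube with coordinates $1,\dots,6$, by factoring out the edge $\{7\}$. Every type in $T_{main \, layer}$ is of the form $J \cup \{7\}$ with $J$ a $2$-subset of $\{1,\dots,6\}$. All fifteen such $J$ occur: nine in $T_{main \, layer, \, center}$, three in $T_{main \, layer, \, NE-SW}$ and three in $T_{main \, layer, \, NW-SE}$. Hence $\Span(\text{Subcubes}(T_{main \, layer}))$ is the product of the span of all squares of the $6$-cube with the whole edge $\{7\}$. By Lemma~\ref{lem:6_cube_1_type} and the description of $RM_6(1)$ as generated by squares, this is $(\mathbb{F}_2[X_1,\dots,X_6])_{\leq 4}$, with no dependence on $X_7$. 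Its dimension is $57$, and $|B_{main \, layer}| = 49+4+4 = 57$. So it suffices to prove the inclusion $\Span(\text{Subcubes}(T_{main \, layer})) \subset \Span(B_{main \, layer})$. The reverse inclusion is immediate because every element of $B_{main \, layer}$ is a cube of a type in $T_{main \, layer}$. Equality of dimensions then also gives freeness.

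The first step is that Lemma~\ref{lem:main_layer_center} already gives all cubes of the nine types $\{i,j,7\}$ with $i \in \{1,2,3\}$ and $j \in \{4,5,6\}$. The second step handles a type $\{i,i',7\}$ with $i,i' \in \{1,2,3\}$. The four elements of $B_{squares \, of \, the \, cube \, \{1,2,3\}}$, namely the polynomials $X_2$, $X_2+1$, $X_3$, $X_1$, form a basis of $(\mathbb{F}_2[X_1,X_2,X_3])_{\leq 1}$. This is the space spanned by all squares of the $3$-cube; freeness follows by evaluating on Gray code vertices as in Lemma~\ref{lem:Gray_code_square}, and the dimension is $4$. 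Therefore $B_{main \, layer, \, NE-SW}$ generates all cubes of types $\{1,2,7\}$, $\{1,3,7\}$, $\{2,3,7\}$ whose coordinates $4,5,6$ are fixed to the single value chosen in the layout.

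It remains to translate these along coordinates $4$, $5$ and $6$. Take a cube $C$ of type $\{i,i',7\}$ and a coordinate $k \in \{4,5,6\}$. The sum $C + (C+e_k)$ is the $4$-subcube of type $\{i,i',k,7\}$. Splitting that $4$-subcube along coordinate $i'$ writes it as the sum of two cubes of type $\{i,k,7\}$, which lie in the span from the first step. This is exactly the mechanism of Corollary~\ref{cor:translations_from_another_type} (or Lemma~\ref{lem:subspace_two_types} applied to $I=\{i,k,7\}$ and $J=\{i,i',7\}$). Iterating over $k \in \{4,5,6\}$ reaches every translate of $C$, so all cubes of the three NE-SW types are in $\Span(B_{main \, layer})$. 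The NW-SE types $\{j,j',7\}$ with $j,j' \in \{4,5,6\}$ are handled symmetrically: $B_{squares \, of \, the \, cube \, \{4,5,6\}}$ is a basis of $(\mathbb{F}_2[X_4,X_5,X_6])_{\leq 1}$, and translations along coordinates $1,2,3$ are supplied by the center types $\{i,j,7\}$.

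The main point needing care is the second step: checking that the four chosen squares of the cube $\{1,2,3\}$ (respectively $\{4,5,6\}$) really span all squares of that $3$-cube, including both translates of the type with two elements. Once this is checked, the counting argument ($57$ generators for a $57$-dimensional space) closes the proof. I would also confirm the total dimension $57$ against Table~\ref{tab:cartesian_product_in_6_cube}.
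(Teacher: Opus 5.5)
Your proposal is correct and is essentially the paper's argument: cubes of the nine center types supply translations along coordinates $4,5,6$ (resp.\ $1,2,3$) to the NE-SW (resp.\ NW-SE) elements, and you make explicit two things the paper leaves implicit, namely that the four chosen squares span all squares of the $3$-cube and the $57=57$ dimension count. One small imprecision: the four NE-SW elements need not lie in a single $(x_4,x_5,x_6)$-slice (in the $6$-cube layout reproduced here, the type-$\{1,2\}$ square $X_3(X_4+1)(X_5+1)(X_6+1)$ sits in a different slice from the other three, and similarly on the NW-SE side), so apply your translation step first to bring them to a common slice; that step does not rely on the spanning step, so nothing else changes.
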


\begin{proof}
Cubes whose type is in $T_{main \, layer, \, center}$ give translations along the $4^{th}$, $5^{th}$ and $6^{th}$ coordinates to cubes of $B_{main \, layer, \, NE-SW}$ and translations along the $1^{st}$, $2^{nd}$ and $3^{rd}$ coordinates to cubes of $B_{main \, layer, \, NW-SE}$.
\end{proof}

Note that so far, we have built a basis for the cubes (of dimension 3) whose type contains 7 in the 7-cube by reproducing the planar layout of Section~\ref{sec:QRM_6_1_1} for the space generated by squares in the 6-cube. We have simply turned squares of the 6-cube into cubes of the 7-cube by extending them along the seventh coordinate. Thus $T_{main \, layer}$ is the set of types (of cardinal 3) that contain the coordinate 7. To generate the Z stabilizer group of $QRM_7(2,2)$, we need to generate cubes whose type is contained in $\{1,2,3,4,5,6\}$. Taking the cartesian product of a basis for squares of type $\{i, j\}$ with $i, j  \in \{1, 2, 3\}$ in the cube $\{1, 2, 3\}$  (space of dimension 4) with a basis for edges of type $\{k\}$ with $k \in \{4, 5, 6\}$ in the cube $\{4, 5, 6\}$ (space of dimension 7) gives a basis for cubes of type $\{i, j, k\}$ with $i, j  \in \{1, 2, 3\}$ and $k \in \{4, 5, 6\}$. This space has dimension $4 * 7 = 28$. The translations along the seventh coordinate are given by cubes of type $\{i, j, 7\}$.

More formally, we define
\begin{align*}
T_{bottom \, layer} &= \{ \{1,2\}, \{1,3\}, \{2,3\} \} \times \{4, 5, 6\} \\
T_{bottom \, layer} &= \{ \{1,2,4\}, \{1,2,5\}, \{1,2,6\}, \{1,3,4\}, \\
&\{1,3,5\}, \{1,3,6\}, \{2,3,4\}, \{2,3,5\}, \{2,3,6\} \}.
\end{align*}

$$ B_{bottom \, layer} = B_{squares \, of \, the \, cube \, \{1,2,3\}} \times B_{edges \, of \, the \, cube \, \{4,5,6\}}. $$
$$ |B_{bottom \, layer}| = 4*7 = 28. $$

\begin{lemma}
The concatenation of bases $(B_{main \, layer}, B_{bottom \, layer})$ generates the same space as the space of cubes whose type belongs to $T_{main \, layer}$ or to $T_{bottom \, layer}$:
$$\Span((B_{main \, layer}, B_{bottom \, layer})
=
\Span(Subcubes(T_{main \, layer} \cup T_{bottom \, layer})).$$
\end{lemma}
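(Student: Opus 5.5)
The inclusion $\subset$ is immediate, since every element of $B_{main \, layer}$ and of $B_{bottom \, layer}$ is a cube whose type lies in $T_{main \, layer} \cup T_{bottom \, layer}$. For the reverse inclusion, Lemma~\ref{lem:main_layer} already gives $\Span(\text{Subcubes}(T_{main \, layer})) \subset \Span(B_{main \, layer})$. It therefore remains to show that every cube of type $I \in T_{bottom \, layer}$ lies in $\Span(B_{main \, layer}, B_{bottom \, layer})$.

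First, apply Lemma~\ref{lem:bases_product} in the $6$-cube $\mathbb{F}_2^{\{1,\dots,6\}}$, with $B_a = B_{squares \, of \, the \, cube \, \{1,2,3\}}$ (a basis for squares in the cube $\{1,2,3\}$, which is the space $(\mathbb{F}_2[X_1,X_2,X_3])_{\leq 1}$ of dimension $4$) and $B_b = B_{edges \, of \, the \, cube \, \{4,5,6\}}$ (a basis of $(\mathbb{F}_2[X_4,X_5,X_6])_{\leq 2}$ by Lemma~\ref{lem:Gray_code_cube}). This shows that $B_{bottom \, layer}$ spans all cubes of types in $T_{bottom \, layer}$ inside the $6$-cube. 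Lifted to the $7$-cube, the product $P(X_1,\dots,X_6)$ is a polynomial that does not involve $X_7$, so it corresponds to a $3$-subcube ``extended'' along coordinate $7$, which is wrong.

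To fix the bookkeeping, I would view each element of $B_{bottom \, layer}$ as a cube of the $7$-cube by fixing the seventh coordinate. Concretely, I would multiply by $(X_7 + a_7)$ for a fixed $a_7$, exactly as in the definition of $V_{k_1,i_1,i_2,k_2,(a_k)}$ with $k_2 = 6$. Algebraically, this gives
$$\Span(B_{bottom \, layer}) = (X_7+a_7) \sum_{I \in T_{bottom \, layer}} \mathbb{F}_2[X_k, k \in \overline{I} \setminus \{7\}].$$
This contains every cube of type $I$ with seventh coordinate equal to $a_7+1$.

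The remaining step is translation along coordinate $7$. Take $I = \{i,j,k\}$ with $i,j \in \{1,2,3\}$ and $k \in \{4,5,6\}$. The type $\{i,j,7\}$ lies in $T_{main \, layer, \, NE-SW} \subset T_{main \, layer}$, and it differs from $I$ in exactly one element. Lemma~\ref{lem:subspace_two_types}, applied to the types $I$ and $\{i,j,7\}$ with $i_0 = k$, or more directly the cube analogue of Corollary~\ref{cor:translations_from_another_type}, gives
$$\mathbb{F}_2[X_l, l \in \overline{I}] \subset (X_7 + a_7)\,\mathbb{F}_2[X_l, l \in \overline{I}\setminus\{7\}] + \mathbb{F}_2[X_l, l \in \overline{\{i,j,7\}}].$$
The second summand is contained in $\Span(B_{main \, layer})$ by Lemma~\ref{lem:main_layer}. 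Summing over $I \in T_{bottom \, layer}$ concludes.

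I expect the main obstacle to be the adaptation of Lemma~\ref{lem:subspace_two_types} and Corollary~\ref{cor:translations_from_another_type}, which are stated for squares, to cubes of dimension $3$. This is formally the same computation: expand $(X_7 + c) = (X_7 + a_7) + (c + a_7)$ inside $\prod_{l \in \overline{I}}(X_l + c_l)$. The term $(c+a_7)\prod_{l \in \overline{I}\setminus\{7\}}(X_l+c_l)$ is a sum of cubes of type $\{i,j,7\}$, because it is a polynomial in $(X_l)_{l \in \overline{\{i,j,7\}}}$ (note that $k \in \overline{\{i,j,7\}}$). A dimension count is not needed, since only the inclusion is claimed.
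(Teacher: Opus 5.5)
Your proof is correct and follows the paper's argument: cubes of type $\{i,j,7\}\in T_{main \, layer}$ supply the translations along the seventh coordinate for the cubes of $B_{bottom \, layer}$. You also add the useful care of fixing $x_7$ through the factor $(X_7+a_7)$, which the paper leaves implicit. One small slip: to read your displayed inclusion off Lemma~\ref{lem:subspace_two_types}, the lemma's first type must be $\{i,j,7\}$ with $i_0=7$, not $I$ with $i_0=k$. That lemma is also already stated for subsets of any cardinality $d$, so no adaptation from squares to cubes is needed, and your direct expansion of $(X_7+c_7)$ settles the inclusion anyway.
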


\begin{proof}
Cubes whose type is in $T_{main \, layer}$ give translations along the $7^{th}$ coordinate to cubes of $B_{bottom \, layer}$.
\end{proof}

For cubes of type $\{i, j, k\}$ with $i \in \{1, 2, 3\}$ and $j, k \in \{4, 5, 6\}$, all translations along the coordinate $x$ for $x \in \{1, 2, 3\} \setminus \{i\}$ are given by cubes of type $\{i, x, k\}$ and all translations along the seventh coordinate are given by cubes of type $\{i, j, 7\}$. It is therefore sufficient to consider only one edge of each type in the cube $\{1,2,3\}$ (instead of the seven edges needed to have a basis of the edge space of a cube).

More formally, we define
\begin{align*}
T_{top \, layer} &= \{1, 2, 3\} \times \{ \{4,5\}, \{4,6\}, \{5,6\} \} \\
T_{top \, layer} &= \{ \{1,4,5\}, \{1,4,6\}, \{1,5,6\}, \{2,4,5\}, \\
&\{2,4,6\}, \{2,5,6\}, \{3,4,5\}, \{3,4,6\}, \{3,5,6\} \}.
\end{align*}

$B_{one \, edge \, of \, each \, type \, in \, the \, cube \, \{1,2,3\}}$ is made of 1 edge of type $\{1\}$ (corresponding to polynomial $X_2 X_3$ in $\mathbb{F}_2[X_1,X_2,X_3]$, 1 edge of type $\{2\}$ (corresponding to polynomial $X_1 X_3$ in $\mathbb{F}_2[X_1,X_2,X_3]$ and 1 edge of type $\{3\}$ (corresponding to polynomial $X_1 X_2$ in $\mathbb{F}_2[X_1,X_2,X_3]$.

$$ B_{top \, layer} = B_{one \, edge \, of \, each \, type \, in \, the \, cube \, \{1,2,3\}} \times B_{squares \, of \, the \, cube \, \{4,5,6\}}. $$
$$ |B_{top \, layer}| = 3*4 = 12. $$

Note that $B_{top \, layer}$ is made of cubes of the following types,
\begin{itemize}
\item 1 cube of type $\{1, 4, 5\}$,
\item 2 cubes of type $\{1, 4, 6\}$ (image of each other by a translation along the fifth coordinate),
\item 1 cube of type $\{1, 5, 6\}$,
\item 1 cube of type $\{2, 4, 5\}$,
\item 2 cubes of type $\{2, 4, 6\}$ (image of each other by a translation along the fifth coordinate),
\item 1 cube of type $\{2, 5, 6\}$,
\item 1 cube of type $\{3, 4, 5\}$,
\item 2 cubes of type $\{3, 4, 6\}$ (image of each other by a translation along the fifth coordinate),
\item 1 cube of type $\{3, 5, 6\}$.
\end{itemize}

\begin{lemma}
The concatenation of bases $(B_{main \, layer}, B_{bottom \, layer}, B_{top \, layer})$ generates the same space as the space of cubes whose type belongs to $T_{main \, layer}$, $T_{bottom \, layer}$ or $T_{top \, layer}$:
\begin{align*}
&\Span((B_{main \, layer}, B_{bottom \, layer}, B_{top \, layer}) \\
=
&\Span(Subcubes(T_{main \, layer} \cup T_{bottom \, layer} \cup T_{top \, layer})).
\end{align*}
\label{lem:all_except_2}
\end{lemma}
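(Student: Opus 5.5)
The inclusion $\subset$ is immediate, since every element of $B_{top \, layer}$ is a cube whose type lies in $T_{top \, layer}$. I will prove the converse by showing that every cube of type $\{i,j,k\}$ with $i \in \{1,2,3\}$ and $j,k \in \{4,5,6\}$ lies in
$$W := \Span((B_{main \, layer}, B_{bottom \, layer}, B_{top \, layer})).$$
By the previous lemma, $W$ already contains $\Span(Subcubes(T_{main \, layer} \cup T_{bottom \, layer}))$. Fix a type $I = \{i,j,k\}$. A cube of type $I$ is determined by its coordinates outside $I$, namely one coordinate in $\{1,2,3\} \setminus \{i\}$, the remaining coordinate $l \in \{4,5,6\} \setminus \{j,k\}$, and coordinate $7$. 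The plan is to reach every cube of type $I$ from the cubes of type $I$ in $B_{top \, layer}$ by translations supplied by Corollary~\ref{cor:translations_from_another_type}, exactly as in the proofs of Lemmas~\ref{lem:rubik_bulk_bottom} and \ref{lem:rubik_bulk_bottom_top}.

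The translations come from three sources, each an application of Lemma~\ref{lem:subspace_two_types} to a pair of types sharing two indices.
\begin{itemize}
\item Along a coordinate $x \in \{1,2,3\} \setminus \{i\}$: cubes of type $\{i,x,j\}$ (or $\{i,x,k\}$) lie in $T_{bottom \, layer}$, hence in $W$. Adding them to a cube of type $\{i,j,k\}$ yields its translate along $x$.
\item Along coordinate $7$: cubes of type $\{i,j,7\}$ lie in $T_{main \, layer, \, center}$.
\item Along the coordinate $l$: this is the delicate one, because the candidate types $\{i,j,l\}$ and $\{i,k,l\}$ are themselves in $T_{top \, layer}$.
\end{itemize}
So the first two sources reduce the problem to showing that, for each fixed $i$, the space spanned by all cubes of types $\{i\} \times \{\{4,5\},\{4,6\},\{5,6\}\}$, modulo $W$, is generated by the four cubes $e_i \times B_{squares \, of \, the \, cube \, \{4,5,6\}}$. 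Here $e_i$ denotes the chosen edge of type $\{i\}$ in the cube $\{1,2,3\}$.

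For this reduced statement I will work inside the cube $\{4,5,6\}$, after fixing the $\{1,2,3\}$-part (an edge of type $\{i\}$) and the $7$-coordinate, both of which have been made free by the two previous translations. The squares of the cube $\{4,5,6\}$ span $(\mathbb{F}_2[X_4,X_5,X_6])_{\leq 1}$, which has dimension $4$. The four elements of $B_{squares \, of \, the \, cube \, \{4,5,6\}}$, namely $X_5$, $X_5+1$, $X_4$ and $X_6$, form a basis of that space; this is checked by the Gray-code evaluation argument of Lemma~\ref{lem:Gray_code_cube}. Multiplying by the fixed edge polynomial of type $\{i\}$ and invoking Lemma~\ref{lem:bases_product} then shows that every cube of type $\{i,j,k\}$ through the chosen $\{1,2,3\}$-edge and with fixed coordinate $7$ lies in $W$. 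Finally, the translations along $\{1,2,3\} \setminus \{i\}$ and along $7$ from the previous paragraph sweep out all cubes of these types.

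I expect the main obstacle to be making the translation along $x \in \{1,2,3\}\setminus\{i\}$ rigorous. Corollary~\ref{cor:translations_from_another_type} is stated for squares, so it must be adapted to $3$-subcubes. Concretely, I need to check that
$$\mathbb{F}_2[X_k,\, k \in \overline{\{i,j,k\}}] \subset \bigl(\text{fixed-}x\text{ slice}\bigr) + \mathbb{F}_2[X_k,\, k \in \overline{\{i,x,j\}}],$$
which follows from Lemma~\ref{lem:subspace_two_types} with $I=\{i,x,j\}$ and $J=\{i,j,k\}$. As a fallback, the dimension of the target space can be compared with the count $99 - 2$ used later in the paper, and the spanning can be verified with the numerical script, as was done for $QRM_6(1,2)$.
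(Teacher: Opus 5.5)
Your proposal is correct and follows the paper's argument. Cubes with types in $T_{main \, layer}$ give translations along coordinate $7$. Cubes with types in $T_{bottom \, layer}$ give translations along the coordinates of $\{1,2,3\}\setminus\{i\}$, via Lemma~\ref{lem:subspace_two_types} with $d=3$. The remaining coordinate of $\{4,5,6\}$ is covered because $B_{squares \, of \, the \, cube \, \{4,5,6\}}$ spans all squares of that cube; the paper leaves this last point implicit and you rightly make it explicit.

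Two small slips to fix. A cube of type $\{i,j,k\}$ is pinned by the \emph{two} coordinates of $\{1,2,3\}\setminus\{i\}$, not one. Your reduction should be stated modulo $\Span(Subcubes(T_{main \, layer} \cup T_{bottom \, layer}))$ rather than modulo $W$, since $W$ already contains the four cubes $e_i \times B_{squares \, of \, the \, cube \, \{4,5,6\}}$.
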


\begin{proof}
Cubes whose type is in $T_{main \, layer}$ give translations along the $7^{th}$ coordinate to cubes whose type is in $T_{bottom \, layer}$ or $T_{top \, layer}$. Cubes whose type is in $T_{bottom \, layer}$ give translations along the $1^{st}$, $2^{nd}$ and $3^{rd}$ coordinates to cubes whose type is in $T_{top \, layer}$.
\end{proof}

Finally, to obtain a basis for the Z stabilizer of $QRM_7(2,2)$, we need one cube of type $\{1, 2, 3\}$ and one cube of type $\{4, 5, 6\}$.

More formally, we define
$$T_{\{1,2,3\}} = \{1,2,3\}.$$
$$T_{\{4,5,6\}} = \{4,5,6\}.$$
$B_{\{1,2,3\}}$ is made of the cube of type $\{1,2,3\}$ corresponding to the polynomial
$$X_4 (X_5+1) (X_6+1) X_7$$
in $\mathbb{F}_2[X_1,X_2,X_3,X_4,X_5,X_6,X_7]$. $B_{\{4,5,6\}}$ is made of the cube of type $\{4,5,6\}$ corresponding to the polynomial
$$(X_1+1) (X_2+1) X_3 X_7$$
in $\mathbb{F}_2[X_1,X_2,X_3,X_4,X_5,X_6,X_7]$. \\

\begin{theorem}
The concatenation of bases
$$B_{QRM_7(2,2)} := (B_{main \, layer}, B_{bottom \, layer}, B_{top \, layer}, B_{\{1,2,3\}}, B_{\{4,5,6\}})$$
generates the Z stabilizer group of $QRM_7(2,2)$. 
\end{theorem}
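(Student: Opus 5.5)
By Lemma~\ref{lem:all_except_2}, $(B_{main \, layer}, B_{bottom \, layer}, B_{top \, layer})$ already generates every cube whose type lies in $T_{main \, layer} \cup T_{bottom \, layer} \cup T_{top \, layer}$. These three sets, together with $\{1,2,3\}$ and $\{4,5,6\}$, partition the $35$ subsets of cardinal $3$ of $\{1,\dots,7\}$. The Z stabilizer group of $QRM_7(2,2)$ is $RM_7(3)$, which is generated by all cubes. So it remains to show that adding the two cubes of $B_{\{1,2,3\}}$ and $B_{\{4,5,6\}}$ yields every cube of type $\{1,2,3\}$ and every cube of type $\{4,5,6\}$. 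I treat type $\{1,2,3\}$; type $\{4,5,6\}$ is symmetric, with coordinates $4,5,6$ playing the role of $1,2,3$.

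By Lemma~\ref{lem:6_cube_1_type}, cubes of type $\{1,2,3\}$ span $\mathbb{F}_2[X_4,X_5,X_6,X_7]$, which has dimension $16$. Its basis is the set of $16$ translates of $X_4(X_5+1)(X_6+1)X_7$ along coordinates $4,5,6,7$. It therefore suffices to show that each translation of this cube along a single coordinate $k \in \{4,5,6,7\}$ can be obtained modulo the span $W$ of the already generated cubes. Translations then compose: the difference between the cube and its translate lies in $W$, and this persists under further translations since $W$ is translation invariant.

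The tool is Lemma~\ref{lem:subspace_two_types} and Corollary~\ref{cor:translations_from_another_type}. Take $I=\{1,2,k\}$ and $J=\{1,2,3\}$. Then the cube $Q$ of type $\{1,2,3\}$ and its translate $e_k+Q$ sum to the $4$-subcube of type $\{1,2,3,k\}$ containing $Q$. That $4$-subcube is the sum of two cubes of type $\{1,2,k\}$, namely one and its translate along coordinate $3$. Algebraically, $(X_k+a)+(X_k+a+1)=1$ and $1=X_3+(X_3+1)$. For $k \in \{4,5,6\}$ we have $\{1,2,k\} \in T_{bottom \, layer}$, and for $k=7$ we have $\{1,2,7\} \in T_{main \, layer, \, NE-SW}$. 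In both cases these cubes are in $W$. Hence every translate of the chosen cube lies in $\Span(B_{QRM_7(2,2)})$, and so does all of $\mathbb{F}_2[X_4,X_5,X_6,X_7]$. For type $\{4,5,6\}$, use instead the types $\{4,5,k\}$ with $k \in \{1,2,3\}$, which lie in $T_{top \, layer}$, and $\{4,5,7\} \in T_{main \, layer, \, NW-SE}$.

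The only step needing care is checking that the auxiliary types really sit in the previously generated sets, which is the membership check just done. As a sanity check I would also count dimensions: $49+4+4+28+12+1+1=99$, which equals the dimension of the Z stabilizer group. This shows that the family is not only generating but in fact a basis.
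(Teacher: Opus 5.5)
Your proposal is correct and takes the same approach as the paper. Both arguments invoke Lemma~\ref{lem:all_except_2}, use the partition of the $3$-subsets of $\{1,\dots,7\}$, obtain the translations of the two remaining cubes from cubes of types $\{1,2,k\}$ and $\{4,5,k\}$ already in the span, and get freeness from the count $99$; you spell out the translation step the paper states in one line, with one slip: the Z stabilizer group of $QRM_7(2,2)$ is $RM_7(4)$ (dimension $99$, generated by the $3$-subcubes), not $RM_7(3)$, which is harmless since you only use that it is generated by all cubes.
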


\begin{proof}
The Z stabilizer group of $QRM_7(2,2)$ is the space generated by cubes (of dimension 3) in the 7-cube. Since $T_{main \, layer}$, $T_{bottom \, layer}$, $T_{top \, layer}$, $T_{\{1,2,3\}}$ and $T_{\{4,5,6\}}$ partition the subsets of cardinal 3 of $\{1,2,3,4,5,6,7\}$, it is sufficient to show that the concatenation of bases $B_{QRM_7(2,2)}$ generates all cubes (of dimension 3). The only cubes missing to the result of Lemma~\ref{lem:all_except_2} are the cubes of type $\{1,2,3\}$ and $\{4,5,6\}$. Cubes of $(B_{main \, layer}, B_{bottom \, layer}, B_{top \, layer})$ give translation along the coordinates 4, 5, 6 and 7 to the cube of $B_{\{1,2,3\}}$ and along the coordinates 1, 2, 3 and 7 to the cube of $B_{\{4,5,6\}}$.
The freedom of $B_{QRM_7(2,2)}$ is ensured by the equality of its cardinal with the dimension of the Z stabilizer group of $QRM_7(2,2)$: both are equal to 99.
\end{proof}

As a sanity check, we verified numerically (see source code\footnote{\url{https://github.com/vivienlonde/unfolded_Quantum_Reed_Muller/tree/main/7_cube/3D_Z_stab_basis.py}}) that $B_{QRM_7(2,2)}$ as described above is indeed a basis for the Z stabilizer group of $QRM_7(2,2)$.

\begin{figure}[H]
\centering
\includegraphics[width=0.7\textwidth]{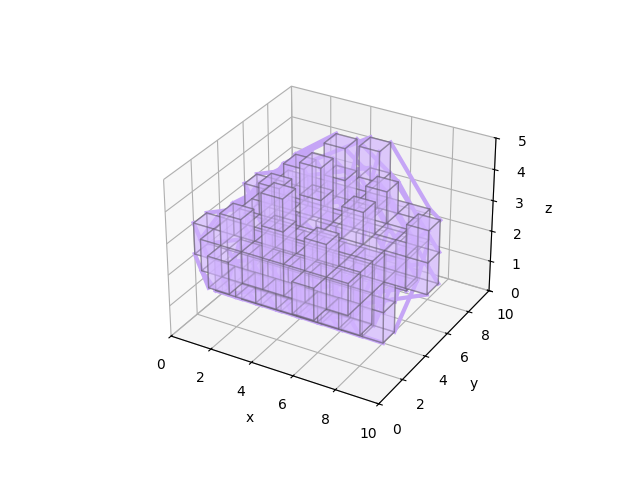}
\caption{3D local layout for the Z stabilizers of $QRM_7(2,2)$. The 3D interactive version of this figure can be generated with the following script: \url{https://github.com/vivienlonde/unfolded_Quantum_Reed_Muller/tree/main/7_cube/3D_layout.py}.}
\label{fig:local_layout_7_cube}
\end{figure}

$QRM_7(2,2)$ has 0 logical qubit. Similarly to $QRM_4(1,1)$, 1 logical qubit is obtained by puncturing $QRM_7(2,2)$ (see Appendix~\ref{sec:puncturing}). A transversal $T^{\dagger}$ gate on each of the 127 physical qubits of the punctured $QRM_7(2,2)$ has the logical effect of a $T$ gate on the logical qubit. The proof is essentially the same as the one of Theorem~\ref{thm:validity}. We refer to \cite{barg2025geometric}, Theorem 6.2 for the general case.

\section{Z minimum distance and number of minimum weight Z logical operators}

The Z minimum distance $d_Z$ of the quantum code that defines a distillation factory gives the exponent of the distillation process. The number $m$ of minimum weight (i.e. of weight $d_Z$) Z logical operators gives the prefactor of the distillation process. Indeed, to leading order, denoting by $p_{in}$ the infididelity before distillation and by $p_{out}$ the infididelity after distillation,
$$p_{out} = m \, p_{in}^{d_Z}.$$
See for instance \cite{haah2017magic} on the definition of a distillation factory from a quantum  code.
 
To count the number of minimum weight codewords in classical Reed Muller codes, we use a recursive definition of these codes that was introduced by Plotkin in 1960 \cite{plotkin1960binary}.

\begin{lemma}[Plotkin recursive construction \cite{plotkin1960binary}]
Let $m$ and $r$ be integers such that $0 \leq r \leq m$. If $r \geq 1$, $RM_m(r)$ is defined from $RM_{m-1}(r)$ and $RM_{m-1}(r-1)$ as follows
$$RM_m(r) = \{(u, u+v) \, | \, u \in RM_{m-1}(r), \, v \in RM_{m-1}(r-1)\}.$$
If $r=0$, $RM_m(0)$ is the trivial code (of dimension $0$).
\label{lem:Plotkin}
\end{lemma}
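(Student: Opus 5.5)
The plan is to prove the recursion directly from the polynomial description of $RM_m(r)$, namely the space $(\mathbb{F}_2[X_1,\dots,X_m])_{\leq r}$ of multilinear polynomials (recall $X_i^2=X_i$) of degree at most $r$, viewed through their evaluation vectors. The key idea is to single out one variable, say $X_m$. We order the $2^m$ vertices of the $m$-cube so that the first half consists of the vertices with $x_m=0$ and the second half of those with $x_m=1$. Each half is then canonically a copy of $\mathbb{F}_2^{m-1}$ with coordinates $x_1,\dots,x_{m-1}$. With this ordering, a word of length $2^m$ is a pair $(a,b)$ of words of length $2^{m-1}$.

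First step (decomposition). Every multilinear $P\in\mathbb{F}_2[X_1,\dots,X_m]$ can be written uniquely as $P = Q + X_m R$ with $Q,R\in\mathbb{F}_2[X_1,\dots,X_{m-1}]$ multilinear. To see this, collect the monomials that do not contain $X_m$ into $Q$, and factor $X_m$ out of the remaining ones to get $R$. Since the monomials form a basis, this splitting is a direct-sum decomposition. Moreover, $\deg P\leq r$ holds if and only if $\deg Q\leq r$ and $\deg R\leq r-1$, because multiplying a multilinear $R$ free of $X_m$ by $X_m$ raises each monomial degree by exactly one and creates no cancellation with $Q$.

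Second step (evaluation). Evaluating at vertices with $x_m=0$ gives $P|_{x_m=0}=Q$. Evaluating at vertices with $x_m=1$ gives $P|_{x_m=1}=Q+R$. Hence the evaluation vector of $P$ is $(u,u+v)$ with $u=\mathrm{ev}(Q)\in RM_{m-1}(r)$ and $v=\mathrm{ev}(R)\in RM_{m-1}(r-1)$. This proves the inclusion $\subset$. Conversely, given $u\in RM_{m-1}(r)$ and $v\in RM_{m-1}(r-1)$, pick polynomials $Q$ and $R$ representing them. Then $P=Q+X_mR$ has degree at most $r$ and evaluation vector $(u,u+v)$, which proves $\supset$.

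The only point that needs care is the statement about degrees in the first step, i.e. that no reduction $X_m^2=X_m$ occurs. This is immediate because $Q$ and $R$ do not involve $X_m$. Equivalently, the geometric version says that a subcube of dimension $m-r$ either contains the direction $m$, giving a pair $(u,u)$, or lies in one half, giving $(u,0)$ or $(0,u)$. The $r=0$ clause is a boundary convention, and it is not derived from the recursion. I would simply flag that it serves as the base case for the recursive counting in the sequel.
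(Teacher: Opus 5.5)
Your proposal is correct and follows the paper's proof: write $P = Q + X_m R$, observe that $P$ restricts to $Q$ on the half $x_m=0$ and to $Q+R$ on the half $x_m=1$, and read off both inclusions. Your version is more explicit than the paper's about uniqueness of the decomposition and about the degree bookkeeping. The one point to correct is the $r=0$ clause: it is not a convention but follows directly from the definition, and it gives the repetition code $\{0\cdots 0, 1\cdots 1\}$ of dimension $1$ (as the paper itself uses in the proof of Lemma~\ref{lem:minimum_distance}), not a code of dimension $0$.
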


\begin{proof}
$RM_m(r)$ is the set of indicator vectors (i.e. preimages of $1$) of the polynomials of degree at most $r$ in $\mathbb{F}_2[X_1, \dots, X_m]$. Let $P \in \mathbb{F}_2[X_1, \dots, X_m]$.
There exists $Q \in \mathbb{F}_2[X_1, \dots, X_{m-1}]$ of degree at most $r$ and $R \in \mathbb{F}_2[X_1, \dots, X_{m-1}]$ of degree at most $r-1$ such that
\begin{equation}
P = Q + X_{m} R.
\label{eq:Plotkin}
\end{equation}
Since $Q$ corresponds to a codeword of $RM_{m-1}(r)$ and $R$ corresponds to a codeword of $RM_{m-1}(r-1)$, Equation~\ref{eq:Plotkin} defines a map from $RM_{m}(r)$ to $RM_{m-1}(r) \times 
RM_{m-1}(r-1)$. \\

Conversely, if we denote by $u$ the codeword of $RM_{m-1}(r)$ associated to $Q$ and by $v$ the codeword of $RM_{m-1}(r-1)$ associated to $R$, the unique codeword associated to $Q + X_{m} R$ is the concatenation of $u$ and $u+v$ (with the convention that the first $2^{m-1}$ bits correspond to $x_m = 0$ and the last $2^{m-1}$ bits correspond to $x_m = 1$).
\end{proof}

Lemma~\ref{lem:minimum_distance} gives the minimum distance of Reed Muller codes. We recall its proof because it is useful to characterize (and count) the minimum weight codewords of $RM_m(r)$.

\begin{lemma}[\cite{muller1954application}]
$RM_m(r)$ has minimum distance $2^{m-r}$.
\label{lem:minimum_distance}
\end{lemma}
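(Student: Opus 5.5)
The plan is a double induction on $m$ using the Plotkin decomposition of Lemma~\ref{lem:Plotkin}, proving two inequalities: every nonzero codeword has weight at least $2^{m-r}$, and some codeword attains this weight. We regard $RM_m(0)$ as the repetition code $\{0, \mathbf{1}\}$ (constant polynomials), with minimum distance $2^m$. We also regard $RM_m(m)$ as the full space, with minimum distance $1$. These are the base cases, together with $m=1$.

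The upper bound is immediate. The monomial $X_1 X_2 \cdots X_r$ has degree $r$, so it lies in $RM_m(r)$. It evaluates to $1$ exactly on the subcube of type $\{r+1, \dots, m\}$, which has $2^{m-r}$ vertices. Hence there is a codeword of weight exactly $2^{m-r}$.

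For the lower bound, take $1 \leq r \leq m-1$ and assume the result for $m-1$ and all degrees. By Lemma~\ref{lem:Plotkin}, a nonzero codeword has the form $(u, u+v)$ with $u \in RM_{m-1}(r)$ and $v \in RM_{m-1}(r-1)$. We split into two cases.

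\begin{itemize}
\item If $v = 0$, then $u \neq 0$, and the weight is $2|u| \geq 2 \cdot 2^{m-1-r} = 2^{m-r}$.
\item If $v \neq 0$, we use the triangle inequality for Hamming weight, $|u| + |u+v| \geq |v|$. By the induction hypothesis applied to $RM_{m-1}(r-1)$, this gives $|v| \geq 2^{(m-1)-(r-1)} = 2^{m-r}$.
\end{itemize}

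In both cases the weight is at least $2^{m-r}$, which closes the induction.

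The only delicate point is the convention at $r=0$. The statement of Lemma~\ref{lem:Plotkin} calls $RM_m(0)$ "trivial", but the definition via degree-$\le 0$ polynomials gives the repetition code. The case $v \neq 0$ with $r=1$ needs this repetition-code reading, since then $v$ is the all-ones word of length $2^{m-1}$. I would state this convention explicitly at the start of the proof.
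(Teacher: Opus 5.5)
Your proof is correct and follows the paper's route: induction through the Plotkin decomposition $(u,u+v)$, with the key bound $|w| \geq |v| \geq 2^{m-r}$. The paper obtains this bound from the identity $|w| = |v| + 2|u \setminus v|$, where you use the triangle inequality. Your version is in fact more complete than the paper's proof: you treat the case $v=0$ separately, you exhibit a codeword of weight $2^{m-r}$ for the upper bound, and you make explicit the repetition-code reading of $RM_m(0)$, which the paper's statement of the Plotkin lemma gets wrong.
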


\begin{proof}
The proof is by induction on $m+r$.

Let $m \geq 0$ be an integer. The base cases are $r=0$. In this case, the only polynomials of degree $0$ are $P=0$ and $P=1$ and thus the only codewords of $RM_m(0)$ are $0 \dots 0$ and $1 \dots 1$. Therefore the minimum distance of $RM_m(0)$ is $2^m$.

Let $m \geq r > 0$ be integers. Let $w = (u, u+v) \in RM_m(r)$, with $u \in RM_{m-1}(r)$ and $v \in RM_{m-1}(r-1)$. From the induction hypothesis, we have $|u| \geq 2^{m-r-1}$ and $|v| \geq 2^{m-r}$. Since $|v| = |u| + |u+v| - 2 |u \setminus v|$, we have
$$ |w| = |v| + 2 |u \setminus v|.$$
Therefore
\begin{equation}
|w| \geq |v| \geq 2^{m-r}.
\label{ineq:minimum_distance}
\end{equation}
\end{proof}

In Lemma~\ref{lem:minimum_weight_codewords}, we charaterize the minimum weight codewords of $RM_m(r)$ by studying the equality cases in Inequalities~\ref{ineq:minimum_distance}.

\begin{lemma}[\cite{plotkin1960binary}]
Let $m$ and $r$ be integers such that $m \geq 1$ and $0 \leq r \leq m$. Minimum weight codewords of $RM_{m}(r)$ are indicator vectors of products of $r$ linearly independent affine forms in $m$ variables.
\label{lem:minimum_weight_codewords}
\end{lemma}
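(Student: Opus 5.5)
We argue by induction on $m+r$, following the equality analysis in the proof of Lemma~\ref{lem:minimum_distance} via the Plotkin decomposition of Lemma~\ref{lem:Plotkin}. Note that the paper's convention makes $RM_m(r)$ the polynomials of degree at most $r$, while the base case there treats $r=0$ as giving weight $2^m$. The statement is cleanest read as: the minimum weight codewords of the code spanned by polynomials of degree at most $r$ have weight $2^{m-r}$ and are exactly the indicators of $\prod_{i=1}^r L_i$ with $L_1,\dots,L_r$ affine forms whose linear parts are linearly independent. For $r=0$ the empty product is $1$, the all-ones word, which is the unique nonzero codeword of weight $2^m$. Consistency checks also pass: such a product has degree $r$ and vanishes outside an affine subspace of codimension $r$, so its weight is exactly $2^{m-r}$.

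\textbf{Easy direction.} Any product of $r$ independent affine forms is a polynomial of degree at most $r$. After an affine change of coordinates, which preserves degree and weight, it becomes $X_1\cdots X_r$. This is an $(m-r)$-subcube, of weight $2^{m-r}$, the minimum distance.

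\textbf{Hard direction.} Let $P=Q+X_mR$ have weight $2^{m-r}$, with $w=(u,u+v)$. The identity $|w|=|v|+2|u\setminus v|$ gives two cases.
\begin{itemize}
\item \emph{Case $v=0$.} Then $R=0$ and $|w|=2|u|$, so $|u|=2^{m-1-r}$. The induction hypothesis in $m-1$ variables shows that $Q$ is a product of $r$ independent affine forms in $X_1,\dots,X_{m-1}$, which stay independent in $m$ variables.
\item \emph{Case $v\neq 0$.} Then $|v|\ge 2^{m-r}$, which forces $|v|=2^{m-r}$ and $u\subset v$. By induction, $R$ is a product $L_1\cdots L_{r-1}$ of independent affine forms in $m-1$ variables. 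Since $u\subset v$, we get $Q=QR$ as functions, so $Q$ is supported on the affine subspace $A=\{L_1=\cdots=L_{r-1}=1\}$ of dimension $m-r$. Then $P=R\,(Q+X_m)$ on $A\times\mathbb{F}_2$ and $0$ elsewhere.
\end{itemize}

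\textbf{Main obstacle.} The crux of the second case is showing that $Q|_A$ is affine, that is, of degree at most $1$ on $A$. Parametrize $A$ affinely by $\mathbb{F}_2^{m-r}$. Because $Q$ vanishes off $A$ and has degree at most $r$, write $Q=R\cdot\tilde Q$, where $\tilde Q$ is $Q$ expressed in coordinates completing $L_1,\dots,L_{r-1}$ to a basis. The expected argument is that reducing $Q$ modulo the ideal generated by the $L_i+1$ lowers degree by $r-1$. Concretely, in adapted coordinates $Y_i=L_i+1$ for $i<r$ and $Y_r,\dots,Y_{m-1}$ free, the condition that $Q$ vanishes unless all $Y_i=0$ means the monomials of $Q$ must all contain the factor $\prod_{i<r}(Y_i+1)$ up to the evaluation quotient. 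Matching degrees then gives $\deg \tilde Q\le 1$.

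\textbf{Conclusion.} With $\tilde Q$ affine, we get $P=L_1\cdots L_{r-1}\,(\tilde Q+X_m)$. Here $\tilde Q+X_m$ is an affine form whose linear part involves $X_m$, so it is independent of $L_1,\dots,L_{r-1}$. This completes the induction.

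A consistency check: the total weight is $|v|=2^{m-r}$, which requires $|u\setminus v|=0$, already used. The degree-reduction step is where care with the multilinear quotient convention from the footnote is needed.
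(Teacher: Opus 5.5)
Your proposal is correct and follows the paper's proof. The Plotkin split $w=(u,u+v)$, the identity $|w|=|v|+2|u\setminus v|$, the two cases $v=0$ and $|v|=2^{m-r}$ with $\mathrm{supp}(u)\subset\mathrm{supp}(v)$, and the final factorization $P=R\,(\tilde Q+X_m)$ are the paper's steps. The differences are minor: you induct on $m+r$ rather than $m$, you also prove the converse, and you give a justification, via adapted coordinates and a degree count, of the ``$R$ divides $Q$ with an affine cofactor'' step that the paper only asserts.

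One correction in that step: it is $Q$ as a whole, not each of its monomials, that factors as $\prod_{i<r}(Y_i+1)\,\tilde Q(Y_r,\dots,Y_{m-1})$. The identity $\deg Q=(r-1)+\deg\tilde Q$ then holds because $\tilde Q$ does not involve $Y_1,\dots,Y_{r-1}$ and invertible affine substitutions preserve degree.
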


\begin{proof}
The proof is by induction on $m$. The base case ($m=1$) is clear. Indeed, for $m=1$ and $r=0$, the constant polynomial $P=1$ is the product of $0$ linearly independent affine form and for $m=1$ and $r=1$, there are 4 polynomials of degree at most 1 in 1 variables: $0$, $1$, $X$ and $X+1$. The minimum weight nontrivial codewords of $RM_1(1)$ are $01$ (corresponding to $X$) and $10$ (corresponding to $X+1$) and indeed $X$ and $X+1$ are the only (product of 1 linearly independent) affine forms in 1 variable. \\

For $m>1$, if $r=0$ the result is true since the constant polynomial $P=1$ is the product of $0$ linearly independent affine form. \\

We use the induction hypothesis to prove the case $m>1$ and $r>0$. We use the notations of the proof of Lemma~\ref{ineq:minimum_distance} and study the equality case of inequality
\begin{equation}
|w| = |v| + 2 |u \setminus v| \geq 2^{m-r}.
\label{weight_inequality}
\end{equation}
Recall that $w = (u, u+v)$ with $u \in RM_{m-1}(r)$ and $v \in RM_{m-1}(r-1)$. By the induction hypothesis, $|v|=0$ or $|v| \geq 2^{m-r}$. Therefore only these two values for $|v|$ can lead to the equality case of Inequality \ref{weight_inequality}.

\begin{itemize}
\item \underline{case $|v| = 0$.} \\
In this case, $|w| = 2 |u|$ and the equality case becomes $|u| = 2^{m-r-1}$. By the induction hypothesis, $u$ is the indicator vector of the product of $r$ linearly independent affine forms in $m-1$ variables:
$$P_u = \prod_{i=1}^{r} (a_{i,0} + \sum_{j=1}^{m-1} a_{i,j} X_j),$$
where $a_{i,j} \in \mathbb{F}_2$ for $1 \leq i \leq r$ and $0 \leq j \leq m-1$.

Since $w = (u, 0)$, considering these $r$ affine forms to be affine forms in $m$ variables (with every coefficient in front of $X_m$ set to $0$) defines the polynomial $P_w$ whose indicator vector is $w$:
$$P_w = \prod_{i=1}^{r} (a_{i,0} + \sum_{j=1}^{m} a_{i,j} X_j),$$
where $a_{i,j}(P_w) := a_{i,j}(P_u)$ for $j \leq m-1$ and $a_{i,m}(P_w) := 0$.

\item \underline{case $|v| = 2^{m-r}$.} \\
In this case, the equality case of Inequality \ref{weight_inequality} becomes $|u \setminus v|=0$, or equivalently
$$\text{supp}(u) \subset \text{supp}(v),$$
where $\text{supp}(u)$, the support of $u$, is the set of indices $i$ such that $u_i=1$:
$$\text{supp}(u) := \{i \in \mathbb{F}_2^m \, | \,  u_i = 1\}.$$
By the induction hypothesis, $v$ is the indicator function of the product of $r-1$ linearly independent affine forms in $m-1$ variables:
$$P_v = \prod_{i=1}^{r-1} (a_{i,0} + \sum_{j=1}^{m-1} a_{i,j} X_j),$$
where $a_{i,j} \in \mathbb{F}_2$ for $1 \leq i \leq r-1$ and $0 \leq j \leq m-1$.
Since $\text{supp}(u) \subset \text{supp}(v)$, for every $x \in \mathbb{F}_2^{m-1}$, $P_u(x)=1$ implies that $P_v(x)=1$. Therefore, $P_u P_v$ and $P_v$ have the same indicator vector. Equivalently, $P_u P_v = P_u$ in $\mathbb{F}_2[X_1, \dots, X_{m-1}] / \prod_{i=1}^{m-1} (X_i^2 - X_i)$. Therefore $P_v$ divides $P_u$ in this quotient space.
%(argument : Something like, there is a one to one correspondence between vectors and polynomials of degree at most $m$ since any weight 1 vector corresponds to a degree m polynomial. if a polynomial evaluates to 0 when $x_1=0$, then it is divisible by $X_1$. If it evaluates to 0 on the zeroes of an affine form, then it is divisible by this affine form. Since the affine forms are linearly independent, we can take their product...).
And since $P_u$ has degree at most $r$ and $P_v$ is the product of $r-1$ affine forms, there exists an affine form $A_{(r)} := a_{r,0} + \sum_{j=1}^{m-1} a_{r,j} X_j$ such that
$$P_u = P_v A_{(r)}.$$
Therefore, since $P_w = P_u + X_m P_v$,
$$P_w = P_v (A_{(r)} + X_m).$$
Thus, $P_w$ is the product of $r$ affine forms. These $r$ forms are linearly independent since the first $(r-1)^{th}$ are linearly independent by the induction hypothesis and the $r^{th}$ affine form (defined as $A_{(r)} + X_m$) is the only one with an $X_m$ term.
\end{itemize}
\end{proof}

\begin{theorem}[\cite{plotkin1960binary}]
Let $m$ and $r$ be integers such that $0 \leq r \leq m$. There are
$$2^{r} \frac{(2^m-1) \dots (2^{m-r+1}-1)}{(2^r-1)\dots(2^1-1)}$$
minimum weight codewords in $RM_{m}(r)$.
\label{thm:number_of_minimum_weight_codewords}
\end{theorem}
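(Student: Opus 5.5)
We count minimum weight codewords using the characterization of Lemma~\ref{lem:minimum_weight_codewords}. A minimum weight codeword of $RM_m(r)$ is the indicator vector of $\prod_{i=1}^r A_i$, where $A_1,\dots,A_r$ are linearly independent affine forms. Its support is the set of $v\in\mathbb{F}_2^m$ with $A_i(v)=1$ for all $i$. This is an affine subspace of dimension $m-r$, of the form $w+U$ with $U$ a linear subspace of dimension $m-r$. Conversely, every affine subspace of dimension $m-r$ is cut out by $r$ independent affine equations, so its indicator is such a product. Distinct supports give distinct codewords. Hence the number of minimum weight codewords equals the number of affine subspaces of dimension $m-r$ in $\mathbb{F}_2^m$, and the proof reduces to counting these flats.

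The count proceeds in two steps. First, we count linear subspaces $U$ of dimension $m-r$, equivalently of codimension $r$, via their annihilators: the number of $r$-dimensional subspaces of $\mathbb{F}_2^m$. To compute this, we count ordered $r$-tuples of linearly independent vectors, which is $(2^m-1)(2^m-2)\cdots(2^m-2^{r-1})$. We then divide by the number of ordered bases of a fixed $r$-dimensional space, which is $(2^r-1)(2^r-2)\cdots(2^r-2^{r-1})$. Factoring $2^i$ out of the $i$-th term in both products and cancelling gives the Gaussian binomial
$$\frac{(2^m-1)(2^{m-1}-1)\cdots(2^{m-r+1}-1)}{(2^r-1)\cdots(2^1-1)}.$$
Second, each linear subspace $U$ of dimension $m-r$ has exactly $2^m/2^{m-r}=2^r$ cosets. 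Each coset is a distinct affine flat with direction $U$, and every flat has a unique direction. Multiplying the two counts yields the formula in the statement.

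The only subtle point is the bijection between minimum weight codewords and affine flats of dimension $m-r$. Lemma~\ref{lem:minimum_weight_codewords} gives one direction. For the other, we note that a flat of dimension $m-r$ is the zero-free locus of $r$ independent affine forms, and the product of $r$ affine forms has degree at most $r$, so its indicator lies in $RM_m(r)$. Its weight is $2^{m-r}$, which is the minimum distance by Lemma~\ref{lem:minimum_distance}. Injectivity is immediate, because a codeword determines its support. With this in place, the remainder is standard linear algebra over $\mathbb{F}_2$.
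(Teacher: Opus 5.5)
Your proof is correct and follows essentially the same route as the paper: you reduce via Lemma~\ref{lem:minimum_weight_codewords} to counting $(m-r)$-dimensional affine flats, count the linear subspaces as ordered $r$-tuples of independent linear forms divided by the number of ordered bases of their span, and multiply by the $2^r$ translates. You additionally spell out the converse direction (every such flat is a minimum weight codeword) and the injectivity of the correspondence, which the paper leaves implicit.
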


\begin{proof}
From Lemma~\ref{lem:minimum_weight_codewords}, it is sufficient to count dimension $m-r$ affine subspaces of $\mathbb{F}_2^m$. We count such subspaces by multiplying by $2^{r}$ possible translations the number of dimension $m-r$ linear subspaces of $\mathbb{F}_2^m$. The number of dimension $m-r$ linear subspaces of $\mathbb{F}_2^m$ equals the number $F$ of ordered sets of $r$ independent linear forms divided by the number $B$ of different ordered sets of $r$ independent linear forms that define the same dimension $m-r$ linear subspace of $\mathbb{F}_2^m$.  \\

To calculate $F$, there are $2^m-1$ choices for the first linear form (any non zero linear form is possible), $2^m-2$ choices for the second linear form (any linear form that is not in the span of the first one is possible), $\dots$, $2^m-2^{r-1}$ choices for the $r^{th}$ linear form (any linear form that is not in the span of the $r-1$ first ones is possible). Therefore
$$F = (2^m-1) \dots (2^m-2^{r-1}).$$

Counting the number of different ordered sets of $r$ independent linear forms that define the same dimension $m-r$ linear subspace of $\mathbb{F}_2^m$ is like calculating $F$ in a space of dimension $r$ instead of $m$. Therefore
$$B = (2^r-1) \dots (2^r-2^{r-1}).$$

\begin{align*}
A &= 2^{r} \frac{F}{B} \\
A &= 2^{r} \prod_{i=0}^{r-1} \frac{2^m-2^i}{2^r-2^i} \\
A &= 2^{r} \prod_{i=0}^{r-1} \frac{2^{m-i}-1}{2^{r-i}-1}.
\end{align*}

\end{proof}

We apply Theorem~\ref{thm:number_of_minimum_weight_codewords} to the Z stabilizer group of the codes investigated in this article.

\begin{itemize}

\item \underline{$QRM_4(1,1)$} \\
The Z stabilizer group of $QRM_1(1,1)$ is $RM_4(2)$ and so is its Z logical group.
The minimum distance of $RM_4(2)$ is $4$ and there are $2^2 \frac{(2^4-1)(2^3-1)}{(2^2-1)(2^1-1)} = 4 * \frac{15*7}{3} = 140$ weight $4$ codewords in $RM_7(4)$.

\item \underline{Small unfolded code (punctured $QRM_4(1,1)$, 15 $\ket{T}$ to $\ket{T}$ factory)} \\
The Z stabilizer group of the punctured $QRM_4(1,1)$ is a shortened $RM_4(2)$ and its Z logical group is the punctured $RM_4(2)$ (see Appendix~\ref{sec:puncturing}). Therefore, the Z minimum distance is the minimum distance of $RM_4(2)$ minus $1$. Namely $4-1=3$. Minimum weight codewords correspond to minimum weight codewords of $RM_4(2)$ which have the punctured qubit in their support. If we consider that the punctured bit is $0000$ (all bits play the same role in $RM_4(2)$), minimum weight logical operators correspond to dimension 2 linear subspaces of forms over $\mathbb{F}_2$. Therefore, there are $\frac{(2^4-1)(2^3-1)}{(2^2-1)(2^1-1)} = \frac{15*7}{3} = 35$ Z logical operators of weight $3$ in the punctured $QRM_4(1,1)$. Thus, the dominant term of the logical error probability of the factory is $35p^3$.

\item \underline{$QRM_3(0,1)$ ($8$ $\ket{T}$ to $\ket{CCZ}$ factory \cite{jones2013low})} \\
The Z stabilizer group of $QRM_3(0,1)$ is $RM_3(1)$ and its Z logical group is $RM_3(2)$.
The Z minimum distance is $2$ and there are $2^2\frac{(2^3-1)(2^2-1)}{(2^2-1)(2^1-1)} = 4 \frac{7*3}{3} = 28$ Z logical operators of weight $2$. Thus, the dominant term of the logical error probability of the factory is $28p^2$. \\

\item \underline{Rubik's code ($QRM_6(1,2)$, 64 $\ket{T}$ to 15 $\ket{CCZ}$ factory)} \\
The Z stabilizer group of $QRM_6(1,2)$ is $RM_6(3)$ and its Z logical group is $RM_6(4)$. The Z minimum distance of $QRM_6(1,2)$ is $2^{6-4}=4$ and there are $2^{4} \frac{(2^6-1)(2^5-1)}{(2^2-1)(2^1-1)} = 16*\frac{63*31}{3} = 10416$ Z logical operators of weight $4$. Thus, the dominant term of the logical error probability of the rubik's cube factory is $10416p^4$. \\

\item \underline{Big unfolded code ($QRM_6(1,2)$ with 12 extra $Z$ stabilizers, 64 $\ket{T}$ to $\ket{CCZ}$ factory)} \\
The minimum distance of the big unfolded code is 4. Indeed, among the 10416 weight 4 $Z$ logical operators of $QRM_6(1,2)$, we checked numerically\footnote{\url{https://github.com/vivienlonde/unfolded_Quantum_Reed_Muller}} that exactly 2160 are orthogonal to all $X$ logical operators of the big unfolded code and are thus trivial (i.e. are $Z$ stabilizers) in the big unfolded code. The 8256 others are not orthogonal to at least one $X$ logical operator and are thus non trivial $Z$ logical operators. So there are 8256 logical operators of weight 4 in the big unfolded code. Thus, the dominant term of the logical error probability of the factory is $8256p^4$.

\item \underline{$QRM_7(2,2)$} \\
The Z stabilizer group of $QRM_7(2,2)$ is $RM_7(4)$ and so is its Z logical group.
The minimum distance of $RM_7(4)$ is $8$ and there are $2^4 \frac{(2^7-1)(2^6-1)(2^5-1)}{(2^3-1)(2^2-1)(2^1-1)} = 16 * \frac{127*63*31}{7*3} = 188976$ weight $8$ codewords in $RM_7(4)$. 

\item \underline{Punctured $QRM_7(2,2)$ ($127$ $\ket{T}$ to $\ket{T}$ factory)} \\
The $Z$ stabilizer group of the punctured $QRM_7(2,2)$ is a shortened $RM_7(4)$ and its Z logical group is the punctured $RM_7(4)$. The $Z$ minimum distance is the minimum distance of $RM_7(4)$ minus 1. Namely $8-1=7$. Minimum weight codewords correspond to minimum weight codewords of $RM_7(4)$ which have the punctured qubit in their support. If we consider that the punctured bit is $0000000$ (all bits play the same role in $RM_7(4)$), minimum weight logical operators correspond to dimension 3 linear subspaces of forms over $\mathbb{F}_2$. Therefore, there are $\frac{(2^7-1)(2^6-1)(2^5-1)}{(2^3-1)(2^2-1)(2^1-1)} = \frac{127*63*31}{7*3} = 11811$ $Z$ logical operators of weight 7 in the punctured $QRM_7(2,2)$. Thus, the dominant term of the logical error probability of the factory is $11811p^7$.
\end{itemize}

\section{Conclusion and outlook}

We have described 2D and 3D local layouts for the basis of the $Z$ stabilizer group of some Reed-Muller distillation factories. We have described the local layout of two codes with $Z$-distance 4 and one code with $Z$-distance 7 and in each case give the prefactor $f$ such that the factory's dominant error probability is $p \rightarrow f p^{d_Z}$.
For the small unfolded code, the authors of \cite{ruiz2025unfolded} describe in detail the distillation protocol and verify through numerical simulation that the error floor $f p^{d_Z}$ is reached by their protocol. For the big unfolded code, the rubik's cube code and the punctured $QRM_7(2,2)$ code described in this article, we are confident that a protocol very similar to the one of \cite{ruiz2025unfolded} yields the error floor $f p^{d_Z}$ in each case.  We leave it to future work to verify this through numerical simulations.

\section{Acknowledgements}

I thank Diego Ruiz, Christophe Vuillot, Mazyar Mirrahimi and Hugo Jacinto for fruitfull discussions and feedbacks on this work. I thank Elie Gouzien, Linde Wester Hansen, Jérémie Guillot and everyone at Alice\&Bob for creating a nice research environment.

\appendix

\section{Puncturing a quantum code with 0 logical qubit, the example of punctured $QRM_4(1,1)$ (a.k.a. the small unfolded code)}
\label{sec:puncturing}

$QRM_m(r,r)$ encodes $0$ logical qubit in $2^m$ physical qubits \cite{barg2025geometric}.

The small unfolded code from \cite{ruiz2025unfolded} is obtained from (quantum-)puncturing $QRM_4(1,1)$.  $QRM_4(1,1)$ has parameters $[[n, k]] = [[16, 0]]$ and Z stabilizers of weight at least $w=4$. The small unfolded code has parameters $[[n', k', d']] = [[15, 1, 3]]$. Note that $n'=n-1$, $k'=k+1$ and $d'=w+1$.

When one physical qubit is deleted by (quantum-)puncturing, stabilizer spaces are shortened (as classical codes): one physical qubit is deleted and all stabilizers that have this physical qubit in their support are deleted. $C$ shortened is denoted $C^{\circ}$. This amounts to puncturing the logical operator spaces (as classical codes). Indeed the dual of a shortened classical code is the puncturation of the dual of the original classical code. Classicaly puncturing a code consists in ignoring one physical bit (and keeping the check that acted on a set containing that physical bit: they now act on a set with one less bit). $C$ punctured is denoted $C^*$. \\

$C^{\circ} \subset C^*$ since every element of $C^{\circ}$ is an element of  $C^*$. Note that there is a one to one correspondence between elements of $C^*$ and elements of $C$ (as long as there is no weight $1$ element in $C$ whose support is the punctured qubit). We therefore abuse notations and refer to properties of $C$ and of $C^*$ with the same objects.

When the inclusion $C^{\circ} \subset C^*$ is strict for the Z stabilizer group and for the X stabilizer group, puncturing creates a logical qubit. The inclusion is strict if and only if $r^{\circ} = r - 1$. Let $r_X$ and $r_Z$ respectively denote the ranks of parity-check matrices $H_X$ and $H_Z$. Let's assume that $r_X^{\circ} = r_X - 1$ and $r_Z^{\circ} = r_Z - 1$, where $r_X^{\circ}$ and $r_Z^{\circ}$respectively  are the ranks of the punctered parity-check matrices $H_X$ and $H_Z$. \\

In the rest of this section, we illustrate puncturing with the code $QRM_4(1,1)$. We consider two sets of generators for stabilizer groups of $QRM_4(1,1)$. The first set of generators are standard subcubes and the second set of generators are subcubes of a fixed dimension.

\subsection{Presentation of $QRM_4(1,1)$ with standard subcubes}

Before puncturing, the parity-check matrices are:

$$ H_X^{\text{standard}} = L_X^{\text{standard}} =
\begin{pmatrix}
1 &    1 & 1 & 1 & 1 &    1 & 1 & 1 & 1 & 1 & 1 &    1 & 1 & 1 & 1 &    1 \\

0 &    1 & 0 & 0 & 0 &    1 & 1 & 1 & 0 & 0 & 0 &    1 & 1 & 1 & 0 &    1 \\
0 &    0 & 1 & 0 & 0 &    1 & 0 & 0 & 1 & 1 & 0 &    1 & 1 & 0 & 1 &    1 \\
0 &    0 & 0 & 1 & 0 &    0 & 1 & 0 & 1 & 0 & 1 &    1 & 0 & 1 & 1 &    1 \\
0 &    0 & 0 & 0 & 1 &    0 & 0 & 1 & 0 & 1 & 1 &    0 & 1 & 1 & 1 &    1 \\
\end{pmatrix}
$$

The above matrix has $5$ rows and rank $5$.

$$ H_Z^{\text{standard}} = L_Z^{\text{standard}} =
\begin{pmatrix}
1 &    1 & 1 & 1 & 1 &    1 & 1 & 1 & 1 & 1 & 1 &    1 & 1 & 1 & 1 &    1 \\

0 &    1 & 0 & 0 & 0 &    1 & 1 & 1 & 0 & 0 & 0 &    1 & 1 & 1 & 0 &    1 \\
0 &    0 & 1 & 0 & 0 &    1 & 0 & 0 & 1 & 1 & 0 &    1 & 1 & 0 & 1 &    1 \\
0 &    0 & 0 & 1 & 0 &    0 & 1 & 0 & 1 & 0 & 1 &    1 & 0 & 1 & 1 &    1 \\
0 &    0 & 0 & 0 & 1 &    0 & 0 & 1 & 0 & 1 & 1 &    0 & 1 & 1 & 1 &    1 \\

0 &    0 & 0 & 0 & 0 &    1 & 0 & 0 & 0 & 0 & 0 &    1 & 1 & 0 & 0 &    1 \\
0 &    0 & 0 & 0 & 0 &    0 & 1 & 0 & 0 & 0 & 0 &    1 & 0 & 1 & 0 &    1 \\
0 &    0 & 0 & 0 & 0 &    0 & 0 & 1 & 0 & 0 & 0 &    0 & 1 & 1 & 0 &    1 \\
0 &    0 & 0 & 0 & 0 &    0 & 0 & 0 & 1 & 0 & 0 &    1 & 0 & 0 & 1 &    1 \\
0 &    0 & 0 & 0 & 0 &    0 & 0 & 0 & 0 & 1 & 0 &    0 & 1 & 0 & 1 &    1 \\
0 &    0 & 0 & 0 & 0 &    0 & 0 & 0 & 0 & 0 & 1 &    0 & 0 & 1 & 1 &    1 \\
\end{pmatrix}
$$

The above matrix has $11$ rows and rank $11$.

Note that $L_Z^{\text{standard}} = H_Z^{\text{standard}}$ and $L_X^{\text{standard}} = H_X^{\text{standard}}$, which confirms that no logical qubit is encoded. \\

Shortening consists in deleting the first column and all the rows that have a $1$ on that column, i.e. the first row in both cases:

$$ H_X'^{\text{ standard}} =
\begin{pmatrix}
1 & 0 & 0 & 0 &    1 & 1 & 1 & 0 & 0 & 0 &    1 & 1 & 1 & 0 &    1 \\
0 & 1 & 0 & 0 &    1 & 0 & 0 & 1 & 1 & 0 &    1 & 1 & 0 & 1 &    1 \\
0 & 0 & 1 & 0 &    0 & 1 & 0 & 1 & 0 & 1 &    1 & 0 & 1 & 1 &    1 \\
0 & 0 & 0 & 1 &    0 & 0 & 1 & 0 & 1 & 1 &    0 & 1 & 1 & 1 &    1 \\
\end{pmatrix}
$$

The above matrix has $4$ rows and rank $4$.

$$H_Z'^{\text{ standard}} =
\begin{pmatrix}
1 & 0 & 0 & 0 &    1 & 1 & 1 & 0 & 0 & 0 &    1 & 1 & 1 & 0 &    1 \\
0 & 1 & 0 & 0 &    1 & 0 & 0 & 1 & 1 & 0 &    1 & 1 & 0 & 1 &    1 \\
0 & 0 & 1 & 0 &    0 & 1 & 0 & 1 & 0 & 1 &    1 & 0 & 1 & 1 &    1 \\
0 & 0 & 0 & 1 &    0 & 0 & 1 & 0 & 1 & 1 &    0 & 1 & 1 & 1 &    1 \\

0 & 0 & 0 & 0 &    1 & 0 & 0 & 0 & 0 & 0 &    1 & 1 & 0 & 0 &    1 \\
0 & 0 & 0 & 0 &    0 & 1 & 0 & 0 & 0 & 0 &    1 & 0 & 1 & 0 &    1 \\
0 & 0 & 0 & 0 &    0 & 0 & 1 & 0 & 0 & 0 &    0 & 1 & 1 & 0 &    1 \\
0 & 0 & 0 & 0 &    0 & 0 & 0 & 1 & 0 & 0 &    1 & 0 & 0 & 1 &    1 \\
0 & 0 & 0 & 0 &    0 & 0 & 0 & 0 & 1 & 0 &    0 & 1 & 0 & 1 &    1 \\
0 & 0 & 0 & 0 &    0 & 0 & 0 & 0 & 0 & 1 &    0 & 0 & 1 & 1 &    1 \\
\end{pmatrix}
$$

The above matrix has $10$ rows and rank $10$. \\

Puncturing consists in deleting the first column and keeping every row:
$$ L_X'^{\text{ standard}} =
\begin{pmatrix}
1 & 1 & 1 & 1 &    1 & 1 & 1 & 1 & 1 & 1 &    1 & 1 & 1 & 1 &    1 \\

1 & 0 & 0 & 0 &    1 & 1 & 1 & 0 & 0 & 0 &    1 & 1 & 1 & 0 &    1 \\
0 & 1 & 0 & 0 &    1 & 0 & 0 & 1 & 1 & 0 &    1 & 1 & 0 & 1 &    1 \\
0 & 0 & 1 & 0 &    0 & 1 & 0 & 1 & 0 & 1 &    1 & 0 & 1 & 1 &    1 \\
0 & 0 & 0 & 1 &    0 & 0 & 1 & 0 & 1 & 1 &    0 & 1 & 1 & 1 &    1 \\
\end{pmatrix}
$$

The above matrix has $5$ rows and rank $5$.

$$L_Z'^{\text{ standard}} =
\begin{pmatrix}
1 & 1 & 1 & 1 &    1 & 1 & 1 & 1 & 1 & 1 &    1 & 1 & 1 & 1 &    1 \\

1 & 0 & 0 & 0 &    1 & 1 & 1 & 0 & 0 & 0 &    1 & 1 & 1 & 0 &    1 \\
0 & 1 & 0 & 0 &    1 & 0 & 0 & 1 & 1 & 0 &    1 & 1 & 0 & 1 &    1 \\
0 & 0 & 1 & 0 &    0 & 1 & 0 & 1 & 0 & 1 &    1 & 0 & 1 & 1 &    1 \\
0 & 0 & 0 & 1 &    0 & 0 & 1 & 0 & 1 & 1 &    0 & 1 & 1 & 1 &    1 \\

0 & 0 & 0 & 0 &    1 & 0 & 0 & 0 & 0 & 0 &    1 & 1 & 0 & 0 &    1 \\
0 & 0 & 0 & 0 &    0 & 1 & 0 & 0 & 0 & 0 &    1 & 0 & 1 & 0 &    1 \\
0 & 0 & 0 & 0 &    0 & 0 & 1 & 0 & 0 & 0 &    0 & 1 & 1 & 0 &    1 \\
0 & 0 & 0 & 0 &    0 & 0 & 0 & 1 & 0 & 0 &    1 & 0 & 0 & 1 &    1 \\
0 & 0 & 0 & 0 &    0 & 0 & 0 & 0 & 1 & 0 &    0 & 1 & 0 & 1 &    1 \\
0 & 0 & 0 & 0 &    0 & 0 & 0 & 0 & 0 & 1 &    0 & 0 & 1 & 1 &    1 \\
\end{pmatrix}
$$

The above matrix has $11$ rows and rank $11$.

After puncturing, there is one logical qubit. Indeed:
$$n' = n-1$$
$$r_X' = \rank(H_X'^{\text{ standard}}) = r_X-1$$
$$r_Z' = \rank(H_Z'^{\text{ standard}}) = r_Z-1.$$

Therefore
\begin{align*}
k' &= n' - r_X' - r_Z' \\
k' &= k+1 \\
k' &= 1.
\end{align*}

$d_X'$ is the minimum weight of the first row up to the other ones minus 1 (because of the deleted qubit). In this case, $d_X' = 8-1 = 7$. Similarly, $d_Z'$ is the minimum weight of the first row up to the other ones minus 1 (because of the deleted qubit). In this case, $d_Z' = 4-1 = 3$.

\subsection{Presentation with all subcubes of a fixed dimension}

Reed-Muller codes admit a generating family that consists of all subcubes of a fixed dimension. In this case, all 3-cubes for $H_X$ and all 2-cubes for $H_Z$:

$$H_X^{\text{fixed}} = L_X^{\text{fixed}} =
\begin{pmatrix}
0 &    1 & 0 & 0 & 0 &    1 & 1 & 1 & 0 & 0 & 0 &    1 & 1 & 1 & 0 &    1 \\
0 &    0 & 1 & 0 & 0 &    1 & 0 & 0 & 1 & 1 & 0 &    1 & 1 & 0 & 1 &    1 \\
0 &    0 & 0 & 1 & 0 &    0 & 1 & 0 & 1 & 0 & 1 &    1 & 0 & 1 & 1 &    1 \\
0 &    0 & 0 & 0 & 1 &    0 & 0 & 1 & 0 & 1 & 1 &    0 & 1 & 1 & 1 &    1 \\

1 &    0 & 1 & 1 & 1 &    0 & 0 & 0 & 1 & 1 & 1 &    0 & 0 & 0 & 1 &    0 \\
1 &    1 & 0 & 1 & 1 &    0 & 1 & 1 & 0 & 0 & 1 &    0 & 0 & 1 & 0 &    0 \\
1 &    1 & 1 & 0 & 1 &    1 & 0 & 1 & 0 & 1 & 0 &    0 & 1 & 0 & 0 &    0 \\
1 &    1 & 1 & 1 & 0 &    1 & 1 & 0 & 1 & 0 & 0 &    1 & 0 & 0 & 0 &    0 \\
\end{pmatrix}
$$

The above matrix has $8$ rows and rank $5$. It can be checked through an explicit calculation that $H_X^{\text{standard}}$ and $H_X^{\text{fixed}}$ generate the same column space.

$$H_Z^{\text{fixed}} = L_Z^{\text{fixed}} =
\begin{pmatrix}
0 &    0 & 0 & 0 & 0 &    1 & 0 & 0 & 0 & 0 & 0 &    1 & 1 & 0 & 0 &    1 \\
0 &    0 & 1 & 0 & 0 &    0 & 0 & 0 & 1 & 1 & 0 &    0 & 0 & 0 & 1 &    0 \\
0 &    1 & 0 & 0 & 0 &    0 & 1 & 1 & 0 & 0 & 0 &    0 & 0 & 1 & 0 &    0 \\
1 &    0 & 0 & 1 & 1 &    0 & 0 & 0 & 0 & 0 & 1 &    0 & 0 & 0 & 0 &    0 \\

0 &    0 & 0 & 0 & 0 &    0 & 1 & 0 & 0 & 0 & 0 &    1 & 0 & 1 & 0 &    1 \\
0 &    0 & 0 & 1 & 0 &    0 & 0 & 0 & 1 & 0 & 1 &    0 & 0 & 0 & 1 &    0 \\
0 &    1 & 0 & 0 & 0 &    1 & 0 & 1 & 0 & 0 & 0 &    0 & 1 & 0 & 0 &    0 \\
1 &    0 & 1 & 0 & 1 &    0 & 0 & 0 & 0 & 1 & 0 &    0 & 0 & 0 & 0 &    0 \\

0 &    0 & 0 & 0 & 0 &    0 & 0 & 1 & 0 & 0 & 0 &    0 & 1 & 1 & 0 &    1 \\
0 &    0 & 0 & 0 & 1 &    0 & 0 & 0 & 0 & 1 & 1 &    0 & 0 & 0 & 1 &    0 \\
0 &    1 & 0 & 0 & 0 &    1 & 1 & 0 & 0 & 0 & 0 &    1 & 0 & 0 & 0 &    0 \\
1 &    0 & 1 & 1 & 0 &    0 & 0 & 0 & 1 & 0 & 0 &    0 & 0 & 0 & 0 &    0 \\

0 &    0 & 0 & 0 & 0 &    0 & 0 & 0 & 1 & 0 & 0 &    1 & 0 & 0 & 1 &    1 \\
0 &    0 & 0 & 1 & 0 &    0 & 1 & 0 & 0 & 0 & 1 &    0 & 1 & 0 & 0 &    0 \\
0 &    0 & 1 & 0 & 0 &    1 & 0 & 0 & 0 & 1 & 0 &    0 & 0 & 1 & 0 &    0 \\
1 &    1 & 0 & 0 & 1 &    0 & 0 & 1 & 0 & 0 & 0 &    0 & 0 & 0 & 0 &    0 \\

0 &    0 & 0 & 0 & 0 &    0 & 0 & 0 & 0 & 1 & 0 &    0 & 1 & 0 & 1 &    1 \\
0 &    0 & 0 & 0 & 1 &    0 & 0 & 1 & 0 & 0 & 1 &    0 & 0 & 1 & 0 &    0 \\
0 &    0 & 1 & 0 & 0 &    1 & 0 & 0 & 1 & 0 & 0 &    1 & 0 & 0 & 0 &    0 \\
1 &    1 & 0 & 1 & 0 &    0 & 1 & 0 & 0 & 0 & 0 &    0 & 0 & 0 & 0 &    0 \\

0 &    0 & 0 & 0 & 0 &    0 & 0 & 0 & 0 & 0 & 1 &    0 & 0 & 1 & 1 &    1 \\
0 &    0 & 0 & 0 & 1 &    0 & 0 & 1 & 0 & 1 & 0 &    0 & 1 & 0 & 0 &    0 \\
0 &    0 & 0 & 1 & 0 &    0 & 1 & 0 & 1 & 0 & 0 &    1 & 0 & 0 & 0 &    0 \\
1 &    1 & 1 & 0 & 0 &    1 & 0 & 0 & 0 & 0 & 0 &    0 & 0 & 0 & 0 &    0 \\
\end{pmatrix}
$$

The above matrix has $24$ rows and rank $11$. It can be checked through an explicit calculation that $H_Z^{\text{standard}}$ and $H_Z^{\text{fixed}}$ generate the same column space. \\

Note that $L_X^{\text{fixed}} = H_X^{\text{fixed}}$ and $L_Z^{\text{fixed}} = H_Z^{\text{fixed}}$ since no logical qubit is encoded. 

Shortening consists in deleting the first column and all the rows that have a $1$ on that column:

$$ H_X'^{\text{ fixed}} =
\begin{pmatrix}
1 & 0 & 0 & 0 &    1 & 1 & 1 & 0 & 0 & 0 &    1 & 1 & 1 & 0 &    1 \\
0 & 1 & 0 & 0 &    1 & 0 & 0 & 1 & 1 & 0 &    1 & 1 & 0 & 1 &    1 \\
0 & 0 & 1 & 0 &    0 & 1 & 0 & 1 & 0 & 1 &    1 & 0 & 1 & 1 &    1 \\
0 & 0 & 0 & 1 &    0 & 0 & 1 & 0 & 1 & 1 &    0 & 1 & 1 & 1 &    1 \\
\end{pmatrix}
$$

The above matrix has $4$ rows and rank $4$. Note that $H_X'^{\text{ standard}} = H_X'^{\text{ fixed}}$, which trivially implies that $H_X'^{\text{ standard}}$ and $H_X'^{\text{ fixed}}$ generate the same column space. \\

$$H_Z'^{\text{ fixed}} =
\begin{pmatrix}
0 & 0 & 0 & 0 &    1 & 0 & 0 & 0 & 0 & 0 &    1 & 1 & 0 & 0 &    1 \\
0 & 1 & 0 & 0 &    0 & 0 & 0 & 1 & 1 & 0 &    0 & 0 & 0 & 1 &    0 \\
1 & 0 & 0 & 0 &    0 & 1 & 1 & 0 & 0 & 0 &    0 & 0 & 1 & 0 &    0 \\

0 & 0 & 0 & 0 &    0 & 1 & 0 & 0 & 0 & 0 &    1 & 0 & 1 & 0 &    1 \\
0 & 0 & 1 & 0 &    0 & 0 & 0 & 1 & 0 & 1 &    0 & 0 & 0 & 1 &    0 \\
1 & 0 & 0 & 0 &    1 & 0 & 1 & 0 & 0 & 0 &    0 & 1 & 0 & 0 &    0 \\

0 & 0 & 0 & 0 &    0 & 0 & 1 & 0 & 0 & 0 &    0 & 1 & 1 & 0 &    1 \\
0 & 0 & 0 & 1 &    0 & 0 & 0 & 0 & 1 & 1 &    0 & 0 & 0 & 1 &    0 \\
1 & 0 & 0 & 0 &    1 & 1 & 0 & 0 & 0 & 0 &    1 & 0 & 0 & 0 &    0 \\

0 & 0 & 0 & 0 &    0 & 0 & 0 & 1 & 0 & 0 &    1 & 0 & 0 & 1 &    1 \\
0 & 0 & 1 & 0 &    0 & 1 & 0 & 0 & 0 & 1 &    0 & 1 & 0 & 0 &    0 \\
0 & 1 & 0 & 0 &    1 & 0 & 0 & 0 & 1 & 0 &    0 & 0 & 1 & 0 &    0 \\

0 & 0 & 0 & 0 &    0 & 0 & 0 & 0 & 1 & 0 &    0 & 1 & 0 & 1 &    1 \\
0 & 0 & 0 & 1 &    0 & 0 & 1 & 0 & 0 & 1 &    0 & 0 & 1 & 0 &    0 \\
0 & 1 & 0 & 0 &    1 & 0 & 0 & 1 & 0 & 0 &    1 & 0 & 0 & 0 &    0 \\

0 & 0 & 0 & 0 &    0 & 0 & 0 & 0 & 0 & 1 &    0 & 0 & 1 & 1 &    1 \\
0 & 0 & 0 & 1 &    0 & 0 & 1 & 0 & 1 & 0 &    0 & 1 & 0 & 0 &    0 \\
0 & 0 & 1 & 0 &    0 & 1 & 0 & 1 & 0 & 0 &    1 & 0 & 0 & 0 &    0 \\
\end{pmatrix}
$$

The above matrix has $18$ rows and rank $10$. It can be checked through an explicit calculation that $H_Z'^{\text{ standard}}$ and $H_Z'^{\text{ fixed}}$ generate the same column space. \\

Puncturing consists in deleting the first column and keeping every row:
$$ L_X' =
\begin{pmatrix}
1 & 0 & 0 & 0 &    1 & 1 & 1 & 0 & 0 & 0 &    1 & 1 & 1 & 0 &    1 \\
0 & 1 & 0 & 0 &    1 & 0 & 0 & 1 & 1 & 0 &    1 & 1 & 0 & 1 &    1 \\
0 & 0 & 1 & 0 &    0 & 1 & 0 & 1 & 0 & 1 &    1 & 0 & 1 & 1 &    1 \\
0 & 0 & 0 & 1 &    0 & 0 & 1 & 0 & 1 & 1 &    0 & 1 & 1 & 1 &    1 \\

0 & 1 & 1 & 1 &    0 & 0 & 0 & 1 & 1 & 1 &    0 & 0 & 0 & 1 &    0 \\
1 & 0 & 1 & 1 &    0 & 1 & 1 & 0 & 0 & 1 &    0 & 0 & 1 & 0 &    0 \\
1 & 1 & 0 & 1 &    1 & 0 & 1 & 0 & 1 & 0 &    0 & 1 & 0 & 0 &    0 \\
1 & 1 & 1 & 0 &    1 & 1 & 0 & 1 & 0 & 0 &    1 & 0 & 0 & 0 &    0 \\
\end{pmatrix}
$$

The above matrix has $8$ rows and rank $5$.  It can be checked through an explicit calculation that $L_X'^{\text{ standard}}$ and $L_X'^{\text{ fixed}}$ generate the same column space. \\

$$L_Z'^{\text{ fixed}} =
\begin{pmatrix}
0 & 0 & 0 & 0 &    1 & 0 & 0 & 0 & 0 & 0 &    1 & 1 & 0 & 0 &    1 \\
0 & 1 & 0 & 0 &    0 & 0 & 0 & 1 & 1 & 0 &    0 & 0 & 0 & 1 &    0 \\
1 & 0 & 0 & 0 &    0 & 1 & 1 & 0 & 0 & 0 &    0 & 0 & 1 & 0 &    0 \\
0 & 0 & 1 & 1 &    0 & 0 & 0 & 0 & 0 & 1 &    0 & 0 & 0 & 0 &    0 \\

0 & 0 & 0 & 0 &    0 & 1 & 0 & 0 & 0 & 0 &    1 & 0 & 1 & 0 &    1 \\
0 & 0 & 1 & 0 &    0 & 0 & 0 & 1 & 0 & 1 &    0 & 0 & 0 & 1 &    0 \\
1 & 0 & 0 & 0 &    1 & 0 & 1 & 0 & 0 & 0 &    0 & 1 & 0 & 0 &    0 \\
0 & 1 & 0 & 1 &    0 & 0 & 0 & 0 & 1 & 0 &    0 & 0 & 0 & 0 &    0 \\

0 & 0 & 0 & 0 &    0 & 0 & 1 & 0 & 0 & 0 &    0 & 1 & 1 & 0 &    1 \\
0 & 0 & 0 & 1 &    0 & 0 & 0 & 0 & 1 & 1 &    0 & 0 & 0 & 1 &    0 \\
1 & 0 & 0 & 0 &    1 & 1 & 0 & 0 & 0 & 0 &    1 & 0 & 0 & 0 &    0 \\
0 & 1 & 1 & 0 &    0 & 0 & 0 & 1 & 0 & 0 &    0 & 0 & 0 & 0 &    0 \\

0 & 0 & 0 & 0 &    0 & 0 & 0 & 1 & 0 & 0 &    1 & 0 & 0 & 1 &    1 \\
0 & 0 & 1 & 0 &    0 & 1 & 0 & 0 & 0 & 1 &    0 & 1 & 0 & 0 &    0 \\
0 & 1 & 0 & 0 &    1 & 0 & 0 & 0 & 1 & 0 &    0 & 0 & 1 & 0 &    0 \\
1 & 0 & 0 & 1 &    0 & 0 & 1 & 0 & 0 & 0 &    0 & 0 & 0 & 0 &    0 \\

0 & 0 & 0 & 0 &    0 & 0 & 0 & 0 & 1 & 0 &    0 & 1 & 0 & 1 &    1 \\
0 & 0 & 0 & 1 &    0 & 0 & 1 & 0 & 0 & 1 &    0 & 0 & 1 & 0 &    0 \\
0 & 1 & 0 & 0 &    1 & 0 & 0 & 1 & 0 & 0 &    1 & 0 & 0 & 0 &    0 \\
1 & 0 & 1 & 0 &    0 & 1 & 0 & 0 & 0 & 0 &    0 & 0 & 0 & 0 &    0 \\

0 & 0 & 0 & 0 &    0 & 0 & 0 & 0 & 0 & 1 &    0 & 0 & 1 & 1 &    1 \\
0 & 0 & 0 & 1 &    0 & 0 & 1 & 0 & 1 & 0 &    0 & 1 & 0 & 0 &    0 \\
0 & 0 & 1 & 0 &    0 & 1 & 0 & 1 & 0 & 0 &    1 & 0 & 0 & 0 &    0 \\
1 & 1 & 0 & 0 &    1 & 0 & 0 & 0 & 0 & 0 &    0 & 0 & 0 & 0 &    0 \\
\end{pmatrix}
$$

The above matrix has $24$ rows and rank $11$. It can be checked through an explicit calculation that $L_Z'^{\text{ standard}}$ and $L_Z'^{\text{ fixed}}$ generate the same column space.

\subsection{Summary with dimensions}

We summarize the results of this appendix with two tables:

\begin{tabular}{ |p{3cm}||p{3cm}|p{3cm}|p{3cm}|  }
 \hline
 \multicolumn{3}{|c|}{Before puncturing} \\
 \hline
                                  & number of rows & rank \\
 \hline
 $H_X^{\text{ standard}}$  & 5                       & 5      \\
 $L_X^{\text{ standard}}$   & 5                       & 5      \\
 $H_Z^{\text{ standard}}$  & 11                     & 11     \\
 $L_Z^{\text{ standard}}$   & 11                     & 11     \\
 $H_X^{\text{ fixed}}$        & 8                       & 5      \\
 $L_X^{\text{ fixed}}$         & 8                       & 5      \\
 $H_Z^{\text{ fixed}}$        & 24                     & 11     \\
 $L_Z^{\text{ fixed}}$         & 24                     & 11     \\
 \hline
\end{tabular}

\begin{tabular}{ |p{3cm}||p{3cm}|p{3cm}|p{3cm}|  }
 \hline
 \multicolumn{3}{|c|}{After puncturing} \\
 \hline
                                  & number of rows & rank \\
 \hline
 $H_X'^{\text{ standard}}$  & 4                       & 4      \\
 $L_X'^{\text{ standard}}$   & 5                       & 5      \\
 $H_Z'^{\text{ standard}}$  & 10                     & 10     \\
 $L_Z'^{\text{ standard}}$  & 11                     & 11     \\
 $H_X'^{\text{ fixed}}$       & 4                       & 4      \\
 $L_X'^{\text{ fixed}}$        & 8                       & 5      \\
 $H_Z'^{\text{ fixed}}$       & 18                     & 10     \\
 $L_Z'^{\text{ fixed}}$        & 24                     & 11     \\
 \hline
\end{tabular}

%\section{results from \cite{barg2025geometric}}
%
%\underline{Theorem 7.10 from \cite{barg2025geometric}:} For every $K \in \mathcal{Q}_K$, the signed operator $\widetilde{Z}(k)_{\langle K \rangle}$ implements the logical multi-controlled-Z circuit corresponding to the collection of minimal covers of $K$:
%$$\widetilde{Z}(k)_{\langle K \rangle} \equiv \overline{C^{\mathcal{F}(K)}Z}.$$

% \printbibliography
\bibliographystyle{plain}
\bibliography{local_distillation}

\end{document}